\def\newsymbol#1#2#3#4#5{\let\next@\relax%
 \ifnum#2=\@ne\else%
 \ifnum#2=\tw@\let\next@\msyfam@\fi\fi%
 \mathchardef#1="#3\next@#4#5}
\def\mathhexbox@#1#2#3{\relax%
 \ifmmode\mathpalette{} {\m@th\mnnathchar"#1#2#3}
 \else\leavevmode\hbox{$\m@th\mathchar"#1#2#3$}\fi}
\font\tenmsy=msbm10
\font\sevenmsy=msbm7
\font\fivemsy=msbm5
\edef\msyfam@{\hexnumber@\msyfam}
\def\Bbb#1{\fam\msyfam\relax#1}
\newtheorem{theorem}{Theorem}[section]
\newtheorem{proposition}[theorem]{Proposition}
\newtheorem{lemma}[theorem]{Lemma}
\newtheorem{corollary}[theorem]{Corollary}
\newtheorem{definition}[theorem]{Definition}
\newtheorem{remark}[theorem]{Remark}
\newtheorem{assumption}[theorem]{Assumption}
\newcommand{\w}{\sqrt\omega}
\newcommand{\ww}{\frac{1}{\sqrt\omega}}
\newcommand{\A}{\mathscr{A}}
\newcommand{\B}{\mathscr{B}}
\newcommand{\W}{\mathcal {W}}
\newcommand{\E}[1]{[#1]_{\ex}}
\newcommand{\J}{J_P^\ex}
\newcommand{\vA}{\vec {\mathscr{A}}}
\newcommand{\pp}{{\rm TL}}
\newcommand{\oi}{{{\rm ex}}} 
\newcommand{\skima}{\!\!\frac{}{}}
\newcommand{\eq}[1]{\begin{equation}\label{#1}}
\newcommand{\en}{\end{equation}}
\newcommand{\eqn}{\begin{eqnarray*}}
\newcommand{\enn}{\end{eqnarray*}}
\newcommand{\eqnn}{\begin{eqnarray}}
\newcommand{\ennn}{\end{eqnarray}}
\newcommand{\proof}{{\noindent \it Proof:\ }}
\newcommand{\qed}{\hfill {\bf qed}\par\medskip}
\newcommand{\BR}{{{\Bbb R}^3 }}
\newcommand{\bi}{\begin{description}}
\newcommand{\ei}{\end{description} }
\newcommand{\CC}{{{\Bbb C}}}
\newcommand{\RR}{{\Bbb R}}
\newcommand{\bl}[1]{\begin{lemma}\label{#1}}
\newcommand{\el}{\end{lemma}}
\newcommand{\bc}[1]{\begin{corollary}\label{#1}}
\newcommand{\ec}{\end{corollary}}
\newcommand{\bt}[1]{\begin{theorem}\label{#1}}
\newcommand{\et}{\end{theorem}}
\newcommand{\bp}[1]{\begin{proposition}\label{#1}}
\newcommand{\ep}{\end{proposition}}
\newcommand{\br}[1]{\begin{remark}\label{#1}}
\newcommand{\er}{\end{remark}}
\newcommand{\im}[2]{(#1|#2)}
\newcommand{\mass}{m_{\rm eff}}
\newcommand{\V}{\mathscr{V}}
\newcommand{\kak}[1]{(\ref{#1})}
\newcommand{\LR}{{L^2(\BR)}}
\newcommand{\fff}{{\ffff_{\rm b}}}
\newcommand{\ffff}{\mathscr{F}}
\newcommand{\fffff}{\ffff_{\rm fin}}
\newcommand{\is}{\inf{\rm Sp}}
\newcommand{\f}{^{-1}}
\newcommand{\lk}{\left(}
\newcommand{\rk}{\right)}
\newcommand{\lkk}{\left\{}
\newcommand{\rkk}{\right\}}
\newcommand{\T}{{\rm T}}
\newcommand{\+}{~[\dot+]~}
\newcommand{\ex}{{\rm ex}}
\newcommand{\add}{a^{\ast}}
\newcommand{\PI}{\prod_{i=1}^n}
\newcommand{\addd}{a^{\dagger}}
\newcommand{\bdd}{b_P^{\ast}}
\newcommand{\df}{\mathscr{D}_{\rm free}(\mathcal {K})}
\newcommand{\dff}{\mathscr{D}_{\rm free}(\hhh)}
\newcommand{\ass}{a^\sharp}
\newcommand{\bss}{b^{ \sharp}}
\newcommand{\ov}[1]{\overline{#1}}
\newcommand{\hf}{H_{\rm f}}
\newcommand{\hff}{H_{\rm f}^{\pp}}
\newcommand{\half}{\frac{1}{2}}
\newcommand{\han}{{1/2}}
\newcommand{\vvv}[1]
{\left[\!\!\!\begin{array}{c}#1\end{array}\!\!\!\right]}
\newcommand{\MM}[4]
{\left[\!\!\!\begin{array}{cc}#1&#2\\ #3&#4\end{array}\!\!\!\right]}
\newcommand{\hhh}{{\mathscr{H}}}
\newcommand{\ar}[2]{\begin{array}{#1} #2 \end{array}}
\newcommand{\mmm}{\sum_{j=1}^3}
\newcommand{\jjj}{\sum_{j=1}^{3}}
\newcommand{\mmml}{\sum_{l=1}^3}
\newcommand{\ab}[1]{(H\Psi | #1 \Phi)}
\newcommand{\iii}{\sum_{i=1}^3}
\renewcommand{\d}{\displaystyle}
\newcommand{\vp}{{\hat \varphi}}
\newcommand{\non}{\nonumber}
\title
{\sc Physical state for non-relativistic quantum electrodynamics}
\author{
Fumio Hiroshima\thanks{e-mail: hiroshima@ math.kyushu-u.ac.jp}\\
Department of Mathematics\\
Kyushu University
\\
\\
 Akito Suzuki\thanks{e-mail: akito@math.sci.hokudai.ac.jp}
 \\
 Department of Mathematics\\
  Hokkaido University
 }
\date{\today}
\begin{document}
\setlength{\baselineskip}{18pt}
\maketitle
\begin{abstract}
A physical subspace and physical Hilbert space
associated with
asymptotic fields of
nonrelativistic quantum electrodynamics
are constructed through
the Gupta-Bleuler procedure.
Asymptotic completeness is shown and a
physical Hamiltonian is defined on the
physical Hilbert space.
\end{abstract}

\section{Introduction}
\subsection{The Gupta-Bleuler formalism}
Quantization of the electromagnetic field
does not cohere with normal postulates such as
Lorentz covariance and existence of a positive definite metric on some Hilbert
space. This means that
we chose to quantize in a manner sacrificing manifest Lorentz
covariance;
conversely if the electromagnetic field is quantized in a manifestly
covariant fashion, the notion of a positive
definite metric must be sacrificed and
the existence of negative probability arising from
the indefinite metric
renders invalid
a probabilistic interpretation of quantum field theory.
  One prescription for quantization of the electromagnetic field
   in a Lorentz covariant manner
  is the Gupta-Bleuler procedure \cite{Bleuler, Gupta}.
This procedure provides a covariant procedure for quantization at the cost of a cogent physical interpretation.

In this paper  we will consider  the so-called nonrelativistic
quantum electrodynamics (NRQED).
A significant point is that NRQED is nonrelativistic with respect to the motion of an electron only;
the electromagnetic field is always relativistic.
Although it is customary to adopt the Coulomb gauge
in the theory of NRQED,
 it   can be
 investigated using the Lorentz gauge by the Gupta-Bleuler approach \cite{Ba1}. This will be rigorously pursued in this paper.

{\it Indefinite metric and Lorentz condition:}
Let $\A_\mu=\A_\mu(x,t)$, $\mu=0,1,2,3$,
be a quantized radiation field
and $\dot \A_\mu=\dot \A_\mu(x,t)$ its time derivative.
$\A_\mu$ and its time derivative $\dot \A_\nu$
satisfy the 
commutation relations
	\begin{align}
	\label{covccr1}
		& [\A_\mu(x,t), \dot \A_\nu(x',t)]=-ig_{\mu\nu}\delta(x-x'), \\
	\label{covccr2}
		& [\A_\mu(x,t),\A_\nu(x',t)]=0,\\
&\label{covccr3}[\dot \A_\mu(x,t), \dot \A_\nu(x',t)]=0,
	\end{align}
where $g_{\mu\nu}$ is the metric tensor given by
\kak{mink}.
An inevitable consequence of the commutation relations \eqref{covccr1}-\eqref{covccr3}
is to introduce an indefinite metric $(\cdot | \cdot)$
onto the state space.
This creates  problems in physical interpratations and in formulating
things in a mathematically well defined way. For example,
the Hamiltonian $H$ cannot be defined as a self-adjoint operator
and so
the time-evolution $e^{itH}$ is not unitary.
So we have to investigate
questions concerning the domain of  $e^{itH}$.

In addition to the indefinite metric, the Lorentz condition also
poses a dilemma.
We impose the Lorentz condition:
\eq{sa7}
\partial^\mu\A_\mu(x,t)=0
\en
as an operator identity.
Here and in what follows
$\partial^\mu X_\mu$ is the conventional abbreviation for
$\partial^\mu X_\mu=\partial_t X_0
-\partial_{x^1}X_1-\partial_{x^2}X_2-\partial_{x^3}X_3$.
Under \kak{sa7}, as is well known, we find that
the conventional Lagrangian formalism is not available.

To resolve this difficulty, in the Gupta-Bleuler procedure mentioned below,
we first single out the so-called physical subspace from
the Lorentz condition,
and it is
 required that the Lorentz condition
 is valid only in terms of expectation values on the physical subspace.
The sesquilinear form $( \cdot|\cdot)$ restricted to the physical subspace is merely semidefinite.
So, we define the physical Hilbert space to be the quotient space
of the physical subspace divided by the subspace with zero norm with respect to $( \cdot|\cdot)$.
This space has a positive definite form,
and
a  self-adjoint Hamiltonian can also be derived.

{\it Gupta-Bleuler formalism:} Here we present
an outline of the Gupta-Bleuler formalism in NRQED for the reader's convenience,
without mathematical rigor.
Let $H_{\rm tot}$ be the full Hamiltonian for NRQED with form factor
$\varphi$ .
Note that $H_{\rm tot}$ is not self-adjoint.
Let $X$ be an operator.
Generally, a solutions $X(t)$ to the Heisenberg equation:
\eq{A(t)eq}
		\frac{d}{dt}X(t) = i[H_{\rm tot}, X(t)], \quad X(0) = X,
	\end{equation}
is called a
Heisenberg operators of $X$ associated with
$H_{\rm tot}$.
Since $H_{\rm tot}$ is, however,  not self-adjoint,
intuitively a solution to \kak{A(t)eq} is possibly
not unique.
In order to ensure uniqueness
we give an alternative definition of  Heisenberg operators
in Definition~\ref{DHO+}.

Let $p$ and $q$ be the momentum
and position operators respectively
of an electron, and
$\A(f)=(\A_0(f),\vA(f))$
the smeared electromagnetic field, i.e.,
$$\A(f)=\int f(x) \A(x) dx.$$
Let $p(t)$, $q(t)$ and
 $\A(f,t)=(\A_0 (f,t),
\vA(f,t))$
be the Heisenberg operators of $p$,  $q$
and  $\A(f)$, respectively.
We denote $\A(f,t)$ by
$$\A(f,t)=\int f(x) \A(x,t) dx.$$
Then formally the equations
	\begin{align}
	\label{Maxwell1}
		\square \vA(x,t) & = {\vec J}(x,t), \\
	\label{Maxwell2}
		\square {\A }_0(x,t) & = \rho(x,t)
	\end{align}
can be derived. Here
$\rho$ and ${\vec J}$ are the charge and the current density
of the electron, respectively,  given by
\begin{eqnarray}
&&
\label{ka1}
 \rho(x,t)=e\varphi(x-q(t)),\\
 &&
\label{ka2}
  \vec J(x,t)=\frac{e}{2}\left( \skima \varphi(x-q(t))\vec v(t) + \vec v(t) \varphi(x-q(t))\right)
  \end{eqnarray}
  where
$e$ denotes the charge on an electron and $\vec v(t)$ the velocity:
  $$\vec v(t) = \dot q(t)=\frac{1}{m}
  \lk
  p(t)-e\int
  \vA(z,t) \varphi(z-q(t)) dz\rk.
  $$
It can be seen  from \kak{ka1} and \kak{ka2} that the 4-current
$j=(\rho, \vec J) = (j^0,j^1,j^2,j^3)$
satisfies the continuity equation
\eq{con}
 \partial^\mu j_\mu =0.
 \en
By this, together with \eqref{Maxwell1} and \eqref{Maxwell2},
the kernel $\A(x,t)$ automatically satisfies that the condition
	\begin{equation}
	\label{freeAmu}
		\square \partial^{\mu}\A _{\mu}(x,t) = 0.
	\end{equation}
Equation \eqref{freeAmu} tells us that
$\partial^{\mu}\A _{\mu}$ is a free field and hence
formally, it can be described  in terms of
some
  annihilation operator $c(k)$
and the creation operator $c^\dagger(k)$ by
\eq{new}
 \partial^{\mu}\A _{\mu}(x,t)
		= \int \left( c(k)e^{-i|k|t+ikx}
			+ c^{\dagger}(k)e^{i|k|t-ikx} \right)dk. \en
The term including  the factor
$e^{-i|k|t}$ in \kak{new}
is called the positive frequency part
of $\partial^{\mu}\A _{\mu}(x,t)$
and written as $[\partial^{\mu}\A _{\mu}]^{(+)}(x,t)$.
On the other hand the negative frequency part $[\partial^\mu\A_\mu]^{(-)}(x,t)$ is defined
by the term including $e^{+i|k|t}$.
As is mentioned above
the Lorentz condition \eqref{sa7} is not valid as an operator identity.
We may demand that some state $\Psi$ should satisfy
$\partial^\mu\A_\mu(x,t)\Psi=0$.
This is however too severe a condition to demand,
since
$[\partial^{\mu}\A _{\mu}]^{(+)}(x,t)
\Psi+
[\partial^{\mu}\A _{\mu}]^{(-)}(x,t)\Psi=0$ and
the negative frequency part contains creation operators, so not even
the vacuum could satisfy this identity.
However, since the positive frequency part contains the annihilation operator,
we could adopt the less demanding requirement
	\begin{equation}
	\label{GB}
		[\partial^{\mu}\A _{\mu}]^{(+)}(x,t)\Psi = 0.
	\end{equation}
The state $\Psi$ in \kak{GB}
 is called the physical state,
  and
\kak{GB}
is called the Gupta-Bleuler subsidiary condition.
The set of physical states is denoted by
  $\mathscr{V}_{\rm phys}$ and is called
    the physical subspace.
Moreover the Lorentz condition is realized as the expectation value on the physical subspace:
	\[ ( \Psi | \partial^\mu A_\mu \Psi) = 0, \quad \Psi \in \mathscr{V}_{\rm phys}. \]

In much of the physical literature little attention
is paid to the existence of a nontrivial physical subspace.
The absence of a physical subspace
was, however,  recently pointed out
in \cite{S1}.
In this paper we want to derive sufficient conditions
for the existence of a physical subspace and
characterize such a subspace in the NRQED framework.

\subsection{Main results and plan of the paper}
Our main concern in this paper is to develop the Gupta-Bleuler formalism for NRQED, and
to characterize the physical subspace
rigorously.
   The physical subspace, however, can be trivial because of the infrared singularity \cite{S1}.
The difficulty in construction of the physical subspace
of our system is due to the fact that
$H_{\rm tot}$ is $\eta$-self-adjoint but not self-adjoint on a Klein space.
Therefore one cannot realize the solution of the Heisenberg equation \kak{A(t)eq}
as $e^{itH_{\rm tot}}\A e^{-itH_{\rm tot}}$.

In this paper we introduce a
dipole approximation to $H_{\rm tot}$
 to reduce this difficulty.
Let $H$ denote the Hamiltonian with
a dipole approximation.
Even so, although the Hamiltonian $H$ is
$\eta$-self-adjoint, it is not yet self-adjoint.
However, thanks to the dipole approximation,
we can construct the Heisenberg operators
 $\A_0 (f,t)$ and $\vA(f,t)$ exactly.
See Theorem \ref{h-o}.

On the other hand, there is a disadvantage in using
the dipole approximation.
Unfortunately, with this approximation, the system does
not conserve the $4$-current
$j_{\rm dip}=(\rho_{\rm dip}, \vec J_{\rm dip})$.
In fact, the $4$-current in
the dipole approximation turns out to be
\begin{eqnarray}
&&
\label{ka11}
 \rho_{\rm dip}(x,t)=e\varphi(x),\\
 &&
\label{ka21}
  \vec J_{\rm dip}(x,t)=\frac{e}{2}
  \left( \skima \varphi(x)\vec v_{\rm dip}
  (t) + \vec v_{\rm dip}(t) \varphi(x)\right)
  \end{eqnarray}
  where
  $$\vec v_{\rm dip}
  (t) = \frac{1}{m}
  \lk
  p(t)-e\int
  \vA(z,t) \varphi(z) dz\rk
  $$
and
\eq{ka3}
\partial^\mu {j_{\rm dip}}_\mu\not=0.
\en
Hence $\partial^\mu \A_\mu$ is not a free field in the sense of
\kak{freeAmu}, and
we lose the method of
defining
the positive frequency part $[\partial^{\mu}\A _{\mu}]^{(+)}$.
Therefore, in the dipole approximation, the physical subspace
cannot be defined in the usual way.

Nevertheless the asymptotic field
provides a tool for employing  the Gupta-Bleuler formalism.
Decompose $H$ with respect to the spectrum of the electron momentum:
\eq{spdec}
H=\int_{\BR}^\oplus H_P dP.
\en
We shall generally consider $H_P$ for an arbitrary fixed $P\in\BR$ throughout this paper.

The main results of this paper are
\begin{description}
\item[(i)] Asymptotic completeness of $H_P$ based on the LSZ method (Theorem \ref{asy});
\item[(ii)] Characterization of the physical subspace (Theorems \ref{suzuki9} and \ref{suzu34}) and the physical Hilbert space \kak{phi};
\item[(iii)] Construction of the physical scattering operator (Theorem \ref{suzu2});
\item[(iv)] Construction of the physical self-adjoint Hamiltonian (Theorem \ref{suzu1}).
\end{description}

{\bf (i)}
The explicit form of the Heisenberg operator with respect to $H_P$ allows us to
construct the asymptotic fields
$\A _{\mu}^{\rm out/in}(f,t,P) $ exactly
   and we prove the asymptotic completeness
   in    Theorem \ref{asy}.
As far as we know,
the  asymptotic completeness of NRQED with the dipole approximation
was proven initially by Arai \cite{a83,a83b} but for the model
without both a scalar and a longitudinal component.
See also Spohn \cite{sup97}.   We extend this to our case.

{\bf (ii)}
$\A _{\mu}^{\rm out/in}(f,t,P) $ is a free field defined in terms of asymptotic annihilation and creation operators, therefore so is $\partial^{\mu}\A _{\mu}^{\rm out/in} (f,t,P)$.
Therefore one can define the non-trivial physical subspace
$\mathscr{V}_{P,{\rm phys}}^{\rm out/in} $
associated  with  $\partial^{\mu}\A _{\mu}^{\rm out/in}(f,t,P) $ by the Gupta-Bleuler subsidiary condition
\eq{asgb}
 [\partial^{\mu}\A _{\mu}^{\rm out/in} ]^{(+)}(f,t,P)\Psi = 0.
 \en
We characterize $\V_{P,{\rm phys}}^{\rm out/in}$
and prove that $\V_{P,{\rm phys}}^{\rm out/in}$ is positive semi-definite
in Theorem \ref{suzuki9}.
Moreover,  in Theorem \ref{suzu34},
we show that $${\mathscr{V}_{P,{\rm phys}}^{\rm out}}\not =
{\mathscr{V}_{P,{\rm phys}}^{{\rm in}}}$$
which cannot occur in the case where the 4-current is conserved \cite{Sunakawa, IZ}.
The physical subspace
is decomposed as the direct sum:
$\mathscr{V}_{P,{\rm phys}}^{\rm out/in}
=\mathscr{V}_P^{\rm out/in}[ \dot +]
\mathscr{V}_{P,{\rm null}}^{\rm out/in}$,
where $\mathscr{V}_{P,{\rm null}}^{\rm out/in}$ is
the null space with respect to an indefinite metric
and the Hamiltonian leaves it  invariant.
Then the physical Hilbert space
is given as the quotient space
$$
\hhh _{P,{\rm phys}}^{\rm out/in}=
\mathscr{V}_{P,{\rm phys}}^{\rm out/in} /
\mathscr{V}_{P,{\rm ull}}^{\rm out/in}.$$
See \cite{KugoOjima, Nakanishi}.

{\bf (iii)}
Next we determine the physical scattering operator.
Consider the scattering operator
$$S_P:
{\mathscr{V}_{P,{\rm phys}}^{\rm out}}\rightarrow
{\mathscr{V}_{P,{\rm phys}}^{{\rm in}}}$$ as a unitary operator;
namely $S_P{\mathscr{V}_{P,{\rm phys}}^{\rm out}}=
{\mathscr{V}_{P,{\rm phys}}^{{\rm in}}}$.
We can check that $S_P$ leaves the null space invariant
and define the
physical scattering operator
$S_{P,{\rm phys}}$ by
$$S_{P,{\rm phys}}[\Psi]_{\rm out}:=[S_P\Psi]_{\rm in}$$
for $[\Psi]_{\rm out/in}\in \hhh _{P,{\rm phys}}^{\rm out/in}$.
It can be shown that
this is also a unitary operator
from
$\hhh _{P,{\rm phys}}^{\rm out}$ to
$\hhh _{P,{\rm phys}}^{\rm in}$ in Theorem \ref{suzu2}.

{\bf (iv)}
It can be seen from Lemma \ref{equivH}
that
$$H_{P,{\rm phys}}^{\rm out/in}=
\left[
H_P\lceil_{D(H_P)\cap \V_{P,{\rm phys}}^\ex} P^{\oi}\right]_\ex$$ is
a well defined operator on $\hhh _{P,{\rm phys}}^\ex$, where $P^\oi$ denotes the projection onto $\V_P^\oi$.
We call this the physical Hamiltonian.
It is proven in Theorem \ref{suzu1}
that
$H_{P,{\rm phys}}^{\rm out/in}$
is a  {\it self-adjoint operator}
on $\hhh _{P,{\rm phys}}^{\rm out/in}$.
Note that the physical Hamiltonian $H_{P,{\rm phys}}^{\rm out/in}$ is self-adjoint, whereas
our Hamiltonian $H_P$ is {\it not} self-adjoint,

This paper is organized as follows.
In Section 2 we define NRQED.
In Section 3
we present the explicit form
of the Heisenberg operators.
In Section 4 we construct the asymptotic fields $\A_\mu^{\rm out/in}(f,t,P)$ based on the LSZ formalism
 and define the scattering operator $S_P$.
 In section 5 we define
 physical subspaces  in an abstract way,
  and characterize the physical subspace at time
  $t$ and for $t=\pm\infty$.
In Section 6 we define the physical Hamiltonian
on the quotient space $
\hhh _{P,{\rm phys}}^{\rm out/in}$.

\section{NRQED in the Lorentz gauge}
\subsection{Boson Fock space}
We begin by defining elements of a Boson Fock space.
The Boson Fock space over
the Hilbert space $\oplus^4\LR$ is given by
the infinite direct sum of the $n$-fold symmetric tensor product of $\oplus^4\LR$:
\eq{fock}
\ffff:=\fff(\oplus^4\LR)=
\bigoplus_{n=0}^\infty
\lk
\bigotimes_s^n
\lk
\oplus^4  \LR
\rk \rk.
\en
Here $\otimes_s^n$ denotes the symmetric tensor product and we set $\otimes_s^0(\oplus^4\LR)=\CC$. We denote
the scalar product on $\ffff$ by
\eq{norm} {(\Phi,\Psi)_\ffff}:=\sum_{n=0}^\infty
(\Phi^{(n)},\Psi^{(n)})_{\otimes^n
(\oplus^4\LR)} \en  for $\Psi=\{\Psi^{(n)}\}_{n=0}^\infty$ and
$\Phi=\{\Phi^{(n)}\}_{n=0}^\infty\in\ffff$,
 which is
  anti-linear in $\Phi$ and linear in $\Psi$.
Then $\ffff$ becomes a Hilbert space with scalar product given by
\kak{norm}.

The creation operator
$\add(F):\ffff\rightarrow \ffff$ with a smeared function
 $F\in \oplus^4\LR$
is defined by
\eq{crea}
\lk
\add(F)\Psi\rk^{(n+1)}=\sqrt{n+1}S_{n+1}(F\otimes \Psi^{(n)}),\quad n\geq 0,
\en
with domain
$$D(\add(F))=\lkk
\{\Psi\}_{n=0}^\infty\in\ffff
\left|
 \sum_{n=0}^\infty (n+1)\|S_{n+1}
 (F\otimes \Psi^{(n)})\|^2<\infty\right.\rkk,$$
where $S_n$ denotes the symmetrizer defined by
$S_n(F_1\otimes\cdots\otimes F_n)=(n!)\f \sum_{\pi\in S_n}F_{\pi(1)}\cdots\otimes F_{\pi(n)}$ with $S_n$
the set of
permutations of degree $n$.
The annihilation operator $a(F)$ is defined by
the adjoint of
$\add(\bar F)$ with respect to the scalar product  \kak{norm},
i.e., $a(F)=\lk \add(\bar F)\rk ^\ast$.
We can identify $\ffff$ as
 \eq{id}
\ffff\cong   \ffff_1\otimes \ffff_2
\otimes\ffff_3\otimes\ffff_0,
\en
where $\ffff_\mu=\fff(\LR)$, $\mu=0,1,2,3$.
Hereinafter we make this identification
without  further notice,
 and under this identification we set
 $$a(f,\mu):=\lkk
\begin{array}{ll}
a(f,1)=a(f)\otimes 1\otimes 1\otimes 1,\\
a(f,2)=1\otimes a(f)\otimes 1\otimes 1,\\
a(f,3)=1\otimes 1\otimes a(f)\otimes 1,\\
a(f,0)=1\otimes 1\otimes 1\otimes a(f).
\end{array}
\right.
$$
We also define
$\add(f,\mu)$ in a similar manner
and
formally write
$$
\ass(f,\mu)=\int \ass(k,\mu) f(k) dk,\qquad a^\sharp=a,\add,
$$
with the informal kernel $\ass(k)$.

$\Omega\in \ffff$ denotes  the Fock vacuum defined by $\Omega=\{1,0,0,...\}$.
 The Fock vacuum  is the unique vector
such that $a(f,\mu)\Psi=0$
   for all $f\in\LR$ and $\mu=0,1,2,3$.
Let $\Omega_\mu \in \ffff_\mu$ be the Fock vacuum. Then $\Omega = \Omega_1 \otimes \Omega_2 \otimes \Omega_3 \otimes \Omega_0$ follows.
The set of vectors
$$\fffff:={\rm L.H.}\left\{
\left.
\PI \add(f_i,\mu_i)\Omega,~
\Omega\right|f_j\in\LR,\mu_j=0,1,2,3\right\}$$
is called the finite particle subspace and it is dense in $\ffff$.
The annihilation and creation operators
leave $\fffff$ invariant
and
satisfy the canonical commutation relations:
\eq{ccr}
[a(f,\mu),\add(g,\nu)]=\delta_{\mu\nu}
(\bar f, g),\quad
[\ass(f,\mu),\ass(g,\nu)]=0,
\quad \mu,\nu=0,1,2,3, \en
on $\fffff$.

Next we define the second quantization.
Let $\mathscr{C}({\cal K})$ denote
the set of contraction operators on Hilbert space
${\cal K}$.
The functor $\Gamma:\mathscr{C}(\oplus^4 \LR)\rightarrow
  \mathscr{C}(\ffff)$ is
  given by
  $$\Gamma(T)
\PI \add(F_i) \Omega=
\PI \add(TF_i)\Omega$$
and $\Gamma(T)\Omega=\Omega$
for
$T\in \mathscr{C}(\oplus^4 \LR)$, which is called the second quantization of $T$.
Let
\eq{formfactor}
\omega(k)=|k|,\quad k\in\BR.
\en
The second quantization of the one-parameter multiplicative unitary group $
e^{-it\omega}$ on $\LR$ induces the one-parameter unitary group
$\{\Gamma(\oplus^4e^{-it\omega})\}_{t\in\RR}$
on $\ffff$.
Its self-adjoint generator is denoted
by $\hf$, i.e.,
\eq{hf}
\Gamma(\oplus^4 e^{-it\omega})=e^{-it\hf},\qquad t\in\RR,
\en
 and it is formally written as
\eq{hfff}
\hf=\sum_{\mu=0}^3 \int \omega(k)\add(k,\mu)a(k,\mu)dk.
\en
Replacing $\omega(k)$ with the multiplication by
the identity $1$ in \kak{hfff},
we define the number operator $N_{\rm f}$ of $\ffff$.
Furthermore let
	\[ N^0_{\rm f} = \int \add(k,0)a(k,0)dk. \]
This is the number operator on $\ffff_0$.

\subsection{Indefinite metric}
Let $g$ be the $4 \times 4$ matrix $g=(g_{\mu\nu})_{\mu,\nu=0,1,2,3}$
given by
\eq{mink}
 g_{\mu\nu}=\lkk\begin{array}{rl}1,&\mu=\nu=0,\\
-1,&\mu=\nu\not=0,\\
0,&\mu\not=\nu.
\end{array}
\right.
\en
 Now we  introduce  the indefinite-metric on $\ffff$.
  Let $[g]$ be the linear operator induced from the metric tensor
$g$:
$$[g]=\lk\begin{array}{cccc}
-1&0&0&0 \\
0&-1&0&0\\
0&0&-1&0\\
0&0&0&1 \end{array} \rk:\oplus^4\LR\rightarrow \oplus^4\LR,
$$
and define $\eta:\ffff\rightarrow \ffff$ by the second quantization of $-[g]$,  i.e.,
\eq{ga}
\eta:=\Gamma(-[g])= (-1)^{N^0_{\rm f}}.
\en
From the definition,
\eq{eta2}\eta^2=1.
\en
By using $\eta$ we introduce
an indefinite metric on $\ffff$ by
 \eq{inde}
 \im{\Psi}{\Phi}:=(\Psi,\eta\Phi)_\ffff.
 \en
In order to define the adjoint with respect to
the indefinite metric \kak{inde}
we  introduce  the $\eta$-adjoint
of $a(f,\mu)$ by
\eq{eta}
\addd(f,\mu):=\eta \add(f,\mu)\eta.
\en
Then  $\im{\Psi}{a^\dagger(f,\mu)\Phi}=
\im{a(\bar f,\mu)\Psi}{\Phi}$ and
\eq{ad2}
\addd(f,\mu)=\lkk\ar{rl}{\add(f,j),&\mu=j=1,2,3,\\
-\add(f,0),&\mu=0}\right.
\en
hold.
Hence we have the commutation relations:
\eq{ccr22}
[a(f,\mu),\addd(g,\nu)]=-g_{\mu\nu}(\bar f,g),\qquad
[\addd(f,\mu),\addd(g,\nu)]=0.
\en

Let us define  the quantized radiation field
$\A _\mu(f,x)$, $x\in\BR$, $\mu=0,1,2,3$,
 for a test
function $f\in\LR$.
Let $e^j(k)\in\BR$, $k\in\BR$, $j=1,2,3$,
 be unit vectors such that
 $e^3(k)=k/|k|$, and let
 three vectors $e^1(k)$, $e^2(k)$ and $e^3(k)$
 form a right-handed system for each $k\in\RR^3$.
We fix them.
The  quantized
 radiation field,
 \eq{rad}
 \left(\skima
 \A_0(f,x),-\A_1(f,x),-\A_2(f,x),-\A_3(f,x)
 \right)=
 \left(\A^0(f,x),\vA(f,x)\right),
 \en
 smeared by the test function
 $f\in \LR$ at time zero
  is defined by
 \eqn && \vA (f,x)=\frac{1}{\sqrt
2}\jjj\int dk \frac{e^j(k)}{{\sqrt{\omega(k)}}}
 \lk \addd(k,j){\hat f(k)}e^{-ikx}+
a(k,j){\hat f(-k)}e^{ikx}\rk,\\
&& \A _0(f,x)=\frac{1}{\sqrt 2}\int dk \frac{1}{{\sqrt{\omega(k)}}}
 \lk \addd(k,0){\hat f(k)}e^{-ikx}+
 a(k,0){\hat f(-k)}e^{ikx}\rk \enn
and
its derivative by
 \eqn
&& \vec{\dot{\A }}
(g,x)= \frac{i}{\sqrt 2}\jjj\int dk
e^j(k)\sqrt{\omega(k)}\lk  \addd(k,j)
 {\hat g (k)}e^{-ikx}
- a(k,j) {\hat g (-k)}e^{ikx}\rk,\\
&&
\dot{\A }_0(g,x)=
\frac{i}{\sqrt 2}\int dk
\sqrt{\omega(k)}\lk  \addd(k,0)
 {\hat g (k)}e^{-ikx}
- a(k,0) {\hat g (-k)}e^{ikx}\rk.
\enn
$\A^0$ is a scalar potential and $\vA$ a vector potential.
Conventionally the vector potential $\vA$ is decomposed as
$\vA=\A^\perp+\A^{\parallel}$, where $\A^\perp$
is the transversal part and $\A^\parallel$ the longitudinal part given by
\begin{eqnarray*}
&& \A^\perp(f,x)
=\frac{1}{\sqrt
2}\sum_{j=1,2}\int dk \frac{e^j(k)}{{\sqrt{\omega(k)}}}
 \lk \addd(k,j){\hat f(k)}e^{-ikx}+
a(k,j){\hat f(-k)}e^{ikx}
\rk,\\
&&
\A^{\parallel}(f,x)
=\frac{1}{\sqrt2}
 \int dk \frac{e^3(k)}{{\sqrt{\omega(k)}}}
 \lk \addd(k,3){\hat f(k)}e^{-ikx}+
a(k,3){\hat f(-k)}e^{ikx}\rk.
\end{eqnarray*}
Note that $\mmml \partial_{x_l}\A_l^\perp(f,x)=0$.
Set
\eq{ad}
\A_\mu(f):=\A_\mu(f,0),\qquad \mu=0,1,2,3.
\en
By the canonical commutation
relations \kak{ccr} and \kak{ccr22} we have
\eq{sa8}
[\A _\mu(f),\dot{\A }_\nu(g)]=-ig_{\mu\nu}(\bar f, g)
\en
 and
\eq{sa9}
[\A _\mu(f), \A _\nu(g)]=0,\quad
 [\dot{\A }_\mu(f),\dot{\A }_\nu(g)]=0
 \en
 for {\it all} $f,g\in\LR$.
It can also be seen that
\eq{ha}
[\hf,\addd(f,\mu)]=\addd(\omega f,\mu),\quad
[\hf,a(f,\mu)]=-a(\omega f,\mu),
\en
by which
we have
\eq{ap}
[\hf, \A _\mu(f)]=-i\dot{\A }_\mu(f),\quad
[\hf, \dot{\A }_\nu(f)]=i\A _\mu(-\Delta f).
\en

We introduce notions of $\eta$-self-adjointness and $\eta$-unitarity \cite{Bogner} below.
\begin{definition}
\begin{enumerate}
\item[(1)] A densely defined linear operator $X$ is $\eta$-self-adjoint if and only if  $\eta X^\ast \eta=X$.
\item[(2)] A densely defined linear operator $X$ is $\eta$-unitary if and only if $X$ is injective and
$X^{-1}=\eta X^* \eta$.
\end{enumerate}
\end{definition}
The next lemma immediately follows from the definition of $\eta$-self-adjointness.
\bl{etasym}
(1) $X$ is $\eta$-self-adjoint if and only if
$\eta X$ is self-adjoint.
(2) Let $X$ be $\eta$-self-adjoint. Then
$X$ is closed on $D(X)$.
(3) Let $X$ be $\eta$-self-adjoint and $\eta X$ is essentially self-adjoint on $D$.
Then $D$ is a core of $X$.
\el
For real-valued $f$, note that
the closures of  $\A _j(f,x)$ and $\dot{\A }_j(f,x)$, $j=1,2,3$,   are self-adjoint
and $\eta$-self-adjoint for each $x\in\BR$.
However the closure of $\A _0(f,x)$ and $\dot{\A }_0(f,x)$ for real-valued $f$ are $\eta$-self-adjoint but
not even symmetric.
Moreover
the free Hamiltonian
 $\hf$ is self-adjoint and $\eta$-self-adjoint.

\subsection{Definition of NRQED in the Lorentz gauge}
The Hilbert space of our system consisting of one electron coupled with photons
is given by
the tensor product of $\LR$ and $\ffff$:
\eq{hi}
\hhh:=\LR\otimes\ffff,
\en
where $\LR$ describes the state space of one electron and $\ffff$ the photon field.
The full Hamiltonian of our system is defined by
\eq{full}
H_{\rm tot}:=
\frac{1}{2m}\left(p \otimes 1-e \vA(\vp,\cdot)\right)^2+1\otimes \hf+e 1\otimes \A_0
\en
for a given fixed test function $\vp$ on $\BR$ which satisfies some conditions mentioned later.
Let $m>0$ and $e\in\RR$
denote the mass and charge of the electron, respectively, and $p =-i\vec\nabla_x$ denote the momentum operator of the electron.
 Instead of this full Hamiltonian in this paper
we take the dipole approximation;
namely we replace $\A(\vp,\cdot)$ in $H_{\rm tot}$ by $1\otimes \A(\vp)$.
We set
\eq{rad2}
\A _\mu := \A _\mu(\vp).
\en
We make the following assumptions about $\vp$ throughout this paper.
\begin{assumption}\label{ass1}
{\rm (Assumptions for $\eta$-self-adjointness)}.
$\vp/\omega,\vp/\w, \w\vp\in\LR$ and $\vp(-k)=\ov{\vp(k)}$.
\end{assumption}
Then our Hamiltonian
is given by
\eq{sa1}
H:=\frac{1}{2m}\left(p \otimes 1-e1\otimes \vA\right)^2+1\otimes \hf+e 1\otimes \A_0
\en
with domain
\eq{rad21}
D(H):=D(-\Delta\otimes 1)\cap D(1\otimes \hf).
\en
\bp{etaself}
$H$ is $\eta$-self-adjoint
and
$\eta H$ is essentially self-adjoint on any core of
$-\Delta\otimes 1+1\otimes \hf$.
In particular
$H$ is closed
and an arbitrary core of $-\Delta\otimes 1+1\otimes \hf$
is also a core of $H$.
\ep
\proof
Set $H'=H-e1\otimes \A_0$.
Let $L=-\Delta\otimes 1 +1\otimes \hf+1$.
Then
we have
$$|(L\Psi, H'\Phi)-(H'\Psi, L\Phi)|\leq C \|L^\han \Phi\| \|L^\han \Psi\|$$ for some constant $C$ by
the fundamental inequality
$\|\ass(f)\Psi \|\leq C'\|(\hf+1)^\han \Psi\|$.
Thus by the Nelson commutator theorem, $H'$
is self-adjoint on $D(-\Delta\otimes 1)\cap D(1\otimes\hf)$.
We can also see that
\eq{kouta1}
\|\eta \A_0\Psi\|\leq C'(\|(\hf^0)^\han\Psi\|+\|\Psi\| )
\en
and $[H',\eta]=0$, which
implies that
$$
\|\eta \A_0\Psi\|\leq
C'(\|(H')^\han\Psi\|+\|\Psi\| )\leq
\epsilon\|\eta H'\Psi\|+b_\epsilon \|\Psi\|$$
for arbitrary $\epsilon>0$.
Since $\A_0$ is skew symmetric and $\{\A_0, \eta\}=0$,
we have $(\eta H)^*=H^* \eta \supset \eta H$,
which yields the result that $\eta H$ is symmetric.
Then we can see by the Kato-Rellich theorem
that $\eta H$ is self-adjoint on
$D(-\Delta\otimes 1)\cap D(1\otimes \hf)$.
This completes the proof.\qed
We divide $\hhh$
into a scalar part and a vector part.
Let
$
\hhh_0:=\ffff_0
$,
$\ffff_{\pp}=\ffff_1\otimes\ffff_2\otimes\ffff_3$
and
$\hhh_{\pp}:=\LR\otimes\ffff_{\pp}$.
Then
$\hhh$ can be realized as the tensor product of
the scalar part and the vector part:
\eq{iden}\hhh\cong\hhh_{\pp} \otimes \hhh_{0}.
\en
We use this identification without further notice through this paper.
This identification is inherited by the Hamiltonian $H$ and
we have
 \eq{h}
 H= H_{\pp}\otimes 1 + 1\otimes H_{0},
  \en
 where $H_{\pp}$ is the vector component of $H$:
 \eq{hz}
 H_{\pp}:=
 \frac{1}{2m}(p  \otimes 1 -e1\otimes \vA )^2+1\otimes
\hff
 \en
defined on $\hhh_{\pp}=\LR\otimes\ffff_{\pp}$,
and $H_0$ the scalar component:
\eq{ht}
H_0:=
 e \A _0+\hf^0
 \en
 defined on $\hhh_0$.
 Here $\hff$ denotes the free Hamiltonian  in $\ffff_{\pp}$:
  $$\hff=
  \jjj\int\omega(k)\add(k,j)a(k,j)dk=
  \jjj\int\omega(k) a^\dagger (k,j)a(k,j)dk
    $$
and $\hf^0$ in $\ffff_0$:
$$\hf^0=
 \int\omega(k)\add(k,0)a(k,0)dk=
-\int\omega(k)\addd(k,0)a(k,0)dk.
$$
\bp{self2}
(1)
$H_\pp$ is self-adjoint on $D(-\Delta\otimes 1)\cap D(1\otimes \hf^\pp)$ and essentially self-adjoint
on any core of
$-\Delta\otimes 1+ 1\otimes \hf^\pp$.
(2)
$H_0$ is $\eta$-self-adjoint on $D(\hf^0)$. In particular
$H_0$ is closed on $D(\hf^0)$ and an arbitrary
 core of $\hf^0$ is also a core of $H_0$.
\ep
\proof
(1) has been proven in the proof of Proposition \ref{etaself}.
By \kak{kouta1},
$\eta H_0$ is self-adjoint on $D(\hf^0)$. Hence the proof is complete.
\qed

\section{Heisenberg equations}
In this section we first
 diagonalize the total Hamiltonian by making use of a certain $\eta$-unitary operator, and solve the Heisenberg equation exactly.
The first rigorous results on the diagonalization of NRQED are in
Arai  \cite{a83,a83b}, where the electromagnetic field is quantized with respect to the Coulomb gauge and then there is no  scalar potential $\A_0$
nor longitudinal potential $\A^\parallel$.

In addition to Assumption \ref{ass1},
from now on we make the following assumption.
\begin{assumption}
\label{ass2}
We assume (1)-(5).
\begin{description}
\item[(1)]
   $\int_\BR|\vp(k)|^2/\omega(k)^3 dk<\infty$,
   \item[(2)]
    there exists $\epsilon>0$ such that $\|e^{+\epsilon\omega} \vp\|_\infty<\infty$,
   \item[(3)]
    there exists a function $\rho$ on $[0,\infty)$ such that $\vp(k)=\rho(|k|)$,
   \item[(4)]
    $\rho(s)>0$ for $s\not=0$,
   \item[(5)]
    $F(s):=\rho(\sqrt s)^2\sqrt s \in L^p([0,\infty);ds)$ for some $1<p$,  and there exists $0<\alpha <1$ such that $|F(s+h)-F(s)|\leq K|h|^\alpha$ for all $s$ and $0<h\leq 1$.
   \end{description}
   \end{assumption}
We explain Assumption \ref{ass2}.

(1) This condition is called the infrared regular condition,
which is used to construct $\eta$-unitary operators $V_0$ in Section 3.1 and $U_P$ in Section 3.2.

(2) This ensures that $\|\omega\vp\|_\infty<\infty$, $\|\sqrt\omega\vp\|_\infty<\infty$ and $\|\vp\|_\infty<\infty$. Then
the operators $T$ and $W_\pm^{ij}$ introduced in Section 3.2 can be defined as  bounded operators.
Furthermore  we can construct the Heisenberg operators defined by Definition \ref{DHO+} by (2), where we need analytic continuation of $e^{-it\omega}\vp$ with respect to $t$.

(3) This means that $\vp$ is rotation invariant.

In Section 3.2 we introduce the function \eq{dead}
D_\pm(s)=m-\frac{e^2}{2}4\pi\lkk
\lim_{\epsilon\rightarrow 0}\int _{|s-r|>\epsilon, r>0}\frac{\rho(\sqrt r)^2 \sqrt r}{s-r}dr
\mp 2\pi i \rho(\sqrt s)^2 \sqrt s\rkk
\en
and define
\eq{qdef}
Q(k)=\frac{\vp(k)}{D_+(\omega(k)^2)}.
\en
Conditions (4) and (5) are used to ensure that $Q$ is well defined.

(4)
This condition implies that the imaginary part of $D_\pm$ is strictly positive for all $s>0$, in particular it is non-zero except for $s=0$.

(5) Note that
$$HF(s)=(2\pi i)^{-1}\lim_{\epsilon\rightarrow0}\int_{|s-r|>\epsilon, r>0} \frac{F(s)}{s-r} dr$$ is the
Hilbert transform of $F(s)$.
By the Lipshitz condition (5)
the real part of $D_\pm$ is also Lipshitz continuous with the same order $\alpha$ as $F(s)$  and belongs to $L^p(\RR)$. See e.g.,  \cite[p.145, THEOREM 106]{tit}.
This yields the result that
   the real part of $D_\pm(s)\rightarrow 0$ as $s\rightarrow \infty$.

Thus (4) and (5), together with $D_\pm(0)>0$,  ensure that
there exists $c>0$ such that
\eq{dpos}
|D_\pm(s)|>c,\quad \forall s\geq 0
\en
and hence $Q$ is well defined.

Assumption (4), however, seems to be unusual.
We note that, as mentioned above,
assumptions  (4) and (5) are
sufficient to allow the definition of $Q$.
It is possible to choose an alternative $\rho$ so that $Q$ is well defined.
In particular, one can choose $\rho$ satisfying (5) but not (4).
For example suppose that $\rho(s)$ has compact support $|s|<N$.
Then in order to define $Q$ it is enough to
assume further that
${\rm Re} D_+(s)=
m-(2\pi e^2)( 2\pi i) HF(s)>\delta>0$ for all $|s|>N-1$ for some $\delta>0$.  Notice that this assumption is satisfied, because $HF(s)\rightarrow 0$ as $s\rightarrow \infty$ and $m>0$.  We omit the detail.

\subsection{Scalar potential}
Let us begin by discussing the scalar part of the Hamiltonian.
The scalar part $H_0$ of $H$ can be easily diagonalized  by an $\eta$-unitary (but not unitary) operator. Let
\eq{vz}
V_0 = \exp
\left(
 \frac{e}{\sqrt{2}}
			 \left(a^\ast \left(\frac{\hat{\varphi}}{\omega^{3/2}},0\right)
				 +a\left(\frac{\hat{\varphi}}{\omega^{3/2}},0\right) \right) \right) .
\en
This is unbounded $\eta$-unitary on $\ffff_0$.
It can be seen that the finite particle subspace $\ffff_{0, \rm fin}$ of $\ffff_0$
contains analytic vectors of $V_0$. Then
\eq{ana1}
 V_0 \Psi=\sum_{n=0}^\infty
\frac{1}{n!}
{
\left(a^\ast \left(\frac{\hat{\varphi}}{\omega^{3/2}},0\right)
				 +a\left(\frac{\hat{\varphi}}{\omega^{3/2}},0\right) \right) ^n
}\Psi
\en
for $\Psi\in \ffff_{0, \rm fin}$.
By the commutation relations and \kak{ana1}, we have
$$(H_0^\ast \Phi, V_0\Psi)=(\Phi, V_0(\hf^0+E_0)\Psi)$$
 for  $\Psi,\Phi\in
\ffff_{0, \rm fin}$,
where
\eq{ep}
E_0 := \frac{e^2}{2}\|{\hat{\varphi}}/{\omega}\|^2.
\en
Thus $V_0\Psi\in D(H_0)$ and
$H_0V_0\Psi= V_0(\hf^0+E_0)\Psi$; furthermore
$H_0V_0\Psi\in D(V_0\f)$. Then
we have
\eq{dia}
V_0\f H_0 V_0=\hf^0+E_0
\en
on $\ffff_{0, \rm fin}$. Since $\ffff_{0, \rm fin}$ is a core of $\hf^0$,
we
have
\eq{dia2}
\ov{V_0\f H_0 V_0\lceil_{\ffff_{0, \rm fin}}}=\hf^0+E_0
\en
on $D(\hf^0)$.
From \kak{dia}, we can see that
$V_0\Omega_0$ is an eigenvector of $H_0$
associated with eigenvalue $E_0$:
\eq{eigen}
H_0V_0\Omega_0=E_0V_0\Omega_0.
\en
Define
\eqnn
 &&\label{b100}
 b(f,0):=a(f,0)-\frac{e}{\sqrt2} (\vp/\omega^{3/2},f),\\
  &&\label{b200}
 b^\dagger(f,0):=a^\dagger (f,0)-\frac{e}{\sqrt2}
 (\vp/\omega^{3/2},f).
 \ennn
These operators satisfy
the canonical commutation relations:
\eq{bbcan}
[b(f,0),b^\dagger (g,0)]= -(\bar f, g),\quad [b(f,0),b(g,0)]=0 = [b^\dagger (f,0),b^\dagger(g,0)]
\en
and
\eq{bbh}
[H_0,b^\dagger(f,0)]=b^\dagger(\omega f,0),\quad
[H_0, b(f,0)]=-b(\omega f,0).
\en
Thus
the quadruple
\eq{sa45}
(\ffff_0, V_0\Omega_0, \{b^\dagger (f,0),b(f,0)|f\in\LR\}, H_0)
\en
corresponds to the free case
$(\ffff_{0}, \Omega_{0}, \{a(f,0),\add(f,0)|
f\in\LR\},\hf^0)$,
but $H_0$ is not self-adjoint.

\subsection{Vector potential}
In this subsection we investigate the vector part $H_{\pp}$ of $H$.
$H_{\pp}$ is quadratic and can also be diagonalized by a Bogoliubov transformation.

$H_{\pp}$ is self-adjoint on $D(-\Delta\otimes1)\cap D(1\otimes \hf)$
and essentially self-adjoint on any core of $(-1/2)\Delta\otimes 1+1\otimes\hf$. This can be proven by virtue of the Nelson commutator theorem as stated in the proof of Lemma \ref{etaself}.
 Since $H_{\pp}$ commutes with
$p_j$, $j=1,2,3$,
$H_{\pp}$ and $\hhh_{\pp}$ are decomposable with respect to the
spectrum of $p_j$ and are given by
$$
\hhh_{\pp} =\int^\oplus_\BR \hhh_{\pp,P}
dP$$
and
$$
H_{\pp}=\int^\oplus_\BR H_{\pp,P} dP,$$
where
$\hhh_{\pp,P}=
\ffff_{\pp}$
and
$H_{\pp,P}$ is  the self-adjoint operator
on $\fff_{\pp}$, given by
\eq{fiber}
H_{\pp,P}=
\frac{1}{2m}(P-e \vA )^2+ \hff,\quad P\in\BR.
\en
The fiber Hamiltonian $H_{\pp,P}$
  is, indeed,  self-adjoint on $D(\hff)$ for all $(P,e)\in\BR\times\RR$ and
  bounded from below.
In the similar way as Proposition \ref{etaself} this can also be proven by virtue of the Nelson commutator theorem with the conjugate operator $L$ replaced by $N_{\pp}+1$, where $N_{\pp}$ denotes the number operator on $\ffff_{\pp}$.
Now for each $(P,e)\in\BR\times\RR$,
let us construct a quadruple \kak{omega} relevant to
the free case $(\ffff_{\pp}, \Omega_{\pp}, \{a(f,j),\add(f,j)|f\in\LR,j=1,2,3\},\hf^{\pp})$:
 \eq{omega}
\left (
\ffff_{\pp}, \Omega_{\pp, P}, \{b_P(f, j),
\bdd(f, j)|f\in\LR,j=1,2,3\}, H_{\pp,P}
\right)
 \en
   such that
\bi
\item[(1)]
$b_P(f, j)$ and $\bdd(g, j)$ satisfy the canonical commutation relations,
$$[b_P(f,j),\bdd(g,i)]=
\delta_{ij}(\bar f,g),\quad\quad [b_P(f,j),b_P(g,i)]=0=
[\bdd (f,j),\bdd(g,i)]
,$$
\item[(2)]
 $[H_{\pp,P},b_P(f, j)]=-b_P(\omega f, j)$ and
   $[H_{\pp,P},\bdd(f, j)]=\bdd(\omega f, j)$,
  \item[(3)]
   $\Omega_{\pp, P}$ is the unique vector such that
$b_P(f, j)\Omega_{\pp, P}=0$ and
is the ground state of $H_{\pp,P}$.
\ei
 From (1) to (3) above we will be able to
 infer
    the unitary equivalences:
   $\Omega_{\pp, P}\cong \Omega$,
 $\bss_P(f, j)\cong \ass(f)$ and
 $H_{\pp,P}\cong \hff+E_{\pp}(P)$ for each $P$,
  where $E_{\pp}(P)=\is(H_{\pp,P})$ is  given explicitly.

In order to construct $b_P^\sharp$ we make explicit the relationship between $\ass$ and $\vA$.
It can be seen that the creation operator
and the annihilation operator
can be expressed as
\eqnn &&
\label{asss1}
a(f,l)=\frac{1}{\sqrt{2}}\mmm  \lk \hat \A _j(e_j^l\sqrt\omega f)+i\hat{\dot{\A}}_j
(e_j^l\ww f)\rk,\\
&&\label{asss2}
 \add (f,l)=
 \frac{1}{\sqrt{2}} \mmm \lk \hat \A _j(\tilde e_j^l\sqrt\omega \tilde f)-i\hat{\dot{\A}}_j(\tilde e_j^l \ww \tilde f)\rk,
 \ennn
where $\hat \A (f):=\A (\hat f)$ and $\hat{\dot{\A}}(g):=\dot{\A}(\hat g)$,  and $\tilde f(k)=f(-k)$. Note that $\hat{\hat f}=\tilde{\check{\hat f}}=\tilde f$.
Modifying the right-hand side of \kak{asss1} and \kak{asss2},
we can construct $b^\sharp$ in \kak{omega}.
 Let
$$G_\epsilon f(k):=\int_\BR \frac{f(k')}{(\omega(k)^2-\omega(k')^2+i\epsilon)\omega(k)^\han\omega(k')^\han}dk',\quad \epsilon>0.$$
Then $G_\epsilon$ is bounded and skew-symmetric on $\LR$. Moreover the strong limit $G:=\lim_{\epsilon\downarrow 0}G_\epsilon$ exists
as a bounded skew-symmetric operator.

Let
\eq{D}
D(z):=m-e^2\int_{\BR}\frac{|\vp(k)|^2}
{z-\omega(k)^2}dk,
\en
which is analytic on $\CC\setminus[0, \infty)$.
Let
$D_{\pm}(s):=\lim_{\epsilon\downarrow0}D(s\pm i\epsilon)$ for $s\in [0,\infty)$; then
we see that $|D_{\pm}(s)|>c$ for some $c>0$ by
(4) and (5) of Assumption \ref{ass2}.
See \kak{dpos} and
\kak{dead} for the explicit form of $D_\pm$.
Then we can define
$Q(k):={\vp(k)}/{D_+(\omega(k)^2)}$.
Operator $T:\LR\rightarrow\LR$ is given by
\eq{TT}
Tf:=f+e^2Q\sqrt\omega G \sqrt \omega \vp f.
\en
Since $G$ is skew-symmetric, we have
$T^\ast f=f-e^2\vp\w G\w \bar Q f$.
\bl{1}
$T$ satisfies the following algebraic relations:
\bi
\item[(1)]
$T$ is unitary on $\LR$ and bounded on $L^2(\BR,\omega^n dk)$, $n=\pm1$;
\item[(2)]
$\d  T^\ast \frac{1}{\omega^2}Q
     =\frac{\vp}{\mass\omega^2}$,
    where $\mass:=D(0)=m+e^2\|\vp/\omega\|^2$;
    \item[(3)] $[\omega^2,T^\ast ]f=-e^2(Q,f)\vp$,\quad  $[\omega^2,T]f=+e^2(\vp,f)Q$;
        \item[(4)] $T\vp=mQ$.
    \ei
    \el
\proof This is a slight modification of \cite{a83,a83b}.
We omit the proof.
 \qed
    Now, for $f\in\LR$, we define
\eqnn
&&\label{b1}
 b_P(f,l):=\frac{1}{\sqrt{2}}
\mmm \lk \hat \A _j(T^\ast e_j^l\sqrt\omega f)+i\hat{\dot{\A}}_j(T^\ast e_j^l\frac{1}{\sqrt\omega} f)-P_j\lk
\frac{ee_j^l Q}{\omega^{3/2}},f\rk
\rk,\ \ \ \\
&&\label{b2}
 \bdd (f,l):=
\frac{1}{\sqrt{2}} \mmm \lk \hat \A _j(\bar T^\ast \tilde e_j^l\sqrt\omega \tilde f)-i\hat{\dot{\A}}_j(\bar T^\ast \tilde e_j^l\frac{1}{\sqrt\omega} \tilde f)- P_j\lk
\frac{ee_j^l \bar Q}{\omega^{3/2}},f\rk
 \rk \ennn
and set
$b^\sharp (F):=\mmml b^\sharp (F_l,l)$
for $F\in\oplus^3\LR$.
\bl{2}
It follows that
$\lk b_P(f,j)\rk ^\ast=b_P^\ast(\bar f,j)$,
 and
the commutation relations below hold:
\eqnn
&&\label{ccr2}
[b_P(f,i),\bdd(g,j)]=\delta_{ij}(\bar f,g),\quad
[b_P(f,j),b_P(g,i)]=0=[b_P^\ast (f,j),b_P^\ast (g,i)],\quad \quad \\
&&\label{hccr}
[H_{\pp,P}, b_P(f,j)]=-b_P(\omega f,j),\quad
[H_{\pp,P},\bdd(f,j)]=\bdd(\omega f,j).
\ennn
\el
\proof
 By the definition of $\bss_P$ we have
\eqn
&&
b_P(f,j)=\iii
\lk \add(W_-^{ij} f,i)+a(W_+^{ij}f,i)+
\sum_{l=1}^3 (P_l  L_l^j,f)\rk,\\
&&
  \bdd(f,j)=\iii\lk \add(\bar W_+^{ij}f,i)+a(\bar W_-^{ij}f,i)+\sum_{l=1}^3 (P_l \bar L_l^j,f)\rk,
  \enn
   where $\bar X f =\ov{X\bar f}$, $\d L_l^j =e\frac{1}{\sqrt2} \frac{e_l^j Q}{\omega^{3/2}}$ and
   $W_{P,\pm}^{ij}:\LR\rightarrow \LR$ is defined by
   \eqn
&& W_+^{ij}f:=\half
\sum_{l=1}^3
e_l^i
\lk
\ww  T^\ast  \w
+
\w  T^\ast \ww \rk e^j_l f,\\
&& W_-^{ij}f:=\half
\sum_{l=1}^3
e_l^i \lk
\ww  T^\ast  \w
-
\w  T^\ast \ww \rk
\tilde e_l^j
 \tilde f.
\enn
 Then
$W_{\pm}=\lk W_{\pm}^{ij}\rk_{1\leq i,j\leq 3}:\oplus^3\LR\rightarrow \oplus^3\LR$
 has the symplectic structure \kak{symplectic2} below.
Let \eq{symplectic}
{\Bbb W}=\MM {W_+}{\bar W_-}{W_-}{\bar W_+}:\bigoplus^2[\oplus^3\LR]\rightarrow \bigoplus^2[\oplus^3\LR].
\en
Using (4) and (5) of Lemma \ref{1},
it can be determined that ${\Bbb W}$ satisfies
\eq{symplectic2}
  {\Bbb W}^\ast J {\Bbb W}={\Bbb W} J {\Bbb W} ^\ast =J,
  \en
  where $$J:=\MM1 {0}{0}{-1},\quad
    {\Bbb W}^\ast:=\MM {W_+^\ast} {W_-^\ast}{\bar W_-^\ast}{\bar W_+^\ast}.$$
    This is equivalent to \kak{ccr2}.
Next we show \kak{hccr}.
Note that
$$
[\A _j, b_P(f,l)]=-m\frac{1}{\sqrt2}\lk \frac{e_j^l Q}{\sqrt\omega}, f\rk ,\quad
[\A _j, \bdd(f,l)]=+m\frac{1}{\sqrt2}\lk \frac{e_j^l \bar Q}{\sqrt\omega}, f\rk ,\quad j,l=1,2,3,
$$
and
\eqn
&&
[\hf, b_P(f,l)]=\frac{1}{\sqrt 2}\mmm
\lk
-\hat \A _j\lk \omega^2 T^\ast e_j^l\ww f
 \rk -i\hat{\dot{\A}}_j\lk T^\ast e_j^l\sqrt\omega f\rk \rk,\\
&&
[\hf, \bdd (f,l)]=\frac{1}{\sqrt 2}\mmm
\lk
\hat \A _j\lk \omega^2 \bar T^\ast \tilde e_j^l\ww \tilde f
 \rk -i\hat{\dot{\A}}_j\lk \bar T^\ast \tilde e_j^l\sqrt\omega \tilde f\rk \rk.
 \enn
 Then we have
 \eqn
 &&
 [H_{\pp,P}, b_P(f,l)]\\
 &&=
  -\frac{e}{m} \mmm P_j[\A_j,b_P(f,l)]+\frac{e^2}{m} \mmm \A _j[\A _j,b_P(f,l)]+[\hf,b_P(f,l)]\\
 &&=
  \mmm\lk
 -\frac{e}{m}P_j
 \frac{1}{\sqrt 2}(-m)\lk \frac{e_j^lQ}{\sqrt\omega},
   f\rk +\frac{e^2}{m}
   \A _j(-m)\frac{1}{\sqrt2}
 \lk \frac{e_j^lQ}{\sqrt\omega},  f\rk \right.\\
 &&\hspace{2cm}\left.  +
 \frac{1}{\sqrt 2}\lk
-\hat \A _j\lk \omega^2 T^\ast e_j^l\frac{
  f}{\sqrt\omega}\rk -i\hat{\dot{\A}}_j
  \lk T^\ast e_j^l\sqrt\omega  f\rk \rk\rk \\
&&=
 \frac{1}{\sqrt{2}}\mmm  \lk -\hat \A _j
 \lk T^\ast e_j^l\sqrt\omega \omega f\rk -i\hat{\dot{\A}}_j
 \lk T^\ast e_j^l\frac{1}{\sqrt\omega} \omega f\rk +P_j\lk \frac{e e_j^l Q}{\omega^{3/2}},\omega f\rk \rk\\
&&=
-b_P\lk \omega f,l\rk .
 \enn
 Here we used the fact that $\d \omega^2 T^\ast e_j^l \ww f =T^\ast e_j^l\sqrt\omega \omega f
  -e^2\lk \frac{e_j^l Q}{\sqrt\omega},
   f\rk \vp$. See (2) of Lemma \ref{1}.
Then \kak{hccr} follows.
\qed
\bl{3}There exists a unitary operator $U_P:\ffff_{\pp}\rightarrow \ffff_{\pp}$
such that
\eq{up}
U_P\f \bss_P(f,j)U_P=\ass(f,j),\quad f\in\LR.
\en
\el
\proof
Since $W_-$ is a Hilbert-Schmidt operator on $\oplus^3\LR$,
there exists a canonical linear transformation
$U({\Bbb W})
$ associated with
${\Bbb W}$ \cite{rui}
such that for $F=(F_1,F_2,F_3)\in \oplus^3\LR$, $U({\Bbb W})\f B^\sharp(F) U({\Bbb W})=\ass(F)$, where
$$\vvv {B(F)\\ B^\ast(F)}=\vvv{\iii \jjj \lk a(W_+^{ij}F_j,i)+\add(W_-^{ij}F_j,i)\rk
\\
\iii \jjj \lk a(\bar W_-^{ij}F_j,i)+\add(\bar W_+^{ij}F_j,i)\rk.
}
$$
Since
$$\vvv{b_P(F)\\\bdd(F)}=
\vvv{B(F)\\ B^\ast (F)}+
\vvv{\jjj \sum_{l=1}^3  (P_l L_l^j, F_j)
\\
\jjj\sum_{l=1}^3  (P_l \bar L_l^j,F_j)},$$
we see that
\eq{usp}
U_P:=S(P) U({\Bbb W})
\en
satisfies \kak{up},
where $S(P)$ is the unitary operator given by
\eq{sp2}
S(P):=
 \exp\left(
 \frac{e}{\sqrt2} \mmm \sum_{l=1}^3
  \frac{P_j}{\mass}
  \lk a\lk \frac{e_j^l\vp}{\omega^{3/2}},l\rk-
   \add\lk
   \frac{e_j^l\vp}{\omega^{3/2}},l\rk\rk\right) .
\en
Hence the lemma is complete.
\qed
Let
\eq{upp}
\Omega_{\pp, P}:=U_P\Omega_{\pp}\in \ffff_{\pp},
\en
where $\Omega_{\pp}=\Omega_1 \otimes \Omega_2 \otimes \Omega_3 \in \ffff_{\pp}$.

\bl{unitary}
(1)
It follows that $U_P$ maps $D(\hf^\pp)$ onto $D(H_{\pp,P})(=D(\hf^\pp))$
and
\eq{main22}
U_P\f  H_{\pp,P}U_P=\hff+E_{\pp}(P),
\en
where
\eq{hisp}
E_{\pp}(P)=\frac{1}{2\mass}|P|^2+
\frac{3}{2\pi}
\int_{-\infty}^\infty
\frac{e^2 s^2 \|\vp/(s^2+\omega^2)\|^2}
{m+e^2\|\vp/\sqrt{s^2+\omega^2}\|^2}ds.
\en
(2)
$\Omega_{\pp, P}$ is the unique ground state of $H_{\pp,P}$.
(3)
$\Omega_{\pp, P}$ is the unique vector
such that
$b_P(f,j)\Psi=0$, $j=1,2,3$,  for all $f\in\LR$.
\el
\proof
\kak{hisp} is a minor modification of \cite{hisp1}. Suppose that $b_P(f,j)\Psi=0$ for all $f\in\LR$ and $j=1,2,3$.
Then we have $U_P a(f,j)U_P\f \Psi=0$ and
$U_P\f \Psi=\alpha  \Omega_\pp$, $\alpha\in\CC$.
Hence (3) follows.
By the commutation relation $[H_{\pp,P}, b_P(f,j)]=-b_P(\omega f, j)$ we
can see that
$b_P(f,j) e^{itH_{\pp,P}}\Omega_{\pp, P}=e^{itH_{\pp,P}} b_P(e^{it\omega} f, j)\Omega_{\pp, P}=0$
for all $f\in\LR$. Then
there exists a real number $c$ such that
$e^{itH_{\pp,P}}U_P\Omega_{\pp}=
e^{it c} \Omega_{\pp,P}$
and
$$U_P\f e^{itH_{\pp,P}} U_P\PI \add(f_i,j_i)\Omega_{\pp}
=e^{itc }
\PI \add(e^{it\omega} f_i,j_i)
\Omega_{\pp}.$$
Since the linear hull of
  $\PI\add(f_i,j_i)\Omega_{\pp}$ is dense in $\ffff_{\pp}$,
$$U_P\f e^{itH_{\pp,P}} U_P=
e^{it(\hff+c)}$$ and $c=
E_{\pp}(P)$ follows.
Then (1) is valid. (2) follows from (1).
\qed

\subsection{Total Hamiltonian}
In the previous sections we diagonalized $H_{\pp,P}$ and $H_0$. Thus we can also diagonalize the total Hamiltonian.
Define
\eq{hp}
H_P := H_{\pp,P}\otimes 1 +1\otimes H_0
\en
with domain
\eq{hpdomain}
D(H_P)=D(\hf)
\en
for $P\in\BR$ on $\ffff=\ffff_{\pp}\otimes\ffff_0$.
\bp{eta3}
$H_P$ is $\eta$-self-adjoint.
In particular $H_P$ is closed and
an arbitrary  core of $\hf$ is also a core of $H_P$.
\ep
\proof
The proof is similar to that of Proposition \ref{etaself}.
\qed
We have already shown that
$H_P $ can be diagonalized by making use of the $\eta$-unitary $U_P\otimes V_0$.
We summarize with a proposition.
Let
\eq{psi}
\Psi_P:=\Omega_{\pp, P}\otimes V_0\Omega_0.\en

\bp{dia23}
It follows that
\eq{dia4}
\ov{
(U_P \otimes V_0 )\f
H_P
(U_P \otimes V_0 )\lceil_{D(\hf^\pp)\otimes
\ffff_{0,\rm fin}}}
=
(\hff+E_{\pp}(P))  \otimes 1+1\otimes (\hf^0+E_0).
\en
Moreover
\eq{equa}
H_P \Psi_P=(E_0+E_{\pp}(P))\Psi_P.\en
\ep
\proof
On
$D(\hf^\pp)\otimes \ffff_{0,\rm fin}$ it follows that
\eq{ana2}
(U_P \otimes V_0 )\f
H_P
(U_P \otimes V_0 )
=
(\hff+E_{\pp}(P))  \otimes 1+1\otimes (\hf^0+E_0).
\en
Since
$D(\hf^\pp)\otimes \ffff_{0,\rm fin}$ is a core of the right hand side of
\kak{ana2},  the proposition follows.
\qed

\begin{remark}
The operator $U_P\otimes V_0$ is $\eta$-unitary.
\end{remark}

Next we will indicate the diagonalization of Hamiltonian $H$.
Note that
$$H=\lk
\int_\BR^\oplus H_{\pp,P}dP\rk \otimes 1+1\otimes H_0.$$
 Define the $\eta$-unitary operator by
\eq{uuu}
\mathscr{U}: =\lk \int^\oplus_\BR U_P dp\rk\otimes V_0=
U(-i\vec\nabla) \otimes V_0:\hhh\rightarrow\hhh.
\en
Thus we have the proposition.
\bp{dia22}
$\mathscr{U}$ is $\eta$-unitary on $\hhh$ and
\eq{dia5}
\ov{\mathscr{U}\f H \mathscr U
\lceil_{D(H_\pp)\otimes \ffff_{0,\rm fin}}}
=-\frac{1}{2\mass}
\Delta\otimes 1+1\otimes\hf+
\frac{3}{2\pi}
\int_{-\infty}^\infty
\frac{e^2 s^2 \|\vp/(s^2+\omega^2)\|^2}
{m+e^2\|\vp/\sqrt{s^2+\omega^2}\|^2}ds+E_0.
\en
\ep
\proof
It can be seen that
\eq{ana3}
U(-i\vec\nabla)\f H_{\pp}U(-i\vec\nabla)=\hf^\pp+E_\pp(-i\nabla).
\en
Then by \kak{dia}
\eq{dia55}
\mathscr{U}\f H \mathscr U
=E_\pp(-i\nabla)\otimes 1  +1\otimes\hf+E_0
\en
follows on
$D(H_\pp)\otimes \ffff_{0,\rm fin}$.
Since $D(H_\pp)\otimes \ffff_{0,\rm fin}$ is the core of
the right hand side of \kak{dia55},
the proposition follows.
\qed

\subsection{Heisenberg operators}
In this section we construct
a Heisenberg operator $X(t)$
as a solution to the Heisenberg equation
\eq{sa46}
\frac{d}{dt} X(t)=i[H, X(t)],\qquad X(0)=X,
\en
where we notice that $H$ is not self-adjoint but $\eta$-self-adjoint.
In particular the solution to \kak{sa46}
{\it cannot} always be expressed as $e^{itH}X(0)e^{-itH}$.
So care is required in defining the Heisenberg operator
associated with the non-self-adjoint operator $H$.

Set
\eq{pause}
 \mathscr{H}_\mathscr{S} = \mathscr{S}(\mathbb{R}^3) \hat\otimes \mathscr{F}_\mathscr{S},
 \en
 where $\hat\otimes$ denotes the algebraic tensor product and
	\[ \mathscr{F}_\mathscr{S}
		= {\rm L. H.}\left\{\left. \prod_{i=1}^na^*(f_i,\mu_i)\Omega, \Omega \right|
			f_i \in \mathscr{S}(\mathbb{R}^3), \mu_i=0,1,2,3, i=1, \cdots, n, n\geq 1 \right\}. \]
The dense subspace $\mathscr H_\mathscr{S}$ is useful to study algebraic computations of operators,
since $\mathscr{H}_\mathscr{S} \subset D(H^n)$
for all $n \geq 1$.

\begin{definition}\label{DHO+}{\bf (Heisenberg operators)}
      $X(t)$, $t\in\RR$,  is called the Heisenberg operator associated with $H$ with the initial condition $X(0)=X$
 if and only if
\begin{enumerate}
\item[(1)] For each $t \in \mathbb{R}$,
$X(t)$ is  closed  and
 $\mathscr{H}_\mathscr{S}$ is its core.
\item[(2)] For each $\Psi, \Phi \in \mathscr{H}_\mathscr{S}$,
$H\Phi \in D(X(t))$ and
$( \Psi| X(t) \Phi )$
is differentiable with
	\[ \label{WHE}\frac{d}{dt}( \Psi| X(t) \Phi )
		= i \left(
( H\Psi | X(t) \Phi ) - ( \Psi | X(t) H\Phi ) \right). \]
\item[(3)]
For each $ \Psi,\Phi\in
\mathscr{H}_\mathscr{S}$,
the function
$( \Psi| X(\cdot) \Phi )$ on $\RR$
   can be analytically continued to some domain $\mathscr O\subset\CC$,
which is also denoted by $( \Psi| X(z) \Phi )$ for $z\in\mathscr O$.
Furthermore, for all $n\geq1$,
$H^n\Phi \in D(X)$ and
	\[ \left.
\frac{d^n}{dz^n}( \Psi| X(z) \Phi ) \right |_{z=0}
		= i^n ( \Psi | {\rm ad}^n(H)X \Phi ),
\]
where $( \Psi | {\rm ad}^n(H)X \Phi ) $ is defined by
	\[ ( \Psi | {\rm ad}^n(H)X \Phi )
		= \sum_{j=0}^n \frac{n!}{j!(n-j)!}(-1)^{n-j} ( H^n \Psi | XH^{n-j}\Phi ),
			\quad \Psi, \Phi \in \mathscr{H}_\mathscr{S}.  \]	
\end{enumerate}
\end{definition}
  (2) of Definition \ref{DHO+} is a realization of the Heisenberg equation \kak{sa46} in the weak sense.
(3) ensures the uniqueness of the Heisenberg operator.
See \cite{suz2} for the detail.

Now let us consider the Heisenberg operators
with the initial conditions
$X=p, q, A_\mu(f)$ and $\dot \A_\mu(f)$, where
$p=-i\nabla$ and $q=x$.
Define the operator $\bss(f,j)$ on $\hhh=\LR\otimes\fff$
by $\bss_P(f,j)$ with $P\in\BR$ replaced by
$p$, i.e.,
\eqnn
&&\label{bb1}
 b(f,l):=\frac{1}{\sqrt{2}}
\mmm \lk \hat \A _j\lk
T^\ast e_j^l\sqrt\omega f\rk
+i\hat{\dot{\A}}_j\lk
T^\ast e_j^l\frac{1}{\sqrt\omega} f\rk
-p_j \lk
\frac{ee_j^l Q}{\omega^{3/2}},f\rk
\rk,\quad \quad  \quad \quad \\
&&\label{bb2}
 b^\ast  (f,l):=
\frac{1}{\sqrt{2}} \mmm \lk \hat
\A _j\lk
\bar T^\ast \tilde e_j^l\sqrt\omega \tilde f\rk
-i\hat{\dot{\A}}_j\lk
\bar T^\ast \tilde e_j^l\frac{1}{\sqrt\omega} \tilde f
\rk
-
p_j \lk
\frac{ee_j^l \bar Q}{\omega^{3/2}},f\rk
 \rk. \ennn
Define the operators $\A_j(f,t)$, $\dot{\A}_j(f,t)$, $\A_0(f,t)$, $\dot{\A}_0(f,t)$, $p_j(t)$ and $q_j(t)$, $j=1,2,3$,
by
\begin{eqnarray}
&&
\A _j(f,t)
=
\frac{1}{\sqrt2}\sum_{l=1}^3
\lk
b^\ast \lk
e^{i\omega t}
\ww e_j^l\ov T \hat f ,l\rk
 +b\lk
 e^{-i\omega t}
 \ww e_j^l T\tilde{\hat f},l\rk
 \rk \non \\
 &&\hspace{8cm} -\frac{e}{\mass}
\lk
 \frac {\vp}{\omega^{3/2}},\frac{\hat f}{\sqrt\omega}\rk
 p_j,
 \label{rib1}
 \\
 &&\dot{\A}_j(f,t)
=
\frac{i}{\sqrt2}
\sum_{l=1}^3
\lk
b^\ast \lk
e^{it\omega}
\ww e_j^l\ov T \omega \hat f ,l\rk
-b\lk
e^{-it\omega} \ww e_j^l T \omega \tilde {\hat f},l\rk
\rk,
\label{rib2}\\
&&
\A _0(f,t) = \frac{1}{\sqrt{2}} \lk
\add\lk
e^{it\omega}\frac{1}{\sqrt\omega} \hat f,0\rk
 +
a\lk
e^{-it\omega}\frac{1}{\sqrt\omega} \tilde{\hat f},0\rk
\rk
 \non \\
&&
\label{ztnew}
			 \hspace{4cm}
 - \frac{e}{2}
\lk
\frac{\vp}{\omega^{3/2}},
(e^{it\omega}-1)\frac{\hat f}{\sqrt\omega}+
(e^{-it\omega}-1)\frac{\tilde{\hat f}}{\sqrt\omega}\rk
,
\\
&&
\dot{\A }_0(f,t) = \frac{i}{\sqrt{2}}
\lk
\add\lk
e^{it\omega} \sqrt\omega \hat f,0\rk
-
a\lk
e^{-it\omega} \sqrt\omega \tilde{\hat f},0\rk
\rk\non\\
&&\hspace{6cm}
\label{tzznew}
 -  \frac{ie}{2}\lk
 \frac{\vp}{\sqrt\omega}, e^{it\omega}\hat f-e^{-it\omega}\tilde{\hat f}\rk
 ,\\
&&\label{iraira1}
p_j(t)=p_j,\\
&&\label{iraira2}
q_j(t)=q_j+\frac{t}{m}\lk 1+\frac{e^2}{\mass}\|\vp/\omega\|^2\rk p_j
\non \\
&&\hspace{1cm}+e\frac{i}{\sqrt 2}\sum_{l=1}^3\lkk
b^\ast\lk
(e^{i\omega t}-1)e_j^l\frac{\bar Q}{\omega^{3/2}},l\rk
-
b\lk(e^{-i\omega t}-1)e_j^l
\frac{Q}{\omega^{3/2}},l\rk\rkk.
\end{eqnarray}
\begin{remark}
   All the operators above are defined on $\hhh=\LR\otimes \fff$, but  we omit the tensor notation $\otimes$ for notational  convenience. For example we used $p_j$ for $p_j\otimes 1$ and
$\ass(f)$ for $1\otimes \ass(f)$, etc.
\end{remark}
Since
the operators
$\A_j(f,t)\lceil_{\mathscr H_{\mathscr S}}$,
 $\dot{\A}_j(f,t)\lceil_{\mathscr H_{\mathscr S}}$, $\A_0(f,t)\lceil_{\mathscr H_{\mathscr S}}$, $\dot{\A}_0(f,t)\lceil_{\mathscr H_{\mathscr S}}$, $p_j(t)\lceil_{\mathscr H_{\mathscr S}}$ and $q_j(t)\lceil_{\mathscr H_{\mathscr S}}$ are closable, we denote their closed extensions simply by
 $\A_j(f,t)$,
 $\dot{\A}_j(f,t)$, $\A_0(f,t)$, $\dot{\A}_0(f,t)$, $p_j(t)$ and $q_j(t)$, respectively.

\bt{h-o}
Let $f\in C_0^\infty(\BR)$. Then $\A_\mu(f,t)$ (resp. $\dot\A_\mu(f,t)$, $p(t)$,
 $q(t)$) is the Heisenberg operator associated with $H$
with the initial condition
 $\A_\mu(f,0)=\A_\mu(f)$ (resp. $\dot\A_\mu(f,0)=\dot\A_\mu(f)$, $p(0)=p$, $q(0)=q$).
\et
The heuristic idea of the proof of Theorem \ref{h-o} is as follows.
We note that
$\A_0$ commutes with $H_{\pp}$
and
$\A_j$,  $p$, $q$
commute with $H_0 $.
So
the informal solutions to the Heisenberg equation
\kak{sa46} for the initial condition $X=q, p, \A_j(f)$ and $\dot \A_j(f)$ are  given by
 \eq{tp}
\widetilde q_j(t):=   e^{itH_{\pp}}q_je^{-itH_{\pp}}, \quad
   \widetilde p_j(t):=e^{itH_{\pp}}p_je^{-itH_{\pp}}
  \en
  and
	\eq{ta}
  \widetilde  \A_j(f,t):= e^{itH_{\pp}}\A_j (f)e^{-itH_{\pp}}, \quad  \widetilde {\dot {\A}}_j(f,t):=e^{itH_{\pp}}\dot{\A }_j(f)e^{-itH_{\pp}}
 \en
 for $j=1,2,3$, respectively.

Moreover since $\A_0(f)$ and $H_\pp$ commute,
in order to construct
 the Heisenberg operators with initial conditions
 $\A_0(f)$ and $\dot \A_0(f)$,
it is enough to find  the Heisenberg operators
$\widetilde \A_0(f,t)$ and
$\widetilde{\dot \A}_0(f,t)$
associated with $H_0$ instead of $H$:
\eq{ohei2}
\frac{d}{dt}  \widetilde \A_0(f,t)=i[H_0,\widetilde \A_0(f,t)],\quad
\frac{d}{dt}  \widetilde{\dot \A}_0(f,t)=i[H_0,
\widetilde{\dot \A}_0(f,t)].
\en
We will show that $\widetilde \A_\mu(f,t)=\A_\mu(f,t)$,
$\widetilde {\dot \A}_\mu(f,t)=\dot{\A}_\mu(f,t)$, $\widetilde p_j(t)=p_j(t)$ and $\widetilde q_j(t)=q_j(t)$ on $\hhh_{\mathscr S}$
and prove that they are the Heisenberg operators.

{\it Proof of Theorem \ref{h-o}}

By the assumption $f\in C_0^\infty(\BR)$ and
(2) of Assumption \ref{ass2},
$\|e^{+\epsilon\omega} \vp\|_\infty<\infty$,
it is immediate
that
$\ab{\A_\mu(f,t)}$ (resp. $\ab{\dot\A_\mu(f,t)}$, $\ab{p_j(t)}$ and $\ab{q_j(t)}$) can be analytically
continued to some domain with respect to $t$.
So it is enough to check (2) of Definition \ref{DHO+}.

We see directly that
\kak{ztnew} and \kak{tzznew}
satisfy
the Heisenberg equation
\kak{sa46}.

Next we examine \kak{rib1} and \kak{rib2}.
The vector potentials
$\A _j(f)$ and $\dot{\A}_j(f)$
can be  expressed by means of $\bdd$ and $b_P$.
In fact direct computation shows that
\eqnn
\A _j(f)
&=&
\frac{1}{\sqrt2}
\sum_{l=1}^3
\lk
\bdd\lk
\ww e_j^l\ov T \hat f ,l\rk
+
b_P\lk
\ww e_j^l T\tilde{\hat f},l\rk
\rk -eP_j
\lk
\frac{\vp}{{\mass}\omega^{3/2}},
\frac{\hat f}{\sqrt\omega}\rk,\non\\
&&\label
{ta2} \\
\dot{\A}_j(f)&=&
\frac{i}{\sqrt2}\sum_{l=1}^3
\lk
\bdd\lk
\ww e_j^l\ov T \omega \hat f ,l\rk
-b_P\lk
\ww e_j^l T\omega \tilde{\hat f},l\rk
\rk.
\ennn
Note that
$$e^{itH_{\pp,P}}b_P(f,j) e^{-itH_{\pp,P}}
=b_P(e^{-i\omega t}f,j),\quad
e^{itH_{\pp,P}}\bdd (f,j) e^{-itH_{\pp,P}}
=\bdd (e^{i\omega t}f,j).$$
Then
$\A_j(f,t,P)=e^{itH_{\pp,P}}\A_j(f) e^{-itH_{\pp,P}}$ is given by
\begin{eqnarray}
&&
\A _j(f,t,P)
=
\frac{1}{\sqrt2}\sum_{l=1}^3
\lk
b^\ast \left(e^{i\omega t}
\ww e_j^l\ov T \hat f ,l\right)
 +b\left(
 e^{-i\omega t}
 \ww e_j^l T\tilde{\hat f},l\right)\rk \non\\
 &&\label{kok1}
\hspace{5cm}-e
 P_j
 \left(
 \frac{\vp}{{\mass}\omega^{3/2}},\frac{\hat f}{\sqrt\omega}\right).
 \end{eqnarray}
Thus \kak{rib1}
   satisfies the Heisenberg equation \kak{sa46}.

Finally we analyze  \kak{iraira1} and \kak{iraira2}.
By the dipole approximation, $p$ and $e^{itH_\pp}$ commute. Then $e^{itH_\pp}p_je^{-itH_\pp}=p_j$.
Thus it is trivial to see that $p_j$ is the Heisenberg operator.
We see that
\eqn
q_j(t)\Psi
&=&  \int_0^t i e^{isH_\pp}[H_\pp,q_j]e^{-isH_\pp}\Psi ds+q_j\Psi\\
&=&\frac{1}{m}
\int^t_0
e^{isH_\pp}(p_j-e\A_j)e^{-isH_\pp}\Psi ds+q_j\Psi\\
&=&
\frac{t}{m}p_j\Psi +q_j\Psi-\frac{e}{m}\int_0^t \A_j(\varphi,s)\Psi ds.
\enn
By \kak{rib1} we can compute
$\frac{e}{m}\int_0^t \A_j(\varphi,s)\Psi ds$ as
\eqn
&&\frac{e}{m}\int_0^t \A_j(\varphi,s)\Psi ds\\
&&=
\frac{e}{m}
\frac{i}{\sqrt 2}\sum_{l=1}^3
\lkk
b^\ast\lk
(e^{i\omega t}-1)e_j^l\frac{\bar Q}{\omega^{3/2}},l\rk
-
b\lk
(e^{-i\omega t}-1)e_j^l
\frac{Q}{\omega^{3/2}},l\rk\rkk
+\frac{t}{m} \frac{e^2}{\mass}\|\vp/\omega\|^2 p_j.
\enn
Then \kak{iraira2} satisfies the Heisenberg equation \kak{sa46}.
\qed

We utilize \kak{tp}-\kak{ohei2}, 
Maxwell's equations and Newton's equation of motion for NRQED.
For all $f \in \mathscr{S}(\mathbb{R}^3)$,
	\begin{align*}
		& \frac{d^2}{dt^2}{\vA }(f,t) - \vA (\Delta f,t)	 
			= \int_{\mathbb{R}^3}\vec J(x,t)f(x)dx ,  \\
		& \frac{d^2}{dt^2}{\vA }_{0}(f,t) - \vA _{0} (\Delta f,t)
			= \int_{\mathbb{R}^3}\rho(x,t)f(x)dx,
	\end{align*}
and
	\[ \frac{d^2}{dt^2}q(t) = -e \dot{\vA }(\varphi,t), \]
where
\eqn
&&
\vec J(x,t) = \frac{e}{2m}\lk
 \varphi(x)(p(t)-e\vA (\varphi,t))+
(p(t)-e\vA (\varphi,t)) \varphi(x)\rk,\\
&&
\rho(x,t) =
e\varphi(x). \enn

\section{LSZ formalism and asymptotic completeness}
 We shall
 construct the asymptotic field
 $\add_{P,\pm}(f,\mu)$ by the LSZ method in this section.
Let
	\begin{eqnarray}
		&& a_{P,t}(f,j) := i\mmml
(\dot{\A}_l(f_t^{l,j},t,P)-
\A _l( \dot f_t^{l,j},t,P)),\quad j=1,2,3, \\
		&& a_{P,t}(f, 0) := i(\dot{\A}_{0}
(f_t^0, t)-{\A}_{0}(\dot f_t^0, t)),
	\end{eqnarray}
for $f \in \LR$,
where both $\A_0(f,t)$ and $\dot \A_0(f,t)$ are regarded as operators in $\ffff$,
$\A _l( f,t,P)=
e^{itH_{\pp,P}}
\A _l( f)e^{-itH_{\pp,P}}$,
 $\dot \A _l( f,t,P)=
 e^{itH_{\pp,P}}
\dot \A _l( f)e^{-itH_{\pp,P}}$
and
\eqnn
\label{wak}
&&
{f}_t^0 =
F\f\lk
\frac{e^{it\omega}}{\sqrt{2\omega}}\tilde{f}\rk,
\quad
 f_t^{l,j}(k)= 
 F\f\lk
 \frac{e^{+it\omega}}{\sqrt{2\omega}}\tilde e_l^j \tilde{f}\rk,\quad j=1,2,3,\\
&&\dot{f}_t^0 =
F\f\lk
i\omega \frac{e^{it\omega}}{\sqrt{2\omega}}\tilde{f}\rk,
\quad
 \dot f_t^{l,j}(k)= F\f
 \lk
 i\omega \frac{e^{+it\omega}}{\sqrt{2\omega}}\tilde e_l^j \tilde{f}\rk,
\label{wakat}
\ennn
and
$F\f$ denotes the inverse Fourier transformation of $\LR$.
We also set
	\[ a^{\dagger}_{P,t}(f, \mu) = \lkk\begin{array}{ll}
\lk a_{P,t}(\bar f, \mu)\rk ^\ast,&\mu=j=1,2,3,\\
-\lk a_{P,t}(\bar f,0)\rk^\ast, & \mu=0.
\end{array}
\right.
\]
From the expression of  \kak{asss1} and \kak{asss2} it can be seen that
\eq{ath}
a_{P,t}(h,j)=
e^{itH_{\pp,P}}e^{-it\hff}a (h,j)
e^{it\hff}e^{-itH_{\pp,P}}
\en
for $j=1,2,3$ and
\eq{athz}
a_{P,t}(f,0)=a(f,0)-\frac{e}{\sqrt2}
\lk
\frac{\vp}{\omega^{3/2}}, (1-e^{it\omega})f
\rk
.
\en
From \kak{athz}, the strong limit of
$\ass_{P,t}(f,0)$ as $t\rightarrow \pm\infty$
is easily obtained.
In order to have an explicit form for
$a_{P,t}^\sharp (h,j)$, $j=1,2,3$,
it is enough to obtain
explicit forms for
$\A_l(f,t)$ and $\dot\A_l(f,t)$.
Fortunately this can be done using \kak{rib1} and \kak{rib2}.
In the next lemma we show that
the strong limits of $\ass_{P,t}(f,\mu)$
can be represented by
$b_P^\sharp(f,\mu)$ defined in
\kak{b100},\kak{b200}, \kak{b1} and \kak{b2}.

\bl{sca}
Let $\Psi \in \ffff_{\rm fin}$.
Then the strong limits
\eqnn
&&
a_{P,{\rm out/in}}(h,\mu)\Psi=s-\lim_{t\rightarrow \pm\infty}a_{P,t}(h,\mu)\Psi,\\
&&
a_{P,{\rm out/in}}^\dagger(h,\mu)\Psi=s-\lim_{t\rightarrow \pm\infty}a_{P,t}^\dagger(h,\mu)\Psi
\ennn
exist where $"out"$,  $"in"$  stand for $t\rightarrow +\infty$, $t\rightarrow -\infty$ respectively,
  and are given explicitly by
\eqnn
&& \label{s1}
a_{P,{\rm in}}(h,j)=b_P(h,j),\\
&&\label{ss1}
a_{P,{\rm in}}^\dagger(h,j)=b_P^\ast(h,j),\\
&& \label{s2}
a_{P,{\rm out}}(h, j) = \sum_{i=1}^3b_P(L^{ij}h,i),\\
&&\label{s3}
a^\dagger_{P,{\rm out}}(h, j) = \sum_{i=1}^3b_P^\ast (\bar{L}^{ij}h,i),\\
&&
\label{zeroo}
 a_{P,{\rm out}}(h,0)= b(h,0)=a_{P,{\rm in}}(h,0),\\
&&
\label{zero}
 a_{P,{\rm out}}^\dagger  (h,0) = b^\dagger(h,0)=a_{P,{\rm in}}^\dagger (h,0),
 \ennn
 where
$ L^{ij}h= \delta_{ij}h-i\pi e^2Q\hat{\varphi}\omega\mmml e^i_l[e^j_lh] $
and
$[f](k):=\int_{S_2} f(|k|S )dS$, $dS=\sin \theta d\theta d\phi$, $0\leq \theta\leq 2\pi$, $0\leq \phi\leq \pi$.
\el
\proof
The proof is parallel with \cite{a83b}.
\kak{zeroo} and \kak{zero}
can be proven by the Riemann-Lebesgue lemma.
We shall prove \kak{s1}-\kak{s3}.
By Theorem \ref{h-o}  we have
\eqn
&&
i\mmml(\dot{\A}_l(h_t^{l,j},t,P)-\A _l( \dot h_t^{l,j},t,P))\\
&&=
\iii
\lk
\bdd
(e^{-it\omega}(\bar W_-^{ji})^\ast
e^{+it\omega}\hat h ,i)
+
b_P(e^{-it\omega} (W_+^{ji})^\ast  e^{+it\omega}
\hat h,i)\rk +const.
\enn
Since we can see that $W_-^{ji}$ is an integral operator with kernel in $L^2(\BR\times \BR)$, it is a Hilbert-Schmidt operator. Then
$\|(\bar W_-^{ji})^\ast
e^{it\omega}h\|\rightarrow 0$ as $t\rightarrow \pm\infty$. Hence
$\bdd
(e^{-it\omega}(\bar W_-^{ji})^\ast
e^{it\omega}\hat h ,i)
\rightarrow 0$ as $t\rightarrow \pm\infty$.
Next we shall estimate
$b_P(e^{-it\omega} (W_+^{ji})^\ast  e^{it\omega}
\hat h,i)$.
Let $X_{ij}(t)=e^{-it\omega} (W_+^{ji})^\ast  e^{it\omega}
h$. Then
\eqn
\frac{d}{dt}X_{ij}(t)
&=&
\frac{-i}{2}
\mmml
e^{-it\omega} e^i_l\left[
\omega, \ww T\w+\w T\ww\right]
e_l^j
e^{it\omega}
h\\
&=&
\frac{-i}{2}
\mmml
e^{-it\omega} e^i_l[\omega^2, T]
e_l^j
e^{it\omega}
h
\\
&=&
\frac{-ie^2}{2}
\mmml
 \frac{e^{-it\omega} e^i_l Q}{\w}\lk
\frac{e_l^j e^{-it\omega}\vp}{\w},h\rk:=
\frac{-ie^2}{2}\varrho_{ij}(t,\cdot).
\enn
 Then
\eq{int156}
X_{ij}(t)=(W_+^{ji})^\ast h
+\frac{-ie^2}{2}
 \int_0^t ds \varrho_{ij}(s,\cdot).
\en
Since
$\d \left|\lk
\frac{e_l^j  e^{-is\omega}\vp}{\w},h\rk\right|\leq {\rm const.}/s^2$,
the integral of the right-hand side of \kak{int156} as $t\rightarrow \pm\infty$ is well defined.
First we investigate the case $t\rightarrow -\infty$.
Then
\eqn
&&
\frac{-ie^2}{2}
 \int_0^{-\infty}  ds
\varrho_{ij}(s,k)\\
&&
=
\frac{ie^2}{2}
\mmml  \lim_{\epsilon\downarrow 0}
 \int_{-\infty}^0 ds\int dk'
 e^{-is(\omega(k)-\omega(k')+i\epsilon)}
\frac{e_l^i(k)e_l^j(k')Q(k)\vp(k')h(k')}{\sqrt{\omega(k)}\sqrt{\omega(k')}}
\\
&&=
-\frac{e^2}{2}
\mmml
\lim_{\epsilon\downarrow 0}
\int dk'
\frac{e_l^i(k)e_l^j(k')Q(k)\vp(k')h(k')}
{(\omega(k)-\omega(k')+i\epsilon)
\sqrt{\omega(k)}\sqrt{\omega(k')}}\\
&&=
-\frac{e^2}{2}
\mmml
\lim_{\epsilon\downarrow 0} \int dk'
\frac{(\omega(k)+\omega(k'))
e_l^i(k)e_l^j(k')Q(k)\vp(k')h(k')}
{(\omega(k)^2-\omega(k')^2+i\epsilon)
\sqrt{\omega(k)}\sqrt{\omega(k')}}\\
&&=-(W_+^{ji})^\ast h(k) +\delta_{ij} h(k).
\enn
 Hence
$\lim_{t\rightarrow -\infty} X_{ij}(t)h=\delta_{ij}h$
and \kak{s1}, i.e., $a_{{\rm in}}(h,j)=b_P(h,j)$ follows.
Next we show that
\eq{s4}
\lim_{t\rightarrow +\infty} X_{ij}(t)h=
\delta_{ij}h-ie^2\pi Q\vp\omega e_\mu^i[he_j^l]
.
\en
We have
\eqn
\lim_{t\rightarrow +\infty} X_{ij}(t)h=(W_+^{ji})^\ast h+\frac{-ie^2}{2}
 \int_0^\infty  ds
\varrho_{ij}(s,\cdot)
=\frac{-ie^2}{2}\int_{-\infty}^\infty ds
\varrho_{ij}(s,\cdot)+
\delta_{ij}h.
 \enn
Since, by the Fourier transformation, we have
$$ \frac{-ie^2}{2}\mmml
\int_{-\infty}^\infty
 \frac{e^{-is\omega} e^i_l Q}{\w}\lk
\frac{e_l^j e^{-is\omega}\vp}{\w},h\rk ds
=-ie^2\pi Q\vp\omega \mmml e_l^i[he_l^j],$$
\kak{s4}
and then
\kak{s2} follows.
\kak{s3} is similarly proven.
Then the proof is complete.
\qed


In what follows "$\oi$" stands for "out" or "in".
Next we consider
the asymptotic field $\ffff_P^\oi $
and construct the scattering operator $S$ connecting $\ffff_P^{{\rm in}}$ and
$\ffff_P^{{\rm out}}$.
We denote by $\ffff _{P,{\rm fin}}^\oi =
\ffff _{P,{\rm fin}}^\oi
$
the linear hull of the set
$$
 \lkk\skima  \PI a_{P,\oi} ^{\dagger}(h_i, \mu_i)  \Psi_P, \Psi_P
		\left|
 h_i \in \mathscr{S}(\mathbb{R}^3), \mu_i = 0,1,2,3,~ i=1, \cdots n,~ n\geq 1\skima \right. \rkk
$$
and by $\ffff_P ^\oi $ the closure of
$\ffff _{P,{\rm fin}}^\oi $ in $\ffff$.
In the next lemma, commutation relations are established.
\begin{lemma} \label{suzuki}
The following commutation relations hold for
$\ffff _{P,{\rm fin}}^\oi $:
	\begin{eqnarray}
&&\label{l3}
 [a_{P,\oi} (h,\mu), a_{P,\oi} ^{\dagger}(g,\nu)] = -g_{\mu\nu}(\bar h, g),\\
&&\label{l3.5}
 [a_{P,\oi} (h,\mu), a_{P,\oi}(g,\nu)] = 0 = [a_{P,\oi}^\dagger (h,\mu), a_{P,\oi}^\dagger(g,\nu)],\\
&&\label{l14}
 [H_P , a_{P,\oi} (h,\mu)] = -a_{P,\oi} (\omega h, \mu), \\
&&\label{l15}
 [H_P , a^{\dagger}_{P,\oi} (h,\mu)] = a^{\dagger}_{P,\oi} (\omega h, \mu)
	\end{eqnarray}and
$
a_{P,\oi} (h,\mu)\Psi_P=0
$
for all $h\in\LR$.
\end{lemma}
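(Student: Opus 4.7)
Every identity in the lemma follows by inserting the explicit representations of $a_{P,\oi}^\sharp$ from Lemma \ref{sca} and reducing to the CCR already established for $b_P^\sharp$ in Lemma \ref{2} and for $b^\sharp(\cdot,0)$ in \kak{bbcan}. Under the decomposition $\ffff\cong\ffff_\pp\otimes\ffff_0$, the transverse operators $a_{P,\oi}^\sharp(h,j)$, $j=1,2,3$, act only on the $\ffff_\pp$-factor, whereas $a_{P,\oi}^\sharp(h,0)=b^\sharp(h,0)$ acts only on the $\ffff_0$-factor. Hence every cross commutator (one index $0$, the other transverse) vanishes, matching $g_{\mu\nu}=0$ for $\mu\neq\nu$; the commutators with $H_P=H_{\pp,P}\otimes 1+1\otimes H_0$ split analogously into independent transverse and scalar pieces.

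For \kak{l3} and \kak{l3.5}, the "in" transverse case is immediate from \kak{ccr2}; the $\mu=\nu=0$ case is \kak{bbcan}, where the sign matches $-g_{00}=-1$. The "out" case with $\mu=0$ coincides with the "in" case by \kak{zeroo}-\kak{zero}. The only substantive step is the "out" transverse CCR: substituting \kak{s2}-\kak{s3} into \kak{ccr2} reduces \kak{l3} to the operator identity $\sum_{k=1}^{3}(L^{ki})^{*}L^{kj}=\delta_{ij}\mathbf{1}$ on $\LR$. Writing $L^{ij}=\delta_{ij}+M^{ij}$ with $M^{ij}h=-i\pi e^{2}Q\vp\omega\sum_{l}e^{i}_{l}[e^{j}_{l}h]$, this becomes $M^{ij}+(M^{ji})^{*}+\sum_{k}(M^{ki})^{*}M^{kj}=0$. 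The linear piece equals $i\pi e^{2}\vp^{2}\omega\,(D_{+}-D_{-})/|D_{+}|^{2}\cdot\sum_{l}e^{i}_{l}[e^{j}_{l}\cdot]$, via $\bar Q-Q=\vp(D_{+}-D_{-})/|D_{+}|^{2}$; the quadratic piece, using the angular average $\int_{S^{2}}e^{k}_{n}(S)e^{k}_{m}(S)dS=\tfrac{4\pi}{3}\delta_{nm}$ valid for each $k=1,2,3$, reduces to $4\pi^{3}e^{4}\vp^{4}\omega^{2}/|D_{+}|^{2}\cdot\sum_{l}e^{i}_{l}[e^{j}_{l}\cdot]$. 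Cancellation then amounts to the Plemelj-Sokhotski identity expressing $D_{+}-D_{-}$ as a multiple of $\vp^{2}\omega$.

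For \kak{l14} and \kak{l15}, the "in" case is immediate from \kak{hccr} and \kak{bbh}; the "out" transverse case additionally requires $[\omega,L^{ij}]=0$, which is evident since $Q,\vp,\omega$ are radial multipliers while $h\mapsto[e^{j}_{l}h]$ depends only on $|k|$. Finally, to prove $a_{P,\oi}(h,\mu)\Psi_{P}=0$ with $\Psi_{P}=\Omega_{\pp,P}\otimes V_{0}\Omega_{0}$: when $\mu\in\{1,2,3\}$ each summand has the form $b_{P}(\cdot,i)$ and annihilates $\Omega_{\pp,P}$ by Lemma \ref{unitary}(3); when $\mu=0$, writing $V_{0}=\exp B$ for the exponent $B$ in \kak{vz}, a single Baker-Campbell-Hausdorff step (noting that $[B,a(h,0)]$ is the scalar $-\tfrac{e}{\sqrt{2}}(\vp/\omega^{3/2},h)$, so the series truncates) gives $V_{0}a(h,0)V_{0}^{-1}=b(h,0)$, whence $b(h,0)V_{0}\Omega_{0}=V_{0}a(h,0)\Omega_{0}=0$. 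The principal technical hurdle is the "out" transverse unitarity identity for $L^{ij}$; the remaining claims are routine bookkeeping.
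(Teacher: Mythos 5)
Your proposal takes a genuinely different route from the paper on the one nontrivial point, so let me compare. For the ``out'' transverse CCR the paper does not verify the unitarity relation $\sum_k (L^{ki})^\ast L^{kj}=\delta_{ij}$ algebraically; instead it uses \kak{ath}, namely that $a_{P,t}(h,j)=e^{itH_{\pp,P}}e^{-it\hff}\,a(h,j)\,e^{it\hff}e^{-itH_{\pp,P}}$ is a conjugation of $a(h,j)$ by a genuinely unitary operator, so the CCR holds exactly at every finite $t$, and then one passes to the strong limit $t\to\pm\infty$ on $\ffff_{\rm fin}$. This costs nothing and sidesteps the $L$-identity entirely. (When the paper does need a property of $L$, in the proof of Theorem \ref{asy}, it derives $LL^\ast=1$ from the symplectic relation $W_+^\ast W_+-W_-^\ast W_-=1$ via the same $t\to\infty$ device, again without touching the angular integrals.) Your direct computation is a legitimate alternative in principle, but as written it contains a questionable step: you invoke $\int_{S^2}e^k_n(S)e^k_m(S)\,dS=\tfrac{4\pi}{3}\delta_{nm}$ ``for each $k=1,2,3$.'' That identity is canonical only for $k=3$ (where $e^3(S)=S$); for $k=1,2$ the frame is an arbitrary fixed choice and only the completeness relation $\sum_{k=1}^3 e^k_n e^k_m=\delta_{nm}$, hence $\sum_k\int_{S^2}e^k_n e^k_m\,dS=4\pi\delta_{nm}$, is available. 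Since your quadratic piece is summed over $k$ this may be patchable, but you would need to redo the coefficient bookkeeping using the summed identity rather than the per-$k$ one, and then confirm the Plemelj--Sokhotski cancellation against \kak{dead}, where $D_+-D_-=8\pi^2 i e^2\vp^2\omega$; none of that is carried out. The remaining items in your argument are fine and match the paper's intent: the cross commutators vanish because the scalar and transverse factors act on different tensor slots; the ``in'' transverse case is \kak{ccr2}; the $\mu=\nu=0$ case is \kak{bbcan} together with \kak{zeroo}--\kak{zero}; \kak{l14}--\kak{l15} follow from \kak{hccr}, \kak{bbh} and $[\omega,L^{ij}]=0$; and the annihilation of $\Psi_P$ follows from Lemma \ref{unitary}(3) in the transverse slot and from $V_0 a(h,0)V_0^{-1}=b(h,0)$ (a one-step BCH computation, valid because $\vp$ is real under Assumptions \ref{ass1} and \ref{ass2}(3)) in the scalar slot. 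In short: same decomposition, same treatment of the easy pieces, but a heavier and incompletely justified treatment of the only hard piece, which the paper disposes of with the unitary-conjugation-plus-limit argument.
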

\proof
\kak{l14} and \kak{l15} follow directly from the commutation relations between $H_P $ and $b^\sharp$.
The commutation relations in \kak{l3} for $\mu=0$ or $\nu=0$ are obtained by direct computation.
Other commutation relations can be proven by
$a_{P,\oi}(h,j):=\lim_{t\rightarrow\pm\infty}
a_{P,t}(h,j)=
e^{itH_{\pp,P}}e^{-it\hff}a (h,j)
e^{it\hff}e^{-itH_{\pp,P}}$
and a limiting argument.
\qed

We constructed the quadruple
\eq{tr1}
(\ffff_P^\oi , H_P , \{a_{P,\oi} (h,\mu), a^{\dagger}_{P,\oi} (h,\mu) | h \in L^2(\mathbb{R}^3) \}, \Psi_P)
\en
relevant to \kak{omega},  including the scalar potential.

\bt{asy}
{\bf (Asymptotic completeness)}
It follows that
$ \ffff_P ^{\rm in} = \ffff_P ^{{\rm out}}
= \ffff .
$
\et
\proof
Let
$$
\ffff ^\oi _{P,{\rm fin, TL}}=
		\lkk\skima  \PI a_{P,\oi} ^{\dagger}
(h_i, j_i) \Omega_{\pp, P}, \Omega_{\pp, P}
\left|
 h_i \in \mathscr{S}(\mathbb{R}^3), j_i = 1,2,3,~ i=1, \cdots n,~ n\geq 1\skima \right. \rkk
$$
and
$$
\ffff ^\oi _{{\rm fin}, 0}=		
		\lkk\skima  \PI
a_{P,\oi} ^{\dagger}(h_i, 0)
 V_0\Omega_0, V_0\Omega_0
	\left|
 h_i \in \mathscr{S}(\mathbb{R}^3),~ i=1, \cdots n,~ n\geq 1\skima \right. \rkk.
$$
Since
$\ffff ^\oi _{P,{\rm fin}}
		= \ffff ^\oi _{P,{\rm fin, TL}} \hat \otimes \ffff ^\oi _{\rm fin, 0}$,
we need only prove that $\ffff ^\oi _{P,{\rm fin, TL}}$ (resp. $\ffff ^\oi _{{\rm fin}, 0}$)
is dense in $\ffff _{\pp}$ (resp. $\ffff _0$).
We assume that there exists a vector $\Phi  \in \ffff _{\pp}$ such that
	\[ \lk
\PI a^\dagger_{P,{\rm in}}(h_i,j_i)
  \Omega_{\pp, P}, \Phi \rk=0 \]
for all $h_i$ and $j_i=1,2,3$.
By Lemma \ref{sca} and the relations $U_P\f b^\sharp(f,j)U_P=\ass(f,j)$,
we have
$$\lk
\PI a^\dagger(h_i,j_i)
\Omega_{\pp},U_P^{-1}\Phi \rk
=0$$
for all $h_i$ and $j_i=1,2,3$.
Thus $\Phi =0$, which yields that $\ffff^{\rm in}_{P,{\rm fin, TL}}$ is dense in $\ffff_{\pp}$.
Similarly, suppose that
$(\PI a^\dagger_{P,{\rm out}}(h_i,j_i)
 \Omega_{\pp, P},\Phi )=0$ for all $h_i$ and $j_i=1,2,3$.
Then we have
\eq{sp}
\sum_{i_1,...,i_n=1}^ 3
\lk
\PI a^\dagger(\bar L^{l_ij_i}h_i,l_i)
 \Omega_{\pp}, U_P^{-1} \Phi \rk
=0.
\en
Let $L=\lk L^{ij}\rk _{1\leq i,j\leq 3}\oplus^3\LR
\rightarrow \oplus^3\LR$.
We note that, as a consequence, $L=\lim_{t\rightarrow+\infty} e^{-it\omega}
W_+^\ast e^{it\omega}$.
From the symplectic structure
${\Bbb W}^\ast J {\Bbb W}={\Bbb W} J {\Bbb W} ^\ast =J$, it follows
that $W_+^\ast W_+-W_-^\ast W_-=1$. In particular it follows that
$e^{-it\omega}W_+^\ast W_+ e^{it\omega}=e^{-it\omega} W_-^\ast W_-e^{it\omega}+1$.
Thus
$$L L^\ast =\lim_{t\rightarrow +\infty}e^{-it\omega}
W_+^\ast W_+ e^{it\omega}=1,$$
since $e^{-it\omega} W_-^\ast W_-e^{it\omega}$ vanishes
as $t\rightarrow \pm\infty$.
Then $L$ has an inverse as an
operator from $\oplus^3\LR$ to itself and the linear hull of
vectors
of the form
$\PI a^\dagger(Lf_i)
 \Omega_{\pp}$ is dense in $\ffff_{\pp}$.
Hence \kak{sp} implies that $\Phi =0$ and $\ffff^{\rm out}_{P,{\rm fin, TL}}$ is dense in $\ffff_{\pp}$.

We prove that $\ffff _{{\rm fin},0}^\oi $ is dense in $\ffff _{0}$.
Denoting by $\ffff _{{\rm fin}, 0}$ the linear hull of the set
$$
		\lkk\skima  \PI b^{\dagger}(h_i, 0)
 V_0\Omega_0, V_0\Omega_0
\Big |
 h_i \in \mathscr{S}(\mathbb{R}^3),~ i=1, \cdots n,~ n\geq 1\skima \rkk,
$$
by Lemma \ref{sca}, we have $\ffff _{{\rm fin}, 0} = \ffff ^{\oi}_{{\rm fin}, 0}$,
and hence we need only prove that $\ffff _{{\rm fin},0}$ is dense in $\ffff _0$.
Setting
$$
		\mathcal{D}_0 =
		\lkk\skima
\PI \addd(h_i, 0)
 \Omega_0, \Omega_0
		\Big | h_i \in \mathscr{S}(\mathbb{R}^3),~ i=1, \cdots n,~ n\geq 1\skima \rkk,
	$$
we have the result that the linear hull of $\mathcal{D}_0$ is dense in $\ffff _0$.
Let
	\[ U_0 = {\rm exp}\left(
 \frac{e}{\sqrt{2}}
 \left(a^\ast \left(\frac{\hat{\varphi}}{\omega^{3/2}},0\right)
-a\left(\frac{\hat{\varphi}}{\omega^{3/2}},0\right) \right) \right). \]
Then we observe that $U_0$ is unitary and that
	\begin{equation}
	\label{UV}
		V_0\Omega_0= e^{e^2/2\|\hat\varphi/\omega^{3/2}\|^2}U_0\Omega_0.
	\end{equation}
We shall prove $\mathcal{D}_0 \subset U_0^{-1}\ffff _{{\rm fin}, 0}$ by induction.
It is clear from \eqref{UV} that $\Omega_0 \in U_0^{-1}\ffff _{{\rm fin}, 0}$.
Assume that
$
		\PI \addd(h_i,0)
 \Omega_0 \in U_0^{-1}\ffff _{{\rm fin}, 0}$.
Then we have
\eqn
&&
\prod_{i=1}^{n+1}  \addd(h_i,0)
\Omega_0\\
&&= \lk\skima 
b^\dagger(h_{n+1},0) - e
(\vp/\omega^{3/2},  h_{n+1}) \rk
\PI \addd(h_i,0)
\Omega_0
+ e(\vp/\omega^{3/2}, h_{n+1})
\PI \addd(h_i,0)
\Omega_0  \\
		&&=U_0^{-1}b^\dagger(h_{n+1},0) U_0
\PI \addd(h_i,0)
\Omega_0
		+ e (\vp/\omega^{3/2},  h_{n+1})
\PI \addd(h_i,0)
\Omega_0.
	\enn
It follows that $\prod_{i=1}^{n+1} \addd(h_i,0)
 \Omega_0 \in U_0^{-1}\ffff _{{\rm fin}, 0}$
and we have the desired result.
\qed
Let
   $S_P:\ffff_P^{{\rm out}}\rightarrow \ffff_P^{\rm in}$
   be defined by
\eq{scattering}
S_P\PI a_{P,{\rm out}}
^\dagger(f_i,\mu_i)
\Psi_P:=
\PI a_{P,{\rm in}}
^\dagger(f_i,\mu_i)
\Psi_P.
\en
Then $\|S_P\Phi\|=\|\Phi\|$ for $\Phi\in \ffff_{\rm fin}^{{\rm out}}$ follows from
\eqref{zeroo} and \eqref{zero} in Lemma \ref{sca}
and the commutation relations \eqref{l3} and \eqref{l3.5} in Lemma \ref{suzuki}.
Thus $S_P$ can be extended to a unitary operator
from $\ffff_P^{{\rm out}}$ to
$\ffff_P^{\rm in}$.
$S_P$ is called the scattering operator.
\begin{theorem}
   $S_P$ is unitary and $\eta$-unitary, i.e.,
$S_P^\ast =S_P^{-1}=S_P^{\dagger}$.
\end{theorem}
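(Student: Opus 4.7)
The plan is to exploit the fact that unitarity $S_P^\ast=S_P^{-1}$ has already been established in the paragraph preceding the statement; what remains is to verify $S_P^\dagger=S_P^{-1}$, i.e., $\eta S_P^\ast\eta=S_P^{-1}$. Since $\eta^2=1$ and $S_P$ is unitary, this reduces to the commutation relation $[S_P,\eta]=0$ on $\ffff=\ffff_{\pp}\otimes\ffff_0$.

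The key observation is that $S_P$ and $\eta$ act on different tensor factors in the decomposition $\ffff=\ffff_{\pp}\otimes\ffff_0$. By Lemma~\ref{sca}, the vector-type asymptotic creation operators $a^\dagger_{P,\oi}(h,j)$, $j=1,2,3$, are built from $b_P^\ast(\cdot,i)$ with $i=1,2,3$ only, hence are of the form $(\cdot)\otimes 1$ on $\ffff_{\pp}\otimes\ffff_0$; while the scalar-type operators $a^\dagger_{P,\oi}(h,0)=b^\dagger(h,0)$ are independent of ``out'' vs.\ ``in'' and act as $1\otimes(\cdot)$ on the same decomposition. Since $\Psi_P=\Omega_{\pp,P}\otimes V_0\Omega_0$ factorizes accordingly and the vector- and scalar-type creation operators commute, every generator $\prod_i a^\dagger_{P,\oi}(f_i,\mu_i)\Psi_P$ of $\ffff^\oi_{P,{\rm fin}}$ may be written, after reordering the commuting factors, as a tensor product of a vector-part vector with a scalar-part vector. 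This is precisely the identification $\ffff^\oi_{P,{\rm fin}}=\ffff^\oi_{P,{\rm fin, TL}}\hat\otimes\ffff^\oi_{{\rm fin},0}$ already used in the proof of Theorem~\ref{asy}.

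On this dense subspace, the definition \eqref{scattering} of $S_P$ then yields $S_P=S_{P,\pp}\otimes 1$ for some unitary operator $S_{P,\pp}$ on $\ffff_{\pp}$, the factor on $\ffff_0$ being the identity because $a^\dagger_{P,\oi}(h,0)$ is independent of $\oi$. On the other hand, the indefinite-metric operator $\eta=\Gamma(-[g])=(-1)^{N^0_{\rm f}}$ factorizes as $1\otimes(-1)^{N^0_{\rm f}}$ and acts nontrivially only on the scalar factor. Therefore $[S_P,\eta]=0$ on the dense subspace, and since both operators are bounded this extends to all of $\ffff$. One then has $\eta S_P^\ast\eta=S_P^\ast=S_P^{-1}$, which combined with unitarity gives $S_P^\ast=S_P^{-1}=S_P^\dagger$. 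I expect no substantial obstacle; the only point requiring care is the clean separation of the tensor factors, which is however already implicit in the proof of Theorem~\ref{asy}.
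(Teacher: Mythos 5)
Your proposal is correct and follows the same strategy as the paper: the paper's own proof consists of the two assertions that unitarity is already established and that $[S_P,\eta]=0$ implies $\eta$-unitarity. You have simply supplied the justification the paper leaves implicit for $[S_P,\eta]=0$ — namely that by \eqref{zeroo}--\eqref{zero} the scalar ($\mu=0$) asymptotic creation operators coincide for out and in, so $S_P$ factorizes as $S_{P,\pp}\otimes 1$ on $\ffff_\pp\otimes\ffff_0$ while $\eta=(-1)^{N^0_{\rm f}}$ acts only on $\ffff_0$.
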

\begin{proof}
The unitarity of $S_P$ is already proven.
  $[S_P, \eta] = 0$ implies that $S_P$ is $\eta$-unitary.
\end{proof}

\section{Physical subspace}
\subsection{Abstract setting}
We begin with an abstract version of physical subspace.
Let $\mathcal{K}$ be a Klein space with a metric $( \cdot | \cdot)$.
For a densely defined linear operator $X$ on $\mathcal{K}$,
we denote by $X^{\dagger}$ the adjoint of $X$ with respect to
$( \cdot | \cdot)$.
We denote the set of densely defined operators on $\mathcal{K}$ by $\mathscr{C}(\mathcal {K})$.
\begin{definition}
\label{dis222}
The map $F:\mathscr{S}(\mathbb{R}^3)\rightarrow
\mathscr{C}(\mathcal {K})$ is called an operator valued distribution if and only if there exists
a dense subspace $\mathcal{D}$ such that
\begin{itemize}
\item[(1)]
 $F(\alpha f+\beta g)\Psi=(\alpha F(f)+\beta F(g))\Psi$ for $\alpha,\beta\in\CC$, $f,g\in \mathscr{S}(\mathbb{R}^3)$
and $\Psi \in \mathcal{D}$;
\item[(2)]
 the map
$\mathscr{S}(\mathbb{R}^3) \ni f \mapsto ( \Psi |F(f) \Phi)$ is a tempered distribution for $\Psi, \Phi \in \mathcal{D}$.
\end{itemize}
\end{definition}

\begin{definition}\label{dis}
Let $B=\{B(\cdot,t)\}_{t\in\RR}$  be a family of operator valued distributions.
 This family is in class $\mathscr D(\mathcal{K})$ if and only if
\begin{itemize}
\item[(1)]
 there exists a dense subspace $\mathcal{D}_B$ in $\mathcal{K}$ such that,
for all  $t \in \mathbb{R}$ and $f \in \mathscr{S}(\mathbb{R}^3)$,
$\mathcal{D}_B \subset D(B(f,t)) \cap D(B(f,t)^{\dagger})$ and
$B(f,t)^{\dagger}|_{\mathcal{D}_B}=
B(\bar{f},t)|_{\mathcal{D}_B}$;
\item[(2)] for each $\Psi \in \mathcal{D}_B$,
$B(f,t)\Psi$ is strongly differentiable
in $t$
and its derivative, denoted by $\dot{B}(f,t)\Psi$, is continuous in $t$.
\end{itemize}
\end{definition}
By Definition \ref{dis},
        $\{B(\cdot, t)\}_{t\in\RR}\in \mathscr{D}(\mathcal {K})$ implies that
 $\dot{B}(\cdot,t)$ is also an operator-valued distribution which satisfies (1) of Definition \ref{dis} with $B$ replaced by $\dot B$.
We now provide an abstract definition of a free field .
\begin{definition}
\label{free}
A family of operator valued distributions $\{B(\cdot,t)\}_{t\in\RR}\in\mathscr{D}(\mathcal{K})$ is called a free field if and only if
$B(f,t)\Psi$ is strongly two-times differentiable in $t$ and
\eq{freef2}
\frac{d^2}{dt^2}B(f,t)\Psi-B(\Delta f,t)\Psi=0
\en
holds for all $f\in \mathscr{S}(\BR)$ and
$\Psi\in \mathcal {D}_B$.
The set of free fields is denoted by $\df$.
 \end{definition}

Further to introducing the Gupta-Bleuler subsidiary condition,
the positive frequency part of it has to be defined.
Let $\{B(\cdot,t)\}_{t\in\RR}\in\mathscr{D}(\mathcal{K})$.
Then one can automatically
construct a free field from
$B(\cdot,t)$, as described below.
Define
\eq{dis2}
 c_s(g) :=
 i\left(\dot{B}(g_s,s)-B(\dot g_s,s) \right),
 \en
where  $g_s$ and $\dot g_s$ are defined by
	\begin{equation}
	\label{gt}
		{g}_s = F\f\left(
\frac{\tilde g}{2\omega}e^{is\omega}
\right),\quad
\dot g_s=\partial_s g_s=F\f
\left(
i
\frac{\tilde g}{2}e^{is\omega}
\right)
.
	\end{equation}
Note that in \kak{dis2}
$$\dot B(g_s,s)=\dot B(f,s)\lceil_{f=g_s}.$$
Set
$  c_s^{\dagger}(h) := \lk
c_s(\bar{h})\rk ^{\dagger}$.
Let us define the
 operator $F(f,s,t):\mathcal {K}\rightarrow \mathcal {K}$ for $f\in \mathscr{S}(\BR)$ and $s,t\in\RR$,
	by
\begin{equation}
	\label{F}
		F(f,s,t) := c_s\left(e^{-it\omega}\tilde{\hat{f}}\right) +
c_s^{\dagger}\left(
e^{it\omega}\hat{f}\right).
	\end{equation}
It can be proven  that for each $s\in\RR$,
$\{F(\cdot, s, t)\}_{t\in\RR}\in \df$.
Then we can define
the family of maps $\Theta_s$, $s\in\RR$,
$$\Theta_s:\mathscr{D}(\mathcal{K})\rightarrow \df,\quad
\{B(\cdot,t)\}_{t\in\RR}
\mapsto\{F(\cdot,s,t)\}_{t\in\RR}.$$
In particular $\Theta_s$ leaves $\df$ invariant.
In the next lemma a stronger statement is established.
\begin{lemma} \label{positivepart}
Let $B=\{B(\cdot,t)\}_{t\in\RR}\in \mathscr {D}$.
Then
\begin{enumerate}
\item[(1)] $B(f, t) = F(f,t,t)$ holds for all
$f\in\mathscr{S}(\BR)$ and $t\in\RR$;
\item[(2)] If, in addition, we assume that   $B\in \df$
and that for each $\Psi\in \mathscr{D}_B$, there exists a continuous semi-norm $C_{\Psi}$
on $\mathscr{S}(\mathbb{R}^3)$ such that
\eq{suzuki2}
 \sup_{t \in \mathbb{R}}\|B(f,t)\Psi\|+ \sup_{t \in \mathbb{R}}\|\dot B(f,t)\Psi\| \leq C_{\Psi}(f),
 \en
then  $c_s(h)$ (resp. $c^{\dagger}_s(h)$) is
  independent of $s \in \mathbb{R}$ and
		\[ B(f,t) = c(e^{-it\omega}\tilde{\hat{f}}) + c^{\dagger}(e^{it\omega}\hat{f}). \]
holds. Here we set $c_s=c$ and $c_s^\dagger=c^\dagger$.
\end{enumerate}
\end{lemma}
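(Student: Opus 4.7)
For (1), the calculation is direct substitution. With $g = e^{-it\omega}\tilde{\hat f}$, one has $\tilde g = e^{-it\omega}\hat f$ (using $\omega(-k)=\omega(k)$), so setting $s=t$ in $g_s = F\f(\tilde g\, e^{is\omega}/(2\omega))$ makes the time-dependent exponentials cancel: $g_t = F\f(\hat f/(2\omega))$ and $\dot g_t = F\f(i\hat f/2) = (i/2)f$. By the $\CC$-linearity of $B$ in its test function this yields
\[
c_t(e^{-it\omega}\tilde{\hat f})\Psi = i\dot B(g_t, t)\Psi + \tfrac{1}{2}B(f, t)\Psi.
\]
A parallel computation of $c_t^\dagger(e^{it\omega}\hat f) = \bigl(c_t(e^{-it\omega}\overline{\hat f})\bigr)^\dagger$, together with the identity $\overline{\tilde{\hat f}} = \hat{\overline{f}}$ and the Hermiticity rule $B(h,t)^\dagger|_{\mathcal{D}_B} = B(\bar h, t)|_{\mathcal{D}_B}$ from Definition~\ref{dis}, gives $c_t^\dagger(e^{it\omega}\hat f)\Psi = -i\dot B(g_t, t)\Psi + \tfrac12 B(f, t)\Psi$. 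Summing, the $\dot B$-terms cancel and $F(f,t,t)\Psi = B(f,t)\Psi$.

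For (2), the plan is to show $\partial_s c_s(g)\Psi = 0$ for every $\Psi \in \mathcal D_B$; then $c_s$ is independent of $s$, as is $c_s^\dagger$ by taking $\dagger$, and invoking (1) at $s=t$ yields the mode expansion. Applying the product rule to the two $s$-dependences in
\[
c_s(g)\Psi = i\bigl(\dot B(g_s, s)\Psi - B(\dot g_s, s)\Psi\bigr),
\]
the free-field hypothesis supplies strong two-times differentiability of $B(\cdot,s)\Psi$ in $s$, and the two cross terms $\pm i\dot B(\dot g_s, s)\Psi$ cancel, leaving $\partial_s c_s(g)\Psi = i\bigl(\ddot B(g_s, s)\Psi - B(\ddot g_s, s)\Psi\bigr)$. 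The free-field equation \kak{freef2} gives $\ddot B(g_s, s)\Psi = B(\Delta g_s, s)\Psi$; since in Fourier variables both $-\Delta$ and $-\partial_s^2$ act as multiplication by $\omega^2$ on the integrand $\tilde g\, e^{is\omega}/(2\omega)$ defining $g_s$, we have the key identity $\Delta g_s = \ddot g_s$, whence $\partial_s c_s(g)\Psi = 0$.

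The main point requiring care is the product-rule step, namely differentiating $\dot B(g_s, s)\Psi$ in $s$ simultaneously through the test-function slot and the time slot. This is where the uniform seminorm bound \kak{suzuki2} is essential: it controls $B(\cdot,s)\Psi$ and $\dot B(\cdot,s)\Psi$ uniformly in $s$ by a continuous seminorm on $\mathscr S(\BR)$, so that the $\mathscr S$-continuity and differentiability of $s\mapsto g_s,\dot g_s,\ddot g_s$ can be transferred to strong continuity and differentiability of the operator-valued maps $s\mapsto B(g_s,s)\Psi,\ \dot B(g_s,s)\Psi$. Without \kak{suzuki2} the cross-term cancellation, though formally evident, could not be rigorously justified, and the Fourier identity $\Delta g_s=\ddot g_s$ could not be inserted into the $B$-slot to conclude independence of $s$.
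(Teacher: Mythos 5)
Your proof is correct and follows essentially the same route as the paper: part (1) by direct substitution $s=t$ in the definition of $F(f,s,t)$ and summing the two pieces so the $\dot B$-terms cancel, and part (2) by showing $s$-independence of $c_s(g)$ via the product-rule cross-term cancellation, the free-field equation \kak{freef2}, and the Fourier-side identity $\Delta g_s=\partial_s^2 g_s$. The only cosmetic difference is that the paper differentiates the matrix element $\beta(s)=(\Phi\,|\,c_s(h)\Psi)$ in the weak sense and then invokes density, whereas you differentiate the vector $c_s(g)\Psi$ directly and spell out the role of the seminorm bound \kak{suzuki2} in justifying the interchange of test-function and time derivatives; the underlying argument is the same.
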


\proof
We have
	\begin{align*}
		& c_s(e^{-it\omega}\tilde{\hat{f}},t)
			= i\left( \dot{B}\left(\frac{e^{-i(t-s)\omega}}{2\omega}f, s\right)
				- iB
\left(\frac{e^{-i(t-s)\omega}}{2}f, s\right) \right), \\
		& c_s^{\dagger}(e^{it\omega}\hat{f},t)
			= -i\left( \dot{B}\left(\frac{e^{i(t-s)\omega}}{2\omega}f, s\right)
				+ iB\left(\frac{e^{i(t-s)\omega}}{2}f, s\right) \right).
	\end{align*}
Together with \eqref{F} we have (1).
Let us fix arbitrarily $\Psi, \Phi \in \mathcal{D}_B$
and define the function $\beta(s)$ by
$\beta(s) = (\Phi | c_s(h) \Psi)$.
Under the assumption of (2), we have
	\[ \frac{d}{ds}\beta(s)
		=  i\left(\Phi \Big|
\left(
B(\Delta g_s,s)-B(\partial_s^2g_s,s) \right) \Psi \right)= 0. \]
Hence, by the arbitrariness of $\Psi \in \mathcal{D}_B$, we obtain the desired results.
\qed	

By virtue of the above lemma, we introduce the definition of the positive (resp. negative) frequency part
of a given family in the class $\mathscr{D}(\mathcal{K})$.
\begin{definition}
{\bf (Positive frequency part and physical subspace)}\\
\label{dis4}
(1)
Let $\{B(\cdot,t)\}_{t\in\RR}\in \df$ and
\kak{suzuki2} be satisfied.
Then
we call $c(e^{-it\omega}\tilde{\hat{f}})$
(resp.\  $c^{\dagger}(e^{it\omega}\hat{f})$)
the positive (resp. negative) frequency part of $B(f,t)$
and denote it by
\eq{sa12}
B^{(+)}(f,t) := c(e^{-it\omega}\tilde{\hat{f}}),
		\quad
(resp. B^{(-)}(f,t) := c^{\dagger}(e^{it\omega}\hat{f})).
\en
(2)
Let
$B=\{B(\cdot,t)\}_{t\in\RR}\in \mathscr{D}(\mathcal{K})$ and $c_t(h)$ be defined by \kak{dis2}.
For each $t \in \mathbb{R}$,
we define the physical subspace $\V^t$ by
\eq{sa2}
\V^t := \{ \Psi \in \mathscr{D}_B | c_t(h)\Psi = 0,~ h \in \mathscr{S}(\mathbb{R}^3) \}.
\en
\end{definition}
\begin{remark}
In the abstract setting the physical  subspace $\V^t$
depends on time $t$.
The physical subspace associated with
free fields is, however, independent of $t$.
More precisely, assume that
$\{B(\cdot,t)\}_{t\in\RR}\in \df$ and \kak{suzuki2} is satisfied.
Then
 $\V =\V^t$ is independent of $t \in \mathbb{R}$.
 \end{remark}

\subsection{Physical subspace at time $t$} \label{subsec5.2}
We return to NRQED.
Applying the abstract theory explained in the previous section, we shall construct a physical subspace at time $t<\infty$ as the kernel of some operator.
   First we define an operator valued distribution. Let
\eqnn
\label{dis5}
&&		\B_P(f,t) :=  \partial^\mu \A_\mu(f,t,P),\\
&&\label{dis55}
\dot \B_P(f,t) :=  \partial^\mu \dot \A_\mu(f,t,P).
\ennn
More precisely
the right-hand side of \kak{dis5} and \kak{dis55}
 are abbreviations of
\eqnn
\label{defmu}
&&
\hspace{-1cm}
\partial^\mu \A_\mu(f,t,P)=
\partial_t
\A_0(f,t)+\A_1(\partial_{x^1} f,t,P)
+
\A_2(\partial_{x^2} f,t,P)
+
\A_3(\partial_{x^3} f,t,P),\\
&&\hspace{-1cm}
\label{defmuu}
\partial^\mu \dot \A_\mu(f,t,P)=
\partial_t  \dot \A_0(f,t)+
\dot \A_1(\partial_{x^1} f,t,P)
+\dot \A_2(\partial_{x^2} f,t,P)+\dot \A_3(\partial_{x^3} f,t,P).
\ennn
Then $\{\B_P(\cdot,t)\}_{t\in\RR}\in \mathscr {D}(\ffff)$.
Define the positive frequency part of $\B_P(h,t)$ as
	\[ c_{P,t}(h) := i\left(\dot{\B_P}(h_t,t)-\B_P(\dot h_t,t) \right), \]
where the functions $h_t$ and $\dot h_t$ are
defined as in \eqref{gt}, and
the physical subspace is defined as
\eq{ps}
 \V ^t_{P,{\rm phys}}
		:= \{ \Psi \in \ffff  | c_{P,t}(h) \Psi = 0,~ h \in \mathscr{S}(\mathbb{R}^3) \}.
\en
Of course, in general,
$\V ^t_{\rm phys}$ is not independent of
time $t$. To characterize $\V ^t_{P,{\rm phys}}$, we introduce  unitary operators.
Let $\Gamma([\gamma])$
be a unitary operator defined by the second quantization of the unitary operator
\eq{suzuki4}
[\gamma]:=\lk\begin{array}{cccc}
1&0&0&0 \\
0&1&0&0\\
0&0&1/\sqrt 2&1/\sqrt 2\\
0&0&1/\sqrt 2&-1/\sqrt2 \end{array} \rk:\oplus^4\LR\rightarrow \oplus^4\LR,
\en
Furthermore we
define the unitary operator $\W$ by
\eq{sa3}
\W  := \exp\left(
-\frac{e}{\sqrt{2}}
 \left(a^\ast\left(\frac{\hat{\varphi}}{\omega^{3/2}},3\right)
 -a\left(\frac{\hat{\varphi}}{\omega^{3/2}},3\right) \right)
\right)\Gamma([\gamma]).
\en
\begin{theorem}
\label{TheVt}
$\V_{P,{\rm phys}}^t$ is positive semi-definite and
	\eq{mainsuzuki}
 \V _{P,{\rm phys}}^t = e^{itH_{\pp,P}} e^{-it\hff}\W \ffff_{\pp}^{(0)},
 \en
 where
$\ffff_{\pp}^{(0)} = \ffff_{\pp}\otimes \{\alpha \Omega_0 | \alpha \in \mathbb{C} \}$.
\end{theorem}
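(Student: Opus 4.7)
The approach is to compute the positive frequency part $c_{P,t}(h)$ explicitly from the Heisenberg expressions \kak{rib1}--\kak{tzznew} and identify its kernel as the image of $\ffff_\pp^{(0)}$ under $e^{itH_{\pp,P}}e^{-it\hff}\W$. I first substitute \kak{rib1}--\kak{tzznew} into \kak{defmu} and \kak{defmuu} to express $\B_P(f,t)=\partial^\mu\A_\mu(f,t,P)$. Because the transverse polarization vectors satisfy $k\cdot e^l(k)=0$ for $l=1,2$, only the longitudinal sector ($l=3$, expressed via $b_P^\sharp(\cdot,3)$) and the scalar sector $\A_0$ (expressed via $a^\sharp(\cdot,0)$) survive in the combination $\partial_t\A_0+\partial_j\A_j$. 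Applying Lemma \ref{positivepart}(1), $c_{P,t}(h)=i(\dot{\B}_P(h_t,t)-\B_P(\dot h_t,t))$ collapses to a linear combination of $b_P(\cdot,3)$ and $a(\cdot,0)$ with $e^{-it\omega}$-modulated test functions, plus a $c$-number shift $\beta(h)$ contributed by the $P_j$-term in \kak{kok1} and the $\hat{\vp}/\omega^{3/2}$-shift already built into $b(\cdot,0)$ via \kak{b100}.

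Next I conjugate by $U_t:=e^{itH_{\pp,P}}e^{-it\hff}$. By \kak{hccr}, $U_t^\ast b_P(\cdot,3)U_t=b_P(e^{-it\omega}\cdot,3)$ and similarly $U_t^\ast a(\cdot,0)U_t=a(e^{-it\omega}\cdot,0)$, so the $e^{-it\omega}$ phases cancel those inside the test functions. This gives a $t$-independent operator of the form $\tilde c_P(h)=\alpha(b_P(\tilde{\hat h},3)-a(\tilde{\hat h},0))-\beta(h)$ with $\alpha\neq 0$. The condition $c_{P,t}(h)\Psi=0$ for all $h$ is therefore equivalent to $\tilde c_P(h)\Phi=0$ for all $h$ with $\Phi:=U_t^\ast\Psi$. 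Using the identity $\Gamma([\gamma])a(f,\mu)\Gamma([\gamma])^{-1}=a([\gamma](fe_\mu))$ with $\mu=0,3$, one computes $\W^{-1}(b_P(\cdot,3)-a(\cdot,0))\W$: the Weyl factor in \kak{sa3} contributes a constant shift of $b_P(\cdot,3)$ that cancels $\beta(h)$, while $\Gamma([\gamma])$ rotates $b_P(\cdot,3)-a(\cdot,0)$ into a nonzero multiple of $a(\cdot,0)$ (modulo an operator that vanishes on $\ffff_\pp^{(0)}$). Hence $\tilde c_P(h)\Phi=0$ for all $h$ iff $\W^{-1}\Phi\in\ffff_\pp^{(0)}$, which proves \kak{mainsuzuki}.

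For positive semi-definiteness, observe that $H_{\pp,P}$ and $\hff$ act on $\ffff_\pp$ only and commute with $\eta=(-1)^{N^0_{\rm f}}$, so $U_t$ is $\eta$-unitary. Thus for $\Psi=U_t\W\Phi$ we have $(\Psi|\Psi)=(\W\Phi|\W\Phi)=(\Phi,\W^\ast\eta\W\Phi)$. The Weyl factor in \kak{sa3} acts only on mode $3$ and commutes with $\eta$, so $\W^\ast\eta\W=\Gamma([\gamma])\eta\Gamma([\gamma])=\Gamma(\theta)$ with $\theta:=[\gamma](-[g])[\gamma]$. A direct $4\times 4$ matrix computation gives $\theta|_{1,2}=I$ and $\theta|_{3,0}$ equal to the swap $e_3\leftrightarrow e_0$. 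Decomposing $\Phi=\Phi_0+\Phi_1$ with $\Phi_0\in\ffff_1\otimes\ffff_2\otimes\CC\Omega_3\otimes\CC\Omega_0$ and $\Phi_1$ in its orthogonal complement inside $\ffff_\pp^{(0)}$, one checks that $\Gamma(\theta)\Phi_0=\Phi_0$ while $\Gamma(\theta)\Phi_1$ contains only mode-$0$ excitations and is hence orthogonal to every element of $\ffff_\pp^{(0)}$. All cross terms vanish, leaving $(\Phi,\Gamma(\theta)\Phi)=\|\Phi_0\|^2\geq 0$. The main obstacle is the explicit identification of $\tilde c_P(h)$ and the verification that $\beta(h)$ matches exactly the Weyl-induced shift, which requires careful bookkeeping of the $T$, $Q$, and $W_\pm^{ij}$ data from Section 3.2.
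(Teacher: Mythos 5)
Your overall strategy — compute the positive-frequency part $c_{P,t}(h)$, reduce to a time-independent core by conjugation, then conjugate through $\W$ and compute $\W^*\eta\W$ — is the same as the paper's. Your treatment of positive semi-definiteness (the identity $\W^*\eta\W = \Gamma(\theta)$ with $\theta$ swapping the $0$- and $3$-modes, and the orthogonality of $\Gamma(\theta)\Phi_1$ to $\ffff_\pp^{(0)}$) is essentially identical to what appears in the paper and is correct.

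However, the middle of your argument contains a genuine error. You assert that the time-independent core is $\tilde c_P(h)=\alpha\bigl(b_P(\tilde{\hat h},3)-a(\tilde{\hat h},0)\bigr)-\beta(h)$, i.e.\ it is expressed through the Bogoliubov-transformed annihilator $b_P(\cdot,3)$, and you justify the reduction by the conjugation formula $U_t^\ast b_P(\cdot,3)U_t=b_P(e^{-it\omega}\cdot,3)$, which you attribute to \kak{hccr}. That formula is false. Writing $U_t=e^{itH_{\pp,P}}e^{-it\hff}$, the relation \kak{hccr} controls only the $H_{\pp,P}$ half: $e^{-itH_{\pp,P}}b_P(f,3)e^{itH_{\pp,P}}=b_P(e^{it\omega}f,3)$. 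The residual conjugation $e^{it\hff}b_P(e^{it\omega}f,3)e^{-it\hff}$ is \emph{not} of the form $b_P(e^{-it\omega}f,3)$, because $b_P$ is a Bogoliubov mixture of $a$ and $a^*$ (in the proof of Lemma \ref{2}, $b_P(f,j)=\sum_i\bigl(a^*(W_-^{ij}f,i)+a(W_+^{ij}f,i)+\ldots\bigr)$ with $W_-^{ij}\neq 0$), and $e^{it\hff}$ twists $a$ and $a^*$ by opposite phases. So the $e^{\pm it\omega}$ modulations do not cancel inside $b_P$.

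The paper avoids this entirely by evaluating $c_{P,0}(h)$ directly at $t=0$, where all Heisenberg fields reduce to the bare $\A_\mu(f),\dot\A_\mu(f)$ expressed in the free $a^\sharp$. The result, eq.~\eqref{c_0(g)}, is
\begin{equation*}
c_{P,0}(h) = a(\sqrt{\omega}h,3) - a(\sqrt{\omega}h,0)+ \frac{e}{\sqrt{2}}\bigl(\bar h,\,\vp/\omega\bigr),
\end{equation*}
with \emph{no} $b_P$, no $T$-twisting, and no $P$-dependence. Only afterward does the paper observe that $c_{P,t}(h)$ is the $e^{itH_{\pp,P}}e^{-it\hff}$-conjugate of this time-zero operator, exactly as in the LSZ formula \kak{ath}, which is consistent precisely because $c_{P,0}$ is built from $a^\sharp$ and both $e^{\pm it\hff}$ and $e^{\pm itH_{\pp,P}}$ conjugate $a^\sharp$ coherently. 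Your version places $b_P$ where $a$ should be; this matters not only for the $U_t$-conjugation but also for the next step, since $\W$ intertwines $a(\sqrt\omega h,3)-a(\sqrt\omega h,0)$ cleanly into $\sqrt 2\,a(\sqrt\omega h,0)$ via the $\Gamma([\gamma])$ relations, while $\W^{-1}b_P(\cdot,3)\W$ has no such simple form. Your parenthetical ``modulo an operator that vanishes on $\ffff_\pp^{(0)}$'' papers over this and would not survive an actual computation. (A quick sanity check: the physical subspace at finite $t$ is $e^{itH_{\pp,P}}e^{-it\hff}\W\ffff_\pp^{(0)}$, whose $P$-dependence enters only through the unitary prefactor; if the core $\tilde c_P$ itself were $P$-dependent via $b_P$, you would get a spurious second layer of $P$-dependence. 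The $b_P$-expressed annihilators enter only in the $t\to\pm\infty$ limit, Lemma \ref{sca}, which is why $\W_P=U_P\W$ appears in Theorem \ref{suzuki9} but not here.)
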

\proof
We notice that
	\begin{equation}
	\label{c_0(g)}
		c_{P,0}(h) = a(\sqrt{\omega}h,3) - a(\sqrt{\omega}h,0)
			+ \frac{e}{\sqrt{2}}(\bar h, \vp/\omega),
	\end{equation}
and by the definition of
$c_{P,t}(h)$ we can observe that
\[ c_{P,t}(h) = \frac{i}{\sqrt{2}}e^{itH_{\pp,P}}
e^{-it\hff}c_0(h)
e^{it\hff}e^{-itH_{\pp,P}}. \]
Moreover, it follows directly that
$\W ^{-1}c_{P,0}(h)\W  = \sqrt{2}a(\sqrt{\omega}h,0)$, where  we have used
	\begin{align*}
		& \Gamma([\gamma]) a(f,j)\Gamma([\gamma])^{-1}= a(f,j),~~j=1,2, \\
	& \Gamma([\gamma]) a(f,3)\Gamma([\gamma])\f =\frac{1}{\sqrt{2}}[a(f,3)+a(f,0)], \\
		& \Gamma([\gamma]) a(f,0)\Gamma([\gamma])\f =\frac{1}{\sqrt{2}}[a(f,3)-a(f,0)].
	\end{align*}
Hence \kak{mainsuzuki} follows from
the equality $\{\Psi\in\ffff|c_{P,0}(h)\Psi=0\}=\W\ffff_{\pp}^{(0)}$.
Let $\Psi=e^{itH_{\pp,P}}e^{-it\hff}\W \Phi\in \V^t$, where $\Phi\in \ffff_{\pp}^{(0)}$.
Then
$(\Psi|\Psi)=(\Gamma([\gamma])\Phi, \eta\Gamma([\gamma])\Phi)=(\Phi,
-\Gamma([\gamma g \gamma])\Phi)\geq0$. Here
$$-[\gamma g \gamma]=
\lk\begin{array}{cccc}
1&0&0&0 \\
0&1&0&0\\
0&0&0&1\\
0&0&1&0\end{array} \rk:\oplus^4\LR\rightarrow \oplus^4\LR
$$
denotes the interchange between the $0$th and $3$rd components.
Then the theorem is complete.
\qed

$\W$
leaves the transversal part $\ffff_1\otimes \ffff_2$ invariant,
$\ffff_0$ and $\ffff_3$ are, however, mixed together by $\W$.
   Although the Hamiltonian $H_P =H_{\pp,P}\otimes 1+1\otimes H_0$ is subdivided into a scalar and a vector component , the physical subspace
is, however, of a more complicated form.

\subsection{Physical subspace at $t=\pm\infty$}
In this subsection we consider the physical subspace at $t=\pm \infty$. We have already presented the explicit form of the asymptotic field
$a_{P,\oi}(h,\mu)$, $\mu=0,1,2,3$,
and proven  its asymptotic completeness.
We can construct the free field in terms
of $a_{P,\oi}(h,\mu)$ and define the physical subspace independent of $t$.

Formally, we write
	\[ a_{P,\oi} ^\sharp(h, \mu) = \int
 h(k) a_{P,\oi} ^\sharp(k, \mu)dk. \]
We now define
 the smeared field
 $\A_{\mu}^\oi (f,t,P)$ in terms of $\ass_{P,\oi}$ by
	\begin{align}
		\A _j^\oi (f,t,P)
		& = \mmml\int
 dk\frac{e_j^l (k)}{\sqrt{2\omega(k)}}
			\left(a_{P,\oi} ^{\dagger}(k,l)\hat{f}(k)e^{i\omega(k)t}	
+a_{P,\oi} (k,l)\hat{f}(-k)e^{-i\omega(k)t}\right), \non \\
& \hspace{10cm}j=1,2,3,\\
		\A _0^\oi (f,t,P)
		& = \int \frac{dk}{\sqrt{2\omega(k)}}
			\left(a_{P,\oi} ^{\dagger}(k,0)\hat{f}(k)e^{i\omega(k)t}
				+a_{P,\oi} (k,0)\hat{f}(-k)e^{-i\omega(k)t}\right).
	\end{align}
Let us define the operator valued distribution
	\begin{align}
\label{freef}
		\B_{P,\oi}  (f,t) :=
 \partial^{\mu}\A^\oi _{\mu}(f,t,P),\quad \dot \B_{P,\oi}  (f,t) :=
 \partial^{\mu}\dot \A^\oi _{\mu}(f,t,P).
\end{align}
Here the right-hand side of \kak{freef} is understood as in \kak{defmu} and \kak{defmuu} with $\A_{\mu}$ replaced
by $\A_\mu^{\oi}$.
Then $\{\B_{P,\oi}  (\cdot,t)\}_{t\in\RR}\in \mathscr{D}(\ffff)$.
In addition, by the definition of $\A_\mu^{\oi}$ it is clear that
$\{\B_{P,\oi}  (\cdot,t)\}_{t\in\RR}\in \dff$.
From Lemma \ref{positivepart},
the positive frequency part
	\eq{suzukifind}
c_{P,\oi} (g) :=
i\left(\dot{\B}_{P,\oi}  (g_t,t)-\B_{P,\oi}  (\dot g_t,t) \right)
\en
is independent of $t$, and
 the physical subspace at time $t=\pm\infty$ is defined by
	\eq{ps2}
\V^\oi _{P,{\rm phys}}
		:= \{ \Psi \in \ffff  | c_{P,\oi} (h) \Psi = 0,~ h \in \mathscr{S}(\mathbb{R}^3) \}.
\en
Let
\eq{sa5}
\W_P:=U_P\W
\en
We can characterize the physical subspace $\V_P^\oi$ in the theorem below.
\begin{theorem}
\label{suzuki9}
Both $\V _{P,{\rm phys}}^{\rm in}$ and
$\V _{P,{\rm phys}}^{{\rm out}}$
are positive semi-definite and
\eqnn
\label{main2}
&&\V _{P,{\rm phys}}^{\rm in} = \W_P
\ffff_{\pp}^{(0)}, \\
\label{main3}
 &&\V _{P,{\rm phys}}^{{\rm out}} = S_P^{-1}\W_P  \ffff_{\pp}^{(0)}.
\ennn
\end{theorem}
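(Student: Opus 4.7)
The plan is to reduce the Gupta--Bleuler condition $c_{P,\oi}(h)\Psi=0$ to a standard Fock annihilation condition on $\ffff_0$ by conjugating with the unitary $\W_P=U_P\W$, and then deduce positive semi-definiteness by a direct computation of $\W_P^{\ast}\eta\W_P$. First I will compute the positive frequency operator $c_{P,\oi}(h)$ explicitly. Since the kernels of $\A_\mu^\oi(f,t,P)$ satisfy the wave equation, $\B_{P,\oi}$ is a free field, and Lemma \ref{positivepart}(2) makes $c_{P,\oi}$ independent of $t$. Substituting the expansions of $\A_0^\oi$ and $\A_j^\oi$ into $\partial^\mu\A_\mu^\oi$, integrating by parts, and collapsing the polarization sum with $\sum_j k_j e_j^l(k)=\omega(k)\delta_{l,3}$, only the $\mu=0$ and $\mu=3$ contributions survive. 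Reading off the coefficient of $e^{-it\omega}\tilde{\hat f}$ shows that $c_{P,\oi}(h)$ is a nonzero scalar multiple of $a_{P,\oi}(\sqrt{\omega}h,0)-a_{P,\oi}(\sqrt{\omega}h,3)$, so Lemma \ref{sca} translates the in-case subsidiary condition into
\[ [b(\sqrt{\omega}h,0)-b_P(\sqrt{\omega}h,3)]\Psi=0\qquad\text{for all }h\in\mathscr{S}(\mathbb{R}^3). \]

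Next I will conjugate this operator by $\W_P^{-1}=\W^{-1}U_P^{-1}$. Since $U_P$ acts only on $\ffff_\pp$ it commutes with $b(\cdot,0)$, and Lemma \ref{3} gives $U_P^{-1}b_P(h,3)U_P=a(h,3)$. The Weyl factor $Y=\exp(-\tfrac{e}{\sqrt{2}}(a^{\ast}(\vp/\omega^{3/2},3)-a(\vp/\omega^{3/2},3)))$ appearing in $\W$ acts only on $\ffff_3$ and produces the scalar shift $Y^{-1}a(h,3)Y=a(h,3)-\tfrac{e}{\sqrt{2}}(\bar h,\vp/\omega^{3/2})$; by Assumption \ref{ass2}(3) $\vp$ is real and rotation invariant, so $(\vp/\omega,h)=(\bar h,\vp/\omega)$ and this shift exactly cancels the inhomogeneous term in $b(\cdot,0)=a(\cdot,0)-\tfrac{e}{\sqrt{2}}(\vp/\omega^{3/2},\cdot)$. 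The remaining $\Gamma([\gamma])^{-1}$ rotates the pair $(a(\cdot,3),a(\cdot,0))$ by $[\gamma]$, exactly as in the proof of Theorem \ref{TheVt}, and the net effect is that $\W_P^{-1}[b(\sqrt{\omega}h,0)-b_P(\sqrt{\omega}h,3)]\W_P$ is a nonzero scalar multiple of $a(\sqrt{\omega}h,0)$, whose kernel is $\ffff_\pp^{(0)}$; this proves \eqref{main2}. For \eqref{main3}, the definition \eqref{scattering} of $S_P$ together with the commutation relations in Lemma \ref{suzuki} yields the intertwining $S_Pc_{P,\text{out}}(h)=c_{P,\text{in}}(h)S_P$, whence $\V^{\text{out}}_{P,\text{phys}}=S_P^{-1}\V^{\text{in}}_{P,\text{phys}}=S_P^{-1}\W_P\ffff_\pp^{(0)}$.

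For the positive semi-definiteness, write $\Psi=\W_P\Phi$ with $\Phi\in\ffff_\pp^{(0)}$; then $(\Psi|\Psi)=(\Phi,\W_P^{\ast}\eta\W_P\Phi)$. Both $U_P$ (on $\ffff_\pp$) and $Y$ (on $\ffff_3$) commute with $\eta=(-1)^{N^0_{\rm f}}$, so this collapses to $(\Phi,\Gamma([\gamma](-[g])[\gamma])\Phi)$. A direct $4\times 4$ computation shows that $[\gamma](-[g])[\gamma]$ is the permutation matrix interchanging the $0$- and $3$-rows of $\oplus^4\LR$, so the operator $\Gamma([\gamma](-[g])[\gamma])$ is the second quantization of the $(0,3)$-swap. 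For $\Phi\in\ffff_\pp^{(0)}$, which has no $0$-particles, this swap sends any $3$-particle content into a $0$-particle sector orthogonal to $\Phi$; only the subsector free of both $0$- and $3$-particles survives, giving $(\Phi,\Gamma(\mathrm{swap})\Phi)\geq 0$. Positive semi-definiteness on $\V^{\text{out}}_{P,\text{phys}}$ then follows immediately from the $\eta$-unitarity of $S_P$. The main technical obstacle is the constant-term bookkeeping in the second step: the cancellation between the Weyl shift in $Y^{-1}$ and the inhomogeneous part of $b(\cdot,0)$ rests on the identity $(\vp/\omega,h)=(\bar h,\vp/\omega)$, which is precisely where the reality and rotation invariance of $\vp$ (Assumption \ref{ass2}(3)) enter essentially; the remainder is a structural use of Lemmas \ref{sca}, \ref{suzuki} and \ref{3}.
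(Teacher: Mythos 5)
Your proof is correct and follows essentially the same route as the paper: compute $c_{P,\oi}(h)$ via the polarization identity $\sum_j k_je^l_j(k)=\omega(k)\delta_{l,3}$, substitute the asymptotic fields from Lemma \ref{sca}, conjugate through $\W_P = U_P\W$ to land on $a(\sqrt{\omega}h,0)$, and deduce positive semi-definiteness from the identification of $\W^{\ast}\eta\W$ with the second quantization of the $(0,3)$-swap $-[\gamma g\gamma]$ (the paper outsources this last step to Theorem~\ref{TheVt} together with $[U_P,\eta]=0$ and the $\eta$-unitarity of $S_P$; you redo it inline). The one place where you are noticeably more careful than the paper's displayed intermediate step is the cancellation between the $c$-number in $b(\cdot,0)=a(\cdot,0)-\tfrac{e}{\sqrt2}(\vp/\omega^{3/2},\cdot)$ (entering via $a_{P,{\rm in}}(\cdot,0)=b(\cdot,0)$ from Lemma~\ref{sca}) and the Weyl shift produced by the exponential factor of $\W$: the paper's line $U_P^{-1}c_{P,{\rm in}}(h)U_P=\tfrac{i}{\sqrt2}(a(\sqrt{\omega}h,3)-a(\sqrt{\omega}h,0))$ should strictly have $b(\sqrt{\omega}h,0)$ in place of $a(\sqrt{\omega}h,0)$, and your explicit bookkeeping of that inhomogeneous term, using the reality of $\vp$ from Assumptions~\ref{ass1} and~\ref{ass2}(3), is precisely what makes the conjugation close up cleanly.
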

\proof
Directly, we have
\eq{suzuki8}
c_{P,\oi}(h)= \frac{i}{\sqrt{2}}
\lk  a_{P,\oi} (\sqrt{\omega}h,3)-a_{P,\oi} (\sqrt{\omega}h,0)\rk . \en
Here we have used $\sum_{j=1,2}\mmml k_l e_l^j=0$.
In particular
$$U_P\f c_{P,{\rm in}}(h)U_P
=\frac{i}{\sqrt 2}\lk a(\sqrt\omega h,3)-a(\sqrt\omega h,0)\rk$$
follows. Then \kak{main2} follows.
 By $ a_{P,{\rm in}}^\sharp (h,\mu) S_P= S_P
 a_{P,{\rm out}}^\sharp(h,\mu)$,
 \kak{main3} also follows.
Finally the semidefinite property of
$\V_{P,{\rm phys}}^\oi$ can be obtained from the fact that $S_P$ is $\eta$ unitary and $[U_P,\eta]=0$. Then the proof is complete.
\qed

By Theorem \ref{suzuki9},
  we observe that
	$
 \V  _{P,{\rm phys}}^{\rm in} =
 S_P\V _{P,{\rm phys}}^{{\rm out}}$.
 We find, however, that $\V  _{P,{\rm phys}}^{\rm in}$ is {\it not} identical to
$\V  _{P,{\rm phys}}^{{\rm out}}$.
\bt{suzu34}
We have
$\V  _{P,{\rm phys}}^{\rm in} \not=
\V _{P,{\rm phys}}^{{\rm out}}$.
\et
\proof
Let $\Psi=\W_P \Phi\in \V_{P,{\rm phys}}^{\rm in}$ with $\Phi\in \ffff_{\pp}^{(0)}$.
Then we have
$$c_{P,{\rm out}}(h)\Psi=
\frac{i}{\sqrt 2}U_P
\lk
\mmml a(L^{l3}\sqrt\omega h,l)-a(\sqrt\omega h,0)\rk
\W\Phi.$$
One can easily find some vector $\Phi\in \ffff_{\pp}^{(0)}$ such that the right-hand side above does not vanish.
Then the theorem follows.
\qed


\section{Physical Hamiltonian}
\subsection{Physical Hilbert space and physical scattering operator}
We defined $\V_{P,{\rm phys}}^\ex$ in the previous section and it includes  the null space $\V_{P,{\rm null}}^\ex$ with respect to $(\cdot|\cdot)$. We want to define the physical Hilbert space by $\V_{P,{\rm phys}}^\ex$ divided by the null space and
a self-adjoint physical Hamiltonian on it.

We first of all characterize the null space of $\V_{P, {\rm phys}}^\ex$.
Let $\ffff_\T=\ffff_1 \otimes \ffff_2$ and
\eq{LLTT}
\ffff^{(0)}_\T = \ffff_\T\otimes\{\alpha \Omega_3 | \alpha \in \CC \}\otimes\{\alpha \Omega_0 | \alpha \in \CC\}.
\en
Then $\ffff^{(0)}_{\pp}$ can be decomposed as
 $\ffff^{(0)}_{\pp}=
     \ffff^{(0)}_\T \oplus \lk  \ffff^{(0)\perp}_\T \cap \ffff^{(0)}_{\pp}\rk$.
Let
\eqnn
&&		\V_P^{\rm in} := \W_P\ffff^{(0)}_\T, \quad
		\V_{P,{\rm null}}  ^{\rm in} :=
\W_P\ \lk  \ffff^{(0)\perp}_\T \cap \ffff^{(0)}_{\pp}\rk,\\
&&
\V_P^{\rm out} := S_P^{-1}\V_P^{\rm in},\quad
\V_{P, {\rm null}}^{\rm out}:=S_P^{-1}\V_{P,{\rm null}}^{\rm in}.
\ennn
Then
 $\V_{P,{\rm phys}}^\oi$ is also decomposed as \eq{hi1}
\V_{P,{\rm phys}}^\oi= \V_P^\oi\oplus \V_{P,{\rm null}}^\oi.\en
Here  $\V_P^{\ex}$ is closed, positive definite and satisfies
$(\Psi_1|\Psi'_1) = (\Psi_1,\Psi'_1)$ for
$\Psi_1, \Psi'_1 \in \V_P^{\ex}$,
and
 $\V_{P,{\rm null}}  ^{\ex}$ is closed, neutral and
	\[ \V_{P,{\rm null}}  ^{{\ex}} = \{ \Psi_0 \in \V_{P,{\rm phys}}^{\ex} | ( \Psi_0 | \Psi_0)=0 \}. \]
\begin{definition}
For subspaces $Y$, $Z$ and $X$
in $\ffff$, we use the notation
$$X =Y \+ Z$$ if and only if
    (1) for all $x \in X$,
    there exist unique vectors
    $y \in Y$ and $z\in Z$
such that $x = y+z$,
 (2)
 $(y|z) =0$ holds for all $y\in Y$ and $z\in Z$.
\end{definition}
\begin{lemma}
It follows that
$\V_{P,{\rm phys}}^{\ex} = \V_P^{\ex} \+ \V_{P,{\rm null}}  ^{\ex}$.
\end{lemma}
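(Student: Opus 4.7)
My plan is as follows. The set-theoretic direct-sum decomposition $\V_{P,{\rm phys}}^\ex = \V_P^\ex \oplus \V_{P,{\rm null}}^\ex$ recorded in \kak{hi1} is simply the image under the unitary $\W_P$ (resp.\ $S_P^{-1}\W_P$) of the Hilbert-space orthogonal decomposition $\ffff_\pp^{(0)} = \ffff_\T^{(0)} \oplus (\ffff_\T^{(0)\perp}\cap \ffff_\pp^{(0)})$, which yields both the existence and uniqueness of the summands required by condition~(1) in the definition of ``$\+$''. So the content of the lemma lies entirely in verifying condition~(2), i.e.\ $(y|z)=0$ for $y\in\V_P^\ex$ and $z\in \V_{P,{\rm null}}^\ex$. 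First I would dispatch the ``out'' case by reduction to the ``in'' case: writing $y=S_P^{-1}y'$, $z=S_P^{-1}z'$ with $y',z'$ in the corresponding ``in'' subspaces, and using $S_P^{-1}=\eta S_P^\ast \eta$ to obtain $\eta S_P^{-1}=S_P^\ast\eta$, one has $(y|z)=(S_P^{-1}y',\,S_P^\ast\eta z')=(y',\eta z')=(y'|z')$, so only the ``in'' case requires genuine work.

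For the ``in'' case I would take $y=\W_P\Phi_1$, $z=\W_P\Phi_0$ with $\Phi_1\in\ffff_\T^{(0)}$ and $\Phi_0\in \ffff_\T^{(0)\perp}\cap\ffff_\pp^{(0)}$, and split $\W_P=U_P\W$ with $\W=E\,\Gamma([\gamma])$ as in \kak{sa3}, where $E$ denotes the Weyl exponential in the third component. Since $U_P$ and $E$ are unitary and act on $\ffff_\pp$ alone, they commute with $\eta=(-1)^{N^0_{\rm f}}$ and can be peeled off, yielding
\[
(y|z)=(\Gamma([\gamma])\Phi_1,\,\eta\,\Gamma([\gamma])\Phi_0)=(\Phi_1,\,-\Gamma([\gamma g\gamma])\Phi_0),
\]
which is exactly the manipulation carried out inside the proof of Theorem~\ref{TheVt}; there $-[\gamma g\gamma]$ was shown to act as the identity on slots $\mu=1,2$ and as the swap between slots $3$ and $0$, so $\Gamma(-[\gamma g\gamma])$ swaps the $\ffff_3$ and $\ffff_0$ tensor factors.

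To finish, I observe that $\ffff_\T^{(0)\perp}\cap\ffff_\pp^{(0)}=\ffff_\T\otimes \Omega_3^\perp\otimes \CC\Omega_0$, so $\Phi_0$ carries a nontrivial $\ffff_3$-factor and vacuum $\ffff_0$-factor, while $\Phi_1\in\ffff_\T\otimes \CC\Omega_3\otimes \CC\Omega_0$. Applying the swap sends $\Phi_0$ into $\ffff_\T\otimes \CC\Omega_3\otimes \Omega_0^\perp$, whose ordinary inner product with $\Phi_1$ vanishes because the $\ffff_0$-factor of $\Phi_1$ is $\Omega_0$. Hence $(y'|z')=0$ as required. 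The only delicate point is the operator-algebra bookkeeping in the preceding paragraph---verifying that $U_P$ and $E$ both commute with $\eta$ so that the indefinite form sees only the $\Gamma([\gamma])$ factor of $\W_P$; once this is in place, the conclusion is mechanical and reuses the swap analysis from Theorem~\ref{TheVt} without any new analytical input.
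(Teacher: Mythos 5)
Your proof is correct, but it takes a genuinely different and considerably heavier route than the paper. The paper's own argument is a one-liner: $\V_{P,{\rm phys}}^{\ex}$ has already been shown to be positive semi-definite (Theorem~\ref{suzuki9}), so the Cauchy--Schwarz inequality for semi-definite forms gives $|(\Psi_1|\Psi_0)|^2\le(\Psi_1|\Psi_1)(\Psi_0|\Psi_0)=0$ whenever $\Psi_0$ lies in the null subspace, immediately yielding $(\Psi_1|\Psi_0)=0$. You instead recompute the indefinite form explicitly: peel off $U_P$ and the Weyl factor (both of which commute with $\eta$), reduce $(y|z)$ to $(\Phi_1,\Gamma(-[\gamma g\gamma])\Phi_0)$, note that the swap sends $\ffff_\T\otimes\Omega_3^\perp\otimes\CC\Omega_0$ into $\ffff_\T\otimes\CC\Omega_3\otimes\Omega_0^\perp$, and conclude orthogonality from mismatch in the $\ffff_0$ slot; the ``out'' case then follows from the ``in'' case via $[S_P,\eta]=0$. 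Both arguments are sound, and yours has the (small) merit of making the mechanism concrete and of bypassing the appeal to Theorem~\ref{suzuki9}; but the Cauchy--Schwarz argument is more robust --- it works for any positive semi-definite subspace together with its isotropic part, with no reference to the specific structure of $\W_P$, $S_P$ or the Bogoliubov swap --- and it dispenses with the bookkeeping about which unitaries commute with $\eta$ that you yourself flag as the ``only delicate point.'' One cosmetic remark: like the paper, you write $-\Gamma([\gamma g\gamma])$ where $\Gamma(-[\gamma g\gamma])$ is meant; $\Gamma$ is not linear in its argument, so the minus sign must live inside.
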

\proof
Since $\V_{P,{\rm phys}}^{\ex}$ is positive semi-definite with respect to the metric $(\cdot|\cdot)$,
we have, by the Schwartz inequality,
$|(\Psi_1| \Psi_0)|^2 \leq (\Psi_1| \Psi_1)(\Psi_0| \Psi_0)=0$
for $\Psi_1 \in \V_P^{\ex}$ and $\Psi_0 \in \V_{P,{\rm null}}  ^{\ex}$.
Hence $(\Psi_1| \Psi_0)=0$.
Then
the lemma follows.
\qed

We define the physical Hilbert space
in terms of the quotient Hilbert space
\eq{phi}
 \hhh_{P,{\rm phys}}^{\ex} := \V^{\ex}_{P,{\rm phys}}/\V_{P,{\rm null}}  ^{\ex}.
 \en
 We denote by $[\Psi]_{\ex}$ the element of $\hhh_{P,{\rm phys}}^{\ex}$
associated with $\Psi \in
\V_{P,{\rm phys}}^{\ex}$ and the induced
 scalar  product on
 $\hhh_{P,{\rm phys}}^{\ex}$ is denoted by $(\cdot, \cdot)_{\ex}$, i.e.,
 $(\E \Psi, \E \Phi)_{\ex}=(\Psi,\Phi)$.
Furthermore let $\pi_\ex:\V_p^\ex\rightarrow \hhh_{P,{\rm phys}}^\ex$ be the natural onto map defined by $\pi_\ex(\Phi):=\E \Phi$.
Thus $\pi_\ex$ is an isometry and so is
a unitary operator between
$\V_p^\ex$ and
$\hhh_{P,{\rm phys}}^\ex$.



We have already defined the scattering operator $S_P$.
This operator maps the null space $\V_{P, {\rm null}}^{\rm out}$ into
the null space
$\V_{P, {\rm null}}^{\rm in}$. Now
we can define
the physical scattering operator.
\begin{definition}
The physical scattering operator
$S_{P, {\rm phys}}:\hhh_{P,{\rm phys}}^{\rm out} \longrightarrow \hhh_{P,{\rm phys}}^{\rm in}$ is defined by
\eq{scatter}
 S_{P, {\rm phys}}[\Psi]_{\rm out} := [S_P\Psi]_{\rm in}.
\en
\end{definition}
\begin{theorem}{\bf (Physical scattering operator)}
\label{suzu2}
$S_{P, {\rm phys}}$ is unitary.
\end{theorem}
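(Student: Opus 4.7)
The plan is to prove $S_{P,\rm phys}$ is unitary in three steps: well-definedness, preservation of the scalar product, and surjectivity. All three hinge on the fact that the decomposition $\V_{P,\rm phys}^\ex = \V_P^\ex \+ \V_{P,\rm null}^\ex$ is carried by $S_P$ to the corresponding decomposition on the ``in'' side. More precisely, the defining relations
\[
 \V_P^{\rm out} := S_P^{-1}\V_P^{\rm in}, \qquad \V_{P,\rm null}^{\rm out}:=S_P^{-1}\V_{P,\rm null}^{\rm in}
\]
immediately give $S_P\V_P^{\rm out}=\V_P^{\rm in}$, $S_P\V_{P,\rm null}^{\rm out}=\V_{P,\rm null}^{\rm in}$, and by linearity $S_P\V_{P,\rm phys}^{\rm out}=\V_{P,\rm phys}^{\rm in}$.

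For well-definedness of $S_{P,\rm phys}[\Psi]_{\rm out}:=[S_P\Psi]_{\rm in}$, suppose $[\Psi]_{\rm out}=[\Psi']_{\rm out}$, so $\Psi-\Psi'\in \V_{P,\rm null}^{\rm out}$. Then $S_P(\Psi-\Psi')\in \V_{P,\rm null}^{\rm in}$, hence $[S_P\Psi]_{\rm in}=[S_P\Psi']_{\rm in}$. This shows the map $S_{P,\rm phys}:\hhh_{P,\rm phys}^{\rm out}\to\hhh_{P,\rm phys}^{\rm in}$ is well-defined and linear.

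For isometry, take any $\Psi\in\V_{P,\rm phys}^{\rm out}$ and decompose $\Psi=\Psi_1+\Psi_0$ with $\Psi_1\in\V_P^{\rm out}$, $\Psi_0\in\V_{P,\rm null}^{\rm out}$. Because $\pi_{\rm out}:\V_P^{\rm out}\to\hhh_{P,\rm phys}^{\rm out}$ is the natural isometry with $\|[\Psi]_{\rm out}\|_{\rm out}^2=(\Psi_1,\Psi_1)$, and similarly on the ``in'' side, we apply $S_P$ and use invariance of the two subspaces to write $S_P\Psi=S_P\Psi_1+S_P\Psi_0$ with $S_P\Psi_1\in\V_P^{\rm in}$ and $S_P\Psi_0\in\V_{P,\rm null}^{\rm in}$. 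Then
\[
 \|S_{P,\rm phys}[\Psi]_{\rm out}\|_{\rm in}^{2}
 =\|[S_P\Psi]_{\rm in}\|_{\rm in}^{2}
 =(S_P\Psi_1,S_P\Psi_1)=(\Psi_1,\Psi_1)=\|[\Psi]_{\rm out}\|_{\rm out}^2,
\]
using unitarity of $S_P$ on $\ffff$ (Theorem just before). Thus $S_{P,\rm phys}$ is an isometry.

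For surjectivity, given any $[\Phi]_{\rm in}\in\hhh_{P,\rm phys}^{\rm in}$ with representative $\Phi\in\V_{P,\rm phys}^{\rm in}$, set $\Psi:=S_P^{-1}\Phi\in S_P^{-1}\V_{P,\rm phys}^{\rm in}=\V_{P,\rm phys}^{\rm out}$; then $S_{P,\rm phys}[\Psi]_{\rm out}=[\Phi]_{\rm in}$. Hence $S_{P,\rm phys}$ is a surjective isometry between Hilbert spaces, i.e., unitary. The proof is essentially bookkeeping; the only thing that could have caused trouble is the invariance of the null space, but this was arranged directly by the defining formula $\V_{P,\rm null}^{\rm out}:=S_P^{-1}\V_{P,\rm null}^{\rm in}$, so no real obstacle arises here.
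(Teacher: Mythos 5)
Your proposal is correct and follows essentially the same route as the paper, which compresses the argument into one line: ``Since $S_P$ is a unitary operator from $\V_P^{\rm out}$ to $\V_P^{\rm in}$, the theorem follows.'' You have simply unpacked the bookkeeping that the paper leaves implicit: well-definedness from $S_P\V_{P,\rm null}^{\rm out}=\V_{P,\rm null}^{\rm in}$, isometry from the fact that the quotient norm is computed on the $\V_P^\oi$-component via $\pi_\oi$ together with unitarity of $S_P$ on $\ffff$, and surjectivity from $S_P\V_{P,\rm phys}^{\rm out}=\V_{P,\rm phys}^{\rm in}$.
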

\proof
Since $S_P$ is a unitary operator from
$\V_P^{\rm out}$ to $\V_P^{\rm in}$, the theorem follows.
\qed

\subsection{Physical Hamiltonian}
In the previous section we defined the physical Hilbert space. Next
we  define the physical Hamiltonian $H^{\ex}_{P,{\rm phys}}$
on $\hhh_{P,{\rm phys}}^{\ex}$
and prove its self-adjointness.

We define
	\[ P^{\rm in}:= \W_P  P_{\pp} \W_P ^{-1}, \quad
		P^{\rm out}:=S^{-1} \W_P  P_{\pp} \W_P ^{-1}S. \]
Here $P_{\pp}= 1 \otimes 1 \otimes 1 \otimes P_{\Omega_0}$
is the orthogonal projection onto $\ffff_{\pp}^{(0)}$,
where $P_{\Omega_0}$ is the orthogonal projection onto $\{\alpha\Omega_0 | \alpha \in \CC \}$.
Then
   $P^{\ex}$ is the orthogonal projection onto $\V_{P,{\rm phys}}^{\ex}$.
We have to say something about relationships between the domain of $H_P $ and $\V_{P, {\rm phys}}^\oi$.
\begin{lemma} \label{invP}
\begin{enumerate}
\item[(1)] $P^{\ex}$ leaves $D(H_P )$ invariant, i.e., $P^\ex D(H_P ) \subset D(H_P )$.
\item[(2)] $H_P $ leaves $\V_{P,{\rm phys}}^\ex$ invariant,
i.e., $H_P (D(H_P ) \cap \V_{P,{\rm phys}}^\ex) \subset \V_{P,{\rm phys}}^\ex$.
\end{enumerate}
\end{lemma}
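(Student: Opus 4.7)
I will tackle (2) first using the commutation relations between $H_P$ and the positive-frequency operator $c_{P,\ex}$, then (1) by factoring the projection and checking each factor separately.

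For (2), from the explicit expression $c_{P,\ex}(h) = \frac{i}{\sqrt 2}\bigl(a_{P,\ex}(\sqrt\omega h,3) - a_{P,\ex}(\sqrt\omega h,0)\bigr)$ (see \eqref{suzuki8} in the proof of Theorem \ref{suzuki9}) and the commutation relations \eqref{l14}, one obtains on the finite-particle core $\ffff_{P,{\rm fin}}^\ex$ the identity
\[
[H_P,\,c_{P,\ex}(h)] \;=\; -\,c_{P,\ex}(\omega h).
\]
Reading this weakly for $\Psi \in D(H_P) \cap \V_{P,{\rm phys}}^\ex$ and $h \in \mathscr{S}(\BR)$ yields $c_{P,\ex}(h)\, H_P \Psi = H_P\, c_{P,\ex}(h)\Psi + c_{P,\ex}(\omega h)\Psi$. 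The first term vanishes by definition of $\V_{P,{\rm phys}}^\ex$; the second vanishes once the kernel condition $c_{P,\ex}(\cdot)\Psi=0$ is extended by density from $\mathscr{S}(\BR)$ to the larger class $\{h : \sqrt\omega h \in \LR\}$, which does contain $\omega h$ for every Schwartz $h$. This extension is justified by the standard bound $\|a_{P,\ex}(g,\mu)\Psi\| \leq C\|(\hf+1)^{1/2}\Psi\|$. Thus $H_P \Psi \in \V_{P,{\rm phys}}^\ex$.

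For (1), write $P^{\rm in} = \W_P P_\pp \W_P^{-1}$ and verify that each factor preserves $D(H_P) = D(\hf)$ (cf.\ \eqref{hpdomain}). The projection $P_\pp = 1 \otimes 1 \otimes 1 \otimes P_{\Omega_0}$ commutes with $\hf = \hf^\pp + \hf^0$ since $P_{\Omega_0}$ commutes with $\hf^0$. The unitary $U_P$ in $\W_P = U_P\W$ acts trivially on $\ffff_0$ and, by Lemma \ref{unitary}(1), maps $D(\hf^\pp)$ onto itself, hence preserves $D(\hf)$. The factor $\W$ of \eqref{sa3} is a Weyl-type exponential times $\Gamma([\gamma])$: since $[\gamma]$ commutes with the scalar multiplication by $\omega$ on $\oplus^4 \LR$, $\Gamma([\gamma])$ commutes with $\hf$; and the Weyl factor preserves $D(\hf)$ because its test function $\vp/\omega^{3/2}$ satisfies $\omega\cdot\vp/\omega^{3/2} = \vp/\sqrt\omega \in \LR$ by Assumption \ref{ass1}. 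The same reasoning applied to $\W_P^{-1}$ gives $\W_P^{-1} D(\hf) \subseteq D(\hf)$, completing the proof for $P^{\rm in}$. For $P^{\rm out} = S_P^{-1} P^{\rm in} S_P$, it remains to observe that $S_P$ commutes with $H_P$ on the core $\ffff_{P,{\rm fin}}^{\rm out}$ (from \eqref{l14}--\eqref{l15} valid for both ``in'' and ``out'' and the relation $H_P \Psi_P = (E_0 + E_\pp(P))\Psi_P$), hence $S_P$ commutes with the self-adjoint operator $\eta H_P$, which forces $S_P D(H_P) = D(H_P)$.

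The main technical obstacle is not the algebra but the domain bookkeeping: the weak commutator identity in (2) is proved on the finite-particle core $\ffff_{P,{\rm fin}}^\ex$, and one must extend it to the full domain $D(H_P) \cap \V_{P,{\rm phys}}^\ex$ while simultaneously enlarging the class of admissible test functions from $\mathscr{S}(\BR)$ to accommodate $\omega h$, all without losing the kernel condition. Both points are handled by routine closure and $(\hf+1)^{1/2}$-relative-bound arguments.
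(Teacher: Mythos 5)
Your proof for part (1) is essentially equivalent to the paper's, differing only in presentation: you verify that each factor of $P^{\rm in} = \W_P P_\pp \W_P^{-1}$ preserves $D(\hf)$, while the paper establishes $\W_P^{-1} H_P \W_P = \tilde H_P$ as an operator equation on $D(\hf)$ and deduces domain invariance from that. Both arguments invoke the same intertwining property of $S_P$ for the ``out'' case, with the same (unspoken) reliance on $\ffff^{\rm out}_{P,{\rm fin}}$ being a core for $\eta H_P$.

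For part (2), however, you take a genuinely different route, and there is a real gap. You read the commutator $[H_P, c_{P,\oi}(h)] = -c_{P,\oi}(\omega h)$ on $\Psi \in D(H_P)\cap\V_{P,{\rm phys}}^\oi$ to conclude $c_{P,\oi}(h) H_P\Psi = 0$. But this requires $H_P\Psi \in D(c_{P,\oi}(h))$, which is not established and is not in general true: $c_{P,\oi}(h)$ is naturally defined on $D(\hf^{1/2})$, while $H_P$ maps $D(\hf)$ into $\ffff$ without any gain of $\hf$-regularity, so $H_P\Psi$ need not lie in $D(\hf^{1/2})$. The ``$(\hf+1)^{1/2}$-relative-bound argument'' you cite controls $\|c_{P,\oi}(h)\Phi\|$ for $\Phi \in D(\hf^{1/2})$; it cannot place $H_P\Psi$ into that domain. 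Your other closure step --- extending the test-function class from $\mathscr{S}(\BR)$ to $\{h : \sqrt\omega h\in\LR\}$ so that $c_{P,\oi}(\omega h)\Psi = 0$ follows for $\Psi\in D(\hf)$ --- is fine, but it is the regularity of $H_P\Psi$, not of $\Psi$, that is the obstruction. The paper avoids this entirely: it proves (1) first, which gives the operator identity $\W_P^{-1}H_P\W_P = \tilde H_P$ with $\tilde H_P = \hf - a^\ast(\vp/\sqrt\omega,3) - a(\tilde\vp/\sqrt\omega,0) + E_\pp(P) + E_0$ on $D(\hf)$; then for $\Psi\in D(H_P)\cap\V_{P,{\rm phys}}^{\rm in}$ it computes $H_P\Psi = H_P P^{\rm in}\Psi = \W_P P_\pp(\hf - a^\ast(\vp/\sqrt\omega,3) + E)\W_P^{-1}\Psi$, which lands manifestly in $\W_P\ffff_\pp^{(0)} = \V_{P,{\rm phys}}^{\rm in}$ by Theorem \ref{suzuki9}. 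This conjugation argument sidesteps the domain of the positive-frequency operator and uses the \emph{explicit} characterization of $\V_{P,{\rm phys}}^\oi$ rather than its defining kernel condition. If you want to salvage your approach, you would either have to work with the adjoint $c_{P,\oi}(h)^*$ against test vectors and identify $\ker c_{P,\oi}(h)$ with $({\rm Ran}\, c_{P,\oi}(h)^*)^\perp$, or upgrade the hypothesis to $\Psi\in D(\hf^{3/2})$; neither is the routine bookkeeping you suggest.
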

\proof
Let us define the operator
$$
\tilde H_P := H_{\rm f}-a^\ast (\hat{\varphi}/\sqrt{\omega},3)
		 -a(\tilde{\hat{\varphi}}/\sqrt{\omega},0)+
E_{\pp}(P)+E_0$$
with domain $D(\tilde H_P)=D(\hf)$.
Since
$-a^\ast (\hat{\varphi}/\sqrt{\omega},3)
		 -a(\tilde{\hat{\varphi}}/\sqrt{\omega},0)
$ is infinitesimally small with respect to $\hf$,
we have $\|\hf\Psi\|\leq C(\|\tilde H_P\Psi\|+\|\Psi\|)$ for some constant $C$. Then
$\tilde H_P$ is closed. Furthermore by
 $\|\tilde H_P\Psi\|\leq c(\|\hf\Psi\|+\|\Psi\|)$ for some constant $c$,
  an arbitrary core of $\hf$ is also a core of $\tilde H_P$.
Thus both
$H_P $ and $\tilde H_P$
are closed and have the same domain $D(H_P )=D(H_{\rm f})=D(\tilde H_P)$;
moreover
have the common core
	\[ \fffff(\omega):={\rm L.H.}
\lkk
\left.
\PI \add(f_i,\mu_i)
 \Omega,\Omega \right|
 f_i\in D(\omega),\mu_i=0,1,2,3, i=1,...,n,n\geq 1\rkk. \]
It is immediate that
$\W_P\f H_P \W_P = \tilde H_P$  on
the common core $\fffff(\omega)$.
Then
$\W_P D(H_P ) \subset D(H_P )$ and
we have the operator equation
$\W_P\f H_P \W_P = \tilde H_P$.
Since, by
$P_{\pp}H_{\rm f} \subset H_{\rm f}P_{\pp}$,
$P_{\pp}$ leaves $D(H_P )$ invariant,
we have
	\begin{align*}
		& P^{\rm in}D(H_P ) \subset \W_P  P_{\pp}D(H_P ) \subset \W_P  D(H_P ) \subset D(H_P ), \\
		& P^{\rm out}D(H_P ) \subset S_P^{-1}\W_P  P_{\pp}D(H_P ) \subset S_P^{-1}\W_P  D(H_P ) \subset D(H_P ),
	\end{align*}
where we have used
the intertwining property
$S_PH_P =H_P S_P^{-1}$.
Thus the first half of the lemma is proven.
For $\Psi \in D(H_P ) \cap \V_{P,{\rm phys}}^{\rm in}$, we have
	\begin{align}
		H_P \Psi & = H_P P^{\rm in}\Psi = \W_P \tilde H_PP_{\pp}\W_P ^{-1}\Psi \notag  \\
	\label{notredin}
		& = \W_P  P_{\pp}\lk H_{\rm f}-a^\ast (\hat{\varphi}/\sqrt{\omega},3)
				+E_{\pp}(P)+E_0 \rk \W_P ^{-1}\Psi \in \V_{P,{\rm phys}}^{\rm in}.
	\end{align}
Then
for $\Psi \in D(H_P ) \cap \V_{P,{\rm phys}}^{\rm out}$,
	\begin{align}
		H_P \Psi & = S_P^{-1}H_P P^{\rm out}\Psi = \W_P \tilde H_PP_{\pp}\W_P ^{-1}S_P\Psi \notag  \\
	\label{notredout}
		& = S_P^{-1}\W_P  P_{\pp}\lk H_{\rm f}-a^\ast (\hat{\varphi}/\sqrt{\omega},3)
				+E_{\pp}(P)+E_0 \rk \W_P ^{-1}S_P\Psi \in \V_{P,{\rm phys}}^{\rm out}.
	\end{align}
Hence the proof is complete.
\qed


Let $K_P^\ex$ be the restriction of $H_P$ to $D(H_P)\cap \V_{P, {\rm phys}}^\ex$:
\eq{rest}
K_P^\ex:=H_P \lceil_{D(H_P )\cap \V_{P,{\rm phys}}^{\ex}}.
\en
By Lemma \ref{invP}
$K_P^\ex$ is a densely defined closed operator on $\V_{P,{\rm phys}}^{\ex}$. Note that $\V_{P,{\rm phys}}^{\ex}$ is closed.
In order to study $K_P^\ex$ we introduce
the operator
	\[ \hat H_P :=H_{\rm f}-a^\ast (\hat{\varphi}/\sqrt{\omega},3) + E_{\pp}(P) + E_0 \]
	with domain $D(\hat H_P)=D(\hf)$.
In a similar way as in the proof that $\tilde{H}_p$ is closed,
one can determine that $\hat H_P$ is closed.
By \kak{notredin} and \kak{notredout},
we have, for all $\Psi \in
D(K_P^\ex )$,
	\begin{align}
	\label{redHin}
		& K_P^{\rm in}\Psi = \W_P  \hat H_P  \W_P ^{-1}\Psi, \\
	\label{redHout}
		& K_P^{\rm out}\Psi = S_P^{-1}\W_P  \hat H_P  \W_P ^{-1}S_P\Psi.
	\end{align}
\begin{lemma}\label{redinv}
\begin{enumerate}
\item[(1)] $K_P^{\rm in}$ is reduced by $\V_{P,{\rm phys}}^{\rm in}$,
i.e., $P^{\rm in} K_P ^{\rm in} \subset K_P^{\rm in}P^{\rm in}$.
\item[(2)] $K_P^\ex $ leaves $\V_{P,{\rm null}}  ^{\ex}$ invariant,
i.e., $K_P^\ex (D(K_P^\ex ) \cap \V_{P,{\rm null}}  ^{\ex})
\subset \V_{P,{\rm null}}  ^{\ex}$.
\end{enumerate}
\end{lemma}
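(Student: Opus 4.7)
\textit{Proof plan.}
Both parts will follow from Lemma \ref{invP} and the reduced expressions \kak{redHin} and \kak{redHout}; the only genuinely computational step lies in (2), and it amounts to checking the stability of a single tensor subspace under $\hat H_P$.

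For (1), the key is that $D(K_P^{\rm in})\subset \V_{P,{\rm phys}}^{\rm in}$, so $P^{\rm in}$ acts as the identity on this domain. Concretely, for $\Psi\in D(K_P^{\rm in})$ one has $P^{\rm in}\Psi=\Psi\in D(K_P^{\rm in})$, whence $K_P^{\rm in}P^{\rm in}\Psi=K_P^{\rm in}\Psi$; on the other hand Lemma \ref{invP}(2) gives $K_P^{\rm in}\Psi=H_P\Psi\in \V_{P,{\rm phys}}^{\rm in}$, so $P^{\rm in}K_P^{\rm in}\Psi=K_P^{\rm in}\Psi$ as well. Comparing these two identities yields the operator inclusion $P^{\rm in}K_P^{\rm in}\subset K_P^{\rm in}P^{\rm in}$.

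For the ``in'' half of (2) I transport the statement through $\W_P$. For $\Psi\in D(K_P^{\rm in})\cap \V_{P,{\rm null}}^{\rm in}$ the vector $\Phi:=\W_P^{-1}\Psi$ belongs to $(\ffff^{(0)\perp}_\T\cap \ffff^{(0)}_{\pp})\cap D(\hf)$, and by \kak{redHin} it is enough to show that $\hat H_P$ stabilises $\ffff^{(0)\perp}_\T\cap \ffff^{(0)}_{\pp}$. Since $\Phi$ carries no type-$0$ photon, $a(\tilde{\hat\varphi}/\sqrt\omega,0)\Phi=0$, so $\hat H_P\Phi$ coincides with $\tilde H_P\Phi$ and I can inspect $\hat H_P=\hf-a^\ast(\hat\varphi/\sqrt\omega,3)+E_{\pp}(P)+E_0$ term by term. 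The constant $E_{\pp}(P)+E_0$ is harmless; $\hf$ is diagonal in the four-fold tensor decomposition $\ffff=\ffff_1\otimes\ffff_2\otimes\ffff_3\otimes\ffff_0$ and hence preserves both the ``no type-$0$ photon'' constraint (it annihilates $\Omega_0$) and the orthogonality to $\Omega_3$ in the third factor (its restriction to $\ffff_3$ is self-adjoint and kills $\Omega_3$); finally, $a^\ast(\hat\varphi/\sqrt\omega,3)$ creates a type-$3$ photon, so its image stays inside $\ffff^{(0)}_{\pp}$ and lies in the orthogonal complement of $\Omega_3$ (every vector in the range contains at least one type-$3$ particle). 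Therefore $K_P^{\rm in}\Psi=\W_P\hat H_P\Phi\in \V_{P,{\rm null}}^{\rm in}$.

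The ``out'' case is then a formal corollary. Using $\V_{P,{\rm null}}^{\rm out}=S_P^{-1}\V_{P,{\rm null}}^{\rm in}$ together with the intertwining $S_PH_P=H_PS_P$ (as invoked in the proof of Lemma \ref{invP}), for $\Psi\in D(K_P^{\rm out})\cap \V_{P,{\rm null}}^{\rm out}$ one has $S_P\Psi\in D(H_P)\cap \V_{P,{\rm null}}^{\rm in}$; applying the already established ``in'' invariance to $S_P\Psi$ and then $S_P^{-1}$, formula \kak{redHout} yields $K_P^{\rm out}\Psi\in \V_{P,{\rm null}}^{\rm out}$. The hardest (and essentially only non-trivial) point is the term-by-term verification that $\hat H_P$ stabilises $\ffff^{(0)\perp}_\T\cap \ffff^{(0)}_{\pp}$; everything else reduces to bookkeeping with the intertwiners $\W_P$ and $S_P$.
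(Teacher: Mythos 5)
Your proof is correct and follows essentially the same route as the paper's: for part (2) you conjugate by $\W_P$ via \eqref{redHin}--\eqref{redHout} and verify that $\hat H_P$ stabilises $\ffff^{(0)\perp}_\T\cap\ffff^{(0)}_{\pp}$ (the paper asserts this stability without spelling it out, so your term-by-term check is a welcome elaboration), and the ``out'' case is again reduced to the ``in'' case by the intertwining with $S_P$. For part (1), you argue directly from Lemma~\ref{invP} that $P^{\rm in}$ acts as the identity on both the domain and the range of $K_P^{\rm in}$, which is a bit more elementary than the paper's invocation of $P_{\pp}\hat H_P\subset\hat H_P P_{\pp}$ together with \eqref{redHin}--\eqref{redHout}, but leads to the same conclusion.
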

\proof
(1) follows from \eqref{redHin}, \eqref{redHout} and the fact
that $\hat H_P $ is reduced by $\ffff_{\pp}^{(0)}$, i.e., $P_{\pp}\hat H_P  \subset \hat H_P  P_{\pp}$.
Let $\Psi_0 \in D(K_P^{\rm in})
\cap \V_{P,{\rm null}}  ^{\rm in}$
and set
$\Phi_0=\W_P ^{-1}\Psi_0 \in D(\hat H_P )$.
Then $\Phi_0 \in  \ffff^{(0)\perp}_\T \cap \ffff_\pp^{(0)}$.
Since $\hat H_P \Phi_0 \in  \ffff^{(0)\perp}_\T \cap \ffff_\pp^{(0)}$,
we have
$K_P^{\rm in}\Psi_0
			= \W_P  \hat H_P \Phi_0 \in \V_{P,{\rm null}}  ^{\rm in}$.
Thus $K_P^{\rm in}$ leaves $\V_{P,{\rm null}}  ^{\rm in}$ invariant.
Similarly, one can prove that
$K_P^{\rm out}$ leaves $\V_{P,{\rm null}}  ^{\rm out}$ invariant.
\qed
We denote by $\rho(X)$
the resolvent set of a linear operator $X$.
By
    \eqref{redHin} and \eqref{redHout}
it follows that $\rho(\hat H_P ) \subset \rho(K_P^\ex )$
and
for  $z \in \rho(\hat H_P )$,
\begin{eqnarray}
\label{resol}
		& (K_P^{\rm in}-z)^{-1}
			= \W_P  (\hat H_P -z)^{-1}
\W_P\f, \\
\label{resol2}
		& (K_P^{\rm out}-z)^{-1}
			= S_P^{-1}\W_P  (\hat H_P -z)^{-1}\W_P\f S_P.
\end{eqnarray}
Let us now define the physical Hamiltonian $H_{P,{\rm phys}}^{\ex}$
on the physical Hilbert space $\hhh_{P,{\rm phys}}^{\ex}$.
In order to define the domain $D(H^{\ex}_{\rm phys})$ consistently,
we first consider the resolvent of $K_P^\ex $.

Let
$$
 \mathcal{R} = \lkk
 z \in \rho\left(
 H_{\rm f}+E_\pp(P)+E_0\right)
		\Big| 2\epsilon + \frac{{\|\hat{\varphi}/\omega\|^2}/({2\epsilon}) + {\|\hat{\varphi}/\sqrt{\omega}\|}/{\sqrt{2}}      }{|E_\pp(P)+E_0-z|} <1~ \mbox{for some $\epsilon >0$} \rkk.
$$
Since  $\|a^\ast (\hat{\varphi}/\sqrt{\omega},3)(H_{\rm f} +E_\pp(P)+E_0-z)^{-1}\| <1$ for $z\in\mathcal {R}$,
   for all $z \in \mathcal{R}$,
   the Neumann expansion is valid and
 	\begin{equation}
	\label{resolexp}
		(\hat H_P  -z)^{-1} = \sum_{n=0}^{\infty}(H_{\rm f}+
E_\pp(P)+E_0-z)^{-1}
		\lk
a^\ast (\hat{\varphi}/\sqrt{\omega},3)(H_{\rm f} +E_\pp(P)+E_0-z)^{-1}\rk
^n.
	\end{equation}

Let us fix $z \in \mathcal{R}$.
Then, by \kak{resol} and \kak{resol2}, $z \in \rho(K_P^\ex )$ and
the resolvent
\eq{res3}
 R_P^\ex (z):= (K_P^\ex -z)^{-1}
 \en
 is  bijective on $\V_{P,{\rm phys}}^{\ex}$.
\begin{lemma}\label{voinv}
$R_P^\ex (z)$ is reduced by $\V_{P,{\rm phys}}^{\ex}$ and leaves $\V_{P,{\rm null}}  ^{\ex}$ invariant,
i.e., $P^{\ex}R_P^\ex (z) = R_P^\ex (z)P^\ex$ and $R_P^\ex (z) \V_{P,{\rm null}}  ^{\ex} \subset \V_{P,{\rm null}}  ^{\ex}$.
\end{lemma}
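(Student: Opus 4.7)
My strategy is to pass, via the intertwining identities \eqref{resol} and \eqref{resol2}, from $R_P^\ex(z)$ to $(\hat H_P-z)^{-1}$, and then exploit the Neumann expansion \eqref{resolexp} term by term. The three ingredients appearing in \eqref{resolexp}---$H_\f$, the resolvent of $H_\f+E_\pp(P)+E_0$, and $a^\ast(\hat\vp/\sqrt\omega,3)$---each have a transparent action in the tensor decomposition $\ffff \cong \ffff_\T \otimes \ffff_3 \otimes \ffff_0$ (with $\ffff_\T=\ffff_1\otimes\ffff_2$), and the relevant subspaces $\ffff_\pp^{(0)} = \ffff_\pp\otimes\CC\Omega_0$ and $\mathscr{M}:=\ffff_\T^{(0)\perp}\cap \ffff_\pp^{(0)} \cong \ffff_\T\otimes(\ffff_3\ominus\CC\Omega_3)\otimes\CC\Omega_0$ sit neatly inside this decomposition.

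For the reduction $P^\ex R_P^\ex(z)=R_P^\ex(z)P^\ex$, recall $P^{\rm in}=\W_P P_\pp\W_P^{-1}$ and $P^{\rm out}=S_P^{-1}P^{\rm in}S_P$. Combined with \eqref{resol} and \eqref{resol2}, the claim reduces to $[P_\pp,(\hat H_P-z)^{-1}]=0$. In each term of \eqref{resolexp} the operator $H_\f$ decomposes as $H_\f^\pp\otimes 1+1\otimes H_\f^0$, so its resolvent (with the scalar shift) commutes with $P_\pp$; moreover $a^\ast(\hat\vp/\sqrt\omega,3)$ acts as $a^\ast(\hat\vp/\sqrt\omega)\otimes 1$ under the decomposition $\ffff_\pp\otimes\ffff_0$, and so also commutes with $P_\pp$. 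Norm convergence of the Neumann series on $\mathcal{R}$ then yields the identity.

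For the invariance $R_P^\ex(z)\V_{P,{\rm null}}^\ex\subset \V_{P,{\rm null}}^\ex$, by the same identities it suffices to show $(\hat H_P-z)^{-1}\mathscr{M}\subset\mathscr{M}$, and then apply $\W_P$ (the "out" case needing also the defining property $S_P\V_{P,{\rm null}}^{\rm out}=\V_{P,{\rm null}}^{\rm in}$). Applying each factor of a Neumann term to a vector in $\mathscr{M}$: the free resolvent preserves the tensor factors individually and, because $H_\f^3$ has $\CC\Omega_3$ as an eigenspace, preserves $\ffff_3\ominus\CC\Omega_3$; while $a^\ast(\hat\vp/\sqrt\omega,3)$ sends any vector of $\ffff_3$ to a strictly positive-particle-number state and hence lands in $\ffff_3\ominus\CC\Omega_3$. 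Consequently every partial sum of \eqref{resolexp} applied to a vector in $\mathscr{M}$ lies in $\mathscr{M}$, and the closedness of $\mathscr{M}$ transfers this to the norm limit.

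The only real obstacle is careful bookkeeping: ensuring simultaneously that the $0$-component is never moved (so we remain in $\ffff_\pp^{(0)}$) and that the $3$-component never returns to its vacuum (so we remain in the complement of $\ffff_\T^{(0)}$). Both properties are immediate from the tensor-factor descriptions of $H_\f$ and $a^\ast(\hat\vp/\sqrt\omega,3)$ used above, so no nontrivial estimate is needed beyond the convergence of \eqref{resolexp}.
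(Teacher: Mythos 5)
Your proposal is correct and follows essentially the same route as the paper's: conjugate $R_P^\ex(z)$ to $(\hat H_P - z)^{-1}$ via \eqref{resol}--\eqref{resol2}, then use the Neumann expansion \eqref{resolexp} together with the tensor-factor structure of $H_{\rm f}$ and $a^*(\hat\varphi/\sqrt\omega,3)$ to see that this resolvent commutes with $P_\pp$ and preserves $\ffff_\T^{(0)\perp}\cap\ffff_\pp^{(0)}$. You have merely unpacked the term-by-term verification that the paper leaves implicit (for the reduction part the paper instead cites Lemma~\ref{redinv}~(1), but that lemma rests on the same fact $P_\pp\hat H_P\subset\hat H_P P_\pp$, so the routes are equivalent).
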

\proof
The first half of this lemma has already
been proven via Lemma \ref{redinv} (1).
We prove the second half.
Let $\Psi_0 \in \V_{P,{\rm null}}  ^{\rm in}$
and set
$\Phi_0 = \W_P ^{-1}\Psi_0$.
Then
     $\Phi_0 \in  \ffff^{(0)\perp}_\T \cap \ffff_\pp^{(0)}$.
By \kak{resol}, \kak{resol2} and \eqref{resolexp},
we observe that
\eq{wakaran}
R_P^{\rm in}(z) \Psi_0 =
\W_P
(\hat H_P -z )^{-1}\Phi_0 \in
\W_P(
  \ffff^{(0)\perp}_\T \cap \ffff_\pp^{(0)})=\V_{P, {\rm null}}^{\rm in}.
\en
Thus  $R_P^{\rm in}(z)$ leaves $\V_{P, {\rm null}}^{\rm in}$ invariant.
Similarly one can prove that $R_P^{\rm out}(z)$ also leaves $\V_{P,{\rm null}}  ^{\rm out}$ invariant.
\qed
Since $R_P^{\rm in}(z)$ leaves the null space invariant,
the following operator, $[R_P^\ex (z)]_{\ex}$, on $\hhh_{P,{\rm phys}}^{\ex}$ is well-defined:
	$$ [R_P^\ex (z)]_{\ex}[\Psi]_{\ex} := [R_P^\ex (z)\Psi]_{\ex}.$$
It is clear
that $[R_P^\ex (z)]_{\ex}$ is bounded
and $\|[R_P^\ex (z)]_{\ex}\|_\ex  \leq \|R_P^\ex (z)\|$ holds.
\begin{lemma}\label{inj}
$[R_P^\ex (z)]_{\ex}$ is injective and $[R_P^\ex (z)]_{\ex}^{-1}$ is closed.
\end{lemma}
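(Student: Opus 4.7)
The plan is to prove the two assertions in turn. Injectivity of $[R_P^\ex(z)]_\ex$ will be a direct consequence of the fact that both $R_P^\ex(z)$ and its operator inverse $K_P^\ex - z$ leave $\V_{P,{\rm null}}^\ex$ invariant (the first half is Lemma \ref{voinv}, the second is Lemma \ref{redinv}(2)). Once injectivity is in hand, the closedness of $[R_P^\ex(z)]_\ex^{-1}$ will follow from the standard observation that the (set-theoretic) inverse of any bounded injective operator is a closed operator.

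For injectivity, suppose $[R_P^\ex(z)]_\ex \E{\Psi} = 0$, i.e., $\Phi := R_P^\ex(z)\Psi \in \V_{P,{\rm null}}^\ex$. Since $z \in \rho(K_P^\ex)$ and $\Phi$ lies in the range of the resolvent, $\Phi \in D(K_P^\ex)$ and $(K_P^\ex - z)\Phi = \Psi$. By Lemma \ref{redinv}(2), $K_P^\ex \Phi \in \V_{P,{\rm null}}^\ex$; since $z\Phi \in \V_{P,{\rm null}}^\ex$ and $\V_{P,{\rm null}}^\ex$ is a linear subspace, we conclude $\Psi = K_P^\ex \Phi - z\Phi \in \V_{P,{\rm null}}^\ex$, so $\E{\Psi} = 0$. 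This gives injectivity.

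For closedness of $[R_P^\ex(z)]_\ex^{-1}$, recall from the paragraph preceding the lemma that $[R_P^\ex(z)]_\ex$ is bounded on $\hhh_{P,{\rm phys}}^\ex$ with norm at most $\|R_P^\ex(z)\|$. Suppose $\E{\Psi_n} \to \E{\Psi}$ in $\hhh_{P,{\rm phys}}^\ex$ and $[R_P^\ex(z)]_\ex^{-1}\E{\Psi_n} \to \E{\Xi}$. Setting $\E{\Phi_n} := [R_P^\ex(z)]_\ex^{-1}\E{\Psi_n}$, we have $\E{\Psi_n} = [R_P^\ex(z)]_\ex \E{\Phi_n}$, and boundedness of $[R_P^\ex(z)]_\ex$ yields $[R_P^\ex(z)]_\ex \E{\Phi_n} \to [R_P^\ex(z)]_\ex \E{\Xi}$. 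Hence $\E{\Psi} = [R_P^\ex(z)]_\ex \E{\Xi}$, so $\E{\Psi}$ lies in the domain of $[R_P^\ex(z)]_\ex^{-1}$ and $[R_P^\ex(z)]_\ex^{-1}\E{\Psi} = \E{\Xi}$. This is exactly the closedness criterion.

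The only substantive content is the injectivity step, where the crux is recognizing that the invariance of $\V_{P,{\rm null}}^\ex$ under the unbounded operator $K_P^\ex$ is needed (not just under the resolvent); the closedness assertion is then essentially automatic.
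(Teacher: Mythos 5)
Your proof is correct and follows essentially the same route as the paper: injectivity is obtained by noting that $R_P^\ex(z)\Psi \in \V_{P,{\rm null}}^\ex$ together with Lemma \ref{redinv}(2) forces $\Psi = (K_P^\ex - z)R_P^\ex(z)\Psi \in \V_{P,{\rm null}}^\ex$, and closedness of $[R_P^\ex(z)]_\ex^{-1}$ is the standard consequence of $[R_P^\ex(z)]_\ex$ being a bounded injective operator. The only difference is that you spell out the $\varepsilon$-free closedness argument, which the paper states as immediate.
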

\proof
By the boundedness of $[R_P^\ex (z)]_{\ex}$,
$[R_P^\ex (z)]_{\ex}^{-1}$ is closed if $[R_P^\ex (z)]_{\ex}$ is injective.
Let $[R_P^\ex (z)]_{\ex}[\Psi]_{\ex}=0$.
Then  
$R_P^\ex (z)\Psi \in \V_{P,{\rm null}}  $.
It follows from Lemma \ref{redinv} (2) that
$\Psi = (K_P^\ex -z )R_P^\ex (z)\Psi \in \V_{P,{\rm null}}  ^{\ex}$.
Thus $[\Psi]_{\ex}=0$ and $[R_P^\ex (z)]_{\ex}$ is injective.
\qed
\begin{definition}
We define the physical Hamiltonian
$H_{P,{\rm phys}}^{\ex}$ on $\hhh_{P,{\rm phys}}^{\ex}$ by
\eq{ph}
 H_{P,{\rm phys}}^{\ex} := z  + [R_P^\ex (z)]_\ex^{-1}. \en
\end{definition}
By Lemma \ref{inj}, $H_{P,{\rm phys}}^{\ex}$ is closed. We further prove
that  $H_{P,{\rm phys}}^{\ex}$ is independent of  $z  \in \mathcal{R}$
and that the domain of $H_{P,{\rm phys}}^{\ex}$ is dense in
$\hhh_{P,{\rm phys}}^{\ex}$.
We define $P^{\ex}$ by
	\begin{equation}
	\label{P1ex}
		P_1^{\rm in}:= \W_P  P_{\T} \W_P ^{-1}, \quad
		P_1^{\rm out}:=S_P^{-1} \W_P  P_{\T} \W_P ^{-1}S_P.
	\end{equation}
Here $P_{\T}=1\otimes 1\otimes P_{\Omega_3} \otimes P_{\Omega_0}$.
Then $P_1^{\ex}$
is the orthogonal projection onto $\V_P^{\ex}$.
We define a linear operator $\J $ on
$\hhh_{P,{\rm phys}}^{\ex}$ by
\begin{eqnarray}
\J [\Psi]_{\ex}
&=& [K_P^\ex P_1^{\ex}\Psi]_{\ex},\\
D(\J )
&=& \left\{ [\Psi]_{\ex} \in \hhh_{P,{\rm phys}}^{\ex}
				| P_1^{\ex}\Psi \in D(K_P^\ex ) \right \}.
\end{eqnarray}
Note that  the domain $D(\J )$
is independent of  the representative.
Indeed if $[\Psi]_{\ex}=[\Psi']_{\ex}$, then $P_1^{\ex}\Psi=P_1^{\ex}\Psi'$
because $\Psi-\Psi' \in \V_{P,{\rm null}}  ^{\ex}$ and  $P_1^{\ex}\V_{P,{\rm null}}  ^{\ex}=\{0\}$.
Thus the operator $\J $ is well-defined.
Moreover
since $P_1^{\ex}$ leaves $D(K_P^\ex )$ invariant,
$D(\J )$ is dense in $\hhh_{P,{\rm phys}}^{\ex}$.
\bl{equivH}
It follows that
\eq{J}
H_{P,{\rm phys}}^{\ex} = \J .
\en
In particular, $H_{P,{\rm phys}}^{\ex}$ is a densely defined closed operator
and  independent of $z  \in \mathcal{R}$.
\el
\proof
Let $[\Psi]_{\ex} \in D(\J )$.
Then  $P_1^{\ex}\Psi \in D(K_P^\ex )$ and set
$\Phi_0 := (K_P^\ex -z )P_1^{\ex}\Psi$.
We observe that
	\[ [\Psi]_{\ex}
=[P_1^{\ex}\Psi]_{\ex}
		= [R_P^\ex (z)]_{\ex}[\Phi_0]_{\ex} \in D([R_P^\ex (z)]_{\ex}^{-1}) \]
and hence  $D(\J)\subset D(H_{P,{\rm phys}}^{\ex})$.
We show the inverse inclusion.
Let
$[\Psi]_{\ex} \in  D(H_{P,{\rm phys}}^{\ex})$.
Then there exists a vector $[\Phi]_\ex \in \hhh_{P,{\rm phys}}^\ex$ such that
$[\Psi]_{\ex}=[R_P^\ex (z)]_{\ex}[\Phi]_\ex=[R_P^\ex (z)\Phi]_\ex$.
Since
$P_1^{\ex}$ leaves $D(K_P^\ex )$ invariant,
we have $P_1^\ex\Psi=P_1^{\ex} R_P^\ex (z)\Phi \in D(K_P^\ex )$.
Then
$D(\J)\supset D(H_{P,{\rm phys}}^{\ex})$ follows.
Thus
$$D(H_{P,{\rm phys}}^{\ex}) = D(\J ).$$
For all $[\Psi]_\ex \in D(H_{P,{\rm phys}}^{\ex})$,
we see that
(1) there exists $[\Phi]_\ex \in \hhh_{P,{\rm phys}}^{\ex}$ such that
$[\Psi]_{\ex}=[R_P^\ex (z)]_{\ex}[\Phi]_\ex$ and
(2) $P_1^{\ex}\Psi \in D(K_P^\ex )$.
We have
	\begin{align*}
		& H_{P,{\rm phys}}^{\ex}[\Psi]_\ex -  \J [\Psi]_\ex
		= [z \Psi +\Phi- K_P^\ex P_1^{\ex}\Psi]_\ex.
	\end{align*}
We need only prove that $z \Psi +\Phi- K_P^\ex P_1^{\ex}\Psi \in \V_{P,{\rm null}}  ^\ex$.
Together, (1) and (2) imply
that
  $P_1^{\ex}\Psi-R_P^\ex (z)\Phi \in \V_{P,{\rm null}}  ^\ex \cap D(K_P^\ex )$,
which, together with Lemma \ref{redinv}, yields
$$z \Psi +\Phi- K_P^\ex P_1^{\ex}\Psi
= z (1-P_1^{\ex})\Psi-(K_P^\ex -z )(P_1^{\ex}\Psi-R_P^\ex (z)\Phi) \in \V_{P,{\rm null}}  ^\ex.$$
Thus  the lemma follows.
\qed

Now we are in a position to state the main theorem in this section:
\begin{theorem}
\label{suzu1}
$H_{\rm phys}^\ex$ is self-adjoint and has a unique ground state
with energy $E_{\pp}(P) +E_0$.
\end{theorem}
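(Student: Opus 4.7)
The plan is to exhibit a unitary equivalence between $H_{P,{\rm phys}}^\ex$ and the operator $\hf\!\lceil_{\ffff_\T^{(0)}}+E_\pp(P)+E_0$, where the restriction of $\hf$ agrees, under the natural identification $\ffff_\T^{(0)}\cong\ffff_\T=\ffff_1\otimes\ffff_2$, with the standard free photon Hamiltonian on $\ffff_\T$. Since that Hamiltonian is manifestly self-adjoint with unique ground state $\Omega_1\otimes\Omega_2$ at eigenvalue $0$, both conclusions of the theorem will follow at once.

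The key map for the ``in'' case will be
\[
\Xi^{\rm in}:\ffff_\T^{(0)}\longrightarrow\hhh_{P,{\rm phys}}^{\rm in},\qquad \Phi\mapsto [\W_P\Phi]_{\rm in},
\]
together with its ``out'' analogue $\Xi^{\rm out}(\Phi):=[S_P^{-1}\W_P\Phi]_{\rm out}$. First I verify that $\Xi^\ex$ is unitary, which amounts to checking $(\W_P\Phi|\W_P\Phi')=(\Phi,\Phi')_\ffff$ for all $\Phi,\Phi'\in\ffff_\T^{(0)}$. Since $U_P$ and the Weyl-type exponential in $\W$ are ordinary unitary operators commuting with $\eta$ (the former because it acts only on $\ffff_\pp$, the latter because it involves only mode $3$), the left-hand side reduces to $(\Gamma([\gamma])\Phi,\,\eta\Gamma([\gamma])\Phi')_\ffff$. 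But $\Gamma$ of a unitary preserves Fock vacua, and vectors in $\ffff_\T^{(0)}$ are already vacuum in the $\ffff_3$ and $\ffff_0$ slots that $[\gamma]$ mixes, so $\Gamma([\gamma])$ acts as the identity on $\ffff_\T^{(0)}$. The expression therefore collapses to $(\Phi,\eta\Phi')_\ffff=(\Phi,\Phi')_\ffff$, the final equality because $\ffff_\T^{(0)}\subset\ker N_{\rm f}^0$ where $\eta$ is trivial. For the ``out'' case $[S_P,\eta]=0$ together with the ordinary unitarity of $S_P$ yields the same conclusion.

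Next I transport $H_{P,{\rm phys}}^\ex$ through $\Xi^\ex$ using Lemma~\ref{equivH} and formulas (\ref{redHin})--(\ref{redHout}). For $\Phi\in D(\hf)\cap\ffff_\T^{(0)}$,
\[
K_P^{\rm in}\W_P\Phi = \W_P\hat H_P\Phi = \W_P\bigl[\,\hf\Phi-a^\ast(\hat\varphi/\sqrt\omega,3)\Phi+(E_\pp(P)+E_0)\Phi\,\bigr].
\]
The summand $\hf\Phi$ remains in $\ffff_\T^{(0)}$ since $\hf$ preserves the component decomposition, whereas $a^\ast(\hat\varphi/\sqrt\omega,3)\Phi\in\ffff_\T\otimes\ffff_3^{(1)}\otimes\{\Omega_0\}\subset\ffff_\T^{(0)\perp}\cap\ffff_\pp^{(0)}$, so its $\W_P$-image lies in $\V_{P,{\rm null}}^{\rm in}$ and is killed by the quotient (\ref{phi}). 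Hence
\[
H_{P,{\rm phys}}^{\rm in}\,\Xi^{\rm in}\Phi = \Xi^{\rm in}\bigl(\hf+E_\pp(P)+E_0\bigr)\Phi
\]
on $\Xi^{\rm in}\bigl(D(\hf)\cap\ffff_\T^{(0)}\bigr)$. The domains match as operator equations because $\W_P$ preserves $D(\hf)$—a consequence of $\W_P^{-1}H_P\W_P=\tilde H_P$ with $D(H_P)=D(\tilde H_P)=D(\hf)$—combined with the description of $D(H_{P,{\rm phys}}^{\rm in})$ supplied by Lemma~\ref{equivH}. Consequently $H_{P,{\rm phys}}^{\rm in}$ is unitarily equivalent, through $\Xi^{\rm in}$, to the self-adjoint operator $\hf\!\lceil_{\ffff_\T^{(0)}}+E_\pp(P)+E_0$, and both self-adjointness and uniqueness of the ground state at energy $E_\pp(P)+E_0$ follow. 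The ``out'' case is handled identically after replacing $\W_P$ by $S_P^{-1}\W_P$ and using $S_P D(\hf)=D(\hf)$ from the intertwining $[S_P,H_P]=0$.

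The main obstacle I anticipate is the first isometry step: the operator $\Gamma([\gamma])$ was inserted into $\W$ precisely to mix the $0$th and $3$rd photon modes, which is what makes $(\cdot|\cdot)$ indefinite on $\ffff$ in general, and it is only the restriction to $\ffff_\T^{(0)}$—vacuum in both of the mixed components—that collapses this mixing to the identity and converts the indefinite form back into a genuine positive-definite inner product. Once that bookkeeping is correctly done, the remainder is algebraic: the $a^\ast(\cdot,3)$ cross term in $\hat H_P$ creates exactly the single longitudinal photon that the Gupta--Bleuler quotient annihilates, leaving the transversal free Hamiltonian plus the constant $E_\pp(P)+E_0$.
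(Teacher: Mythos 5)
Your proof is correct and takes essentially the same route as the paper's: both realize $H_{P,{\rm phys}}^\ex$ as a unitary conjugate of $\hf^\T + E_\pp(P)+E_0$ on $\ffff_\T^{(0)}\cong\ffff_\T$, the unitary being the composite of $\W_P\lceil_{\ffff_\T^{(0)}}$ (resp. $S_P^{-1}\W_P\lceil_{\ffff_\T^{(0)}}$) with $\pi_\ex$, and both rely on the fact that the $a^\ast(\vp/\sqrt\omega,3)$ summand of $\hat H_P$ lands in $\V_{P,{\rm null}}^\ex$ and is therefore annihilated by the quotient. The paper packages this as $H_{P,{\rm phys}}^\ex=\pi_\ex P_1^\ex K_P^\ex P_1^\ex\pi_\ex^{-1}$ (Lemma \ref{equivH}) followed by the evaluation of $P_1^\ex K_P^\ex P_1^\ex$, while you exhibit the equivalence directly through $\Xi^\ex$; the difference is purely presentational.
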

\proof
For all $[\Psi]_{\rm ex} \in D(\J )$,
we have
$
\J [\Psi]_\ex
		=\pi_\ex P_1^{\ex} K_P^\ex P_1^{\ex}\pi_\ex\f [\Psi]_\ex.
$
This equality implies that
$$\J  \subset \pi_{\ex}P_1^{\ex}H_{P,{\rm phys}}^{\ex}P_1^{\ex}\pi_{\ex}^{-1}.$$
Conversely if $[\Psi]_{\rm ex} \in D(\pi_{\ex}P_1^{\ex}K_P^\ex P_1^{\ex}\pi_{\ex}^{-1})$,
then $P_1^{\ex}\Psi=P_1^{\ex}\pi_{\ex}^{-1}[\Psi]_\ex \in D(K_P^\ex )$
and henceforth
$D(\pi_{\ex}P_1^{\ex}H_{P,{\rm phys}}^{\ex}P_1^{\ex}\pi_{\ex}^{-1})\subset D(\J)$.
Thus
we have obtained the result that
\eq{first}
\J=\pi_{\ex}P_1^{\ex}K_P^\ex  P_1^{\ex}
\pi _{\ex}^{-1}.
\en
Combining Lemma  \ref{equivH} and \kak{first}, we establish that
$$H_{P,{\rm phys}}^{\ex}=\pi_\ex P_1^{\ex} K_P^\ex P_1^{\ex}\pi^{-1}_\ex.$$ Then $H_{P,{\rm phys}}^{\ex}$ is self-adjoint
if and only if
$P_1^{\ex}K_P^\ex P_1^{\ex}$ is self-adjoint.
By \eqref{redHin}, \eqref{redHout} and \eqref{P1ex}, we have
\eq{ww1}
		P_1^{\rm in}K_P^{\rm in} P_1^{\rm in}
			 = \W_P  P_\T \hat H_P P_\T \W_P ^{-1}
 = U_P P_\T(H_{\rm f}^{\rm T} +E_{\pp}(P) +E_0)P_\T U_P^{-1}
\en
and, by the intertwining property,
\begin{eqnarray}
		P_1^{\rm out}K_P^{\rm out} P_1^{\rm out}
			& =& S_P^{-1}\W_P  P_\T \hat H_P P_\T \W_P ^{-1}S_P \non\\
		& =& S_P^{-1}U_P P_\T(H_{\rm f}^\T +E_{\pp}(P) +E_0)P_\T U_P^{-1}S_P,
\end{eqnarray}
where $H_{\rm f}^{\T}=\sum_{j=1,2}\int\omega(k)a^\ast (k)a(k)dk$.
The above equations imply  that $ P_1^{\ex} K_P^\ex P_1^{\ex}$
is self-adjoint and hence we have  the desired properties.
\qed

{\bf Acknowledgements}

We thank A. Arai for useful discussions.
F. H. thanks JSPS for the award of a Grant-in-Aid for Science Research (B) Number 20340032. 
A. S. thanks JSPS for financial support.
We also thank Support Program for Improving Graduate School Education for financial support.

{\footnotesize

}


\begin{thebibliography}{99}
\bibitem[Ara83a]{a83} A. Arai, A  note on scattering theory in non-relativistic quantum electrodynamics,
    {\it J. Phys. A. Math. Gen.} {\bf 16} (1983) 49--70.
    \bibitem[Ara83b] {a83b} A. Arai, Rigorous theory of spectra and radiation for a model in a quantum electrodynamics, {\it J. Math. Phys.}  {\bf 24} (1983) 1896--1910.
\bibitem[Bab82]{Ba1}
		 M. Babiker,
		 The Lorentz gauge in non-relativistic quantum electrodynamics,
		 \textit{Proc. R. Soc. Lond.} {\bf A383} (1982), 485-502.
\bibitem[Ble50]{Bleuler}
		K. Bleuler,
		Eine neue methode zur Behandlung der longitudinalen und skalaren photonen,
		\textit{Helv. Phys. Acta.} {\bf 23} (1950), 567-586.
	\bibitem[Bog73]{Bogner}
		J. Bogner,
		\textit{Indefinite inner product space},
		Springer-Verlag, Berlin (1973).
	\bibitem[Gup50]{Gupta}
		S. N. Gupta,
		Theory of longitudinal photon in quantum electrodynamics,
		\textit{Proc. Phys. Soc.} {\bf A63} (1950), 681-691.
\bibitem[KO79]{KugoOjima}
		T. Kugo and I. Ojima,
		\textit{Local covariant operator formalism of non-abelian gauge theories
		and quark confinement problem},
		\textit{Prog. Theor. Phys. Suppl.} {\bf 66} (1979), 1-130.
\bibitem[Nak72]{Nakanishi}
		N. Nakanishi,
		Indefinite-metric quantum field theory,
		\textit{Prog. Theor. Phys. Suppl.} {\bf 51} (1972), 1-95.
\bibitem[HS01]{hisp1}F. Hiroshima and H. Spohn,
Enhanced binding through coupling to a quantum field, \textit{Ann. Henri Poincare} {\bf 2} (2001), 1159--1187.
\bibitem[IZ80]{IZ}
		C. Itzykson and j.-B. Zuber,
		\textit{Quantum field Theory},
		McGraw-Hill, New York (1980).

\bibitem[Rui78]{rui}N. M. Ruijsenaars, On Bogoliubov transformation II,
{\it Ann. Phys.} {\bf 116} (1978) 105-134.
\bibitem[Sup97]{sup97}H. Spohn,  Asymptotic completeness for Reyleigh scattering, {\it J. Math. Phys.} {\bf 38} (1997), 2281--2296.

\bibitem[Sun58]{Sunakawa}
		 S. Sunakawa,
		 Quantum electrodynamics with the indefinite metric,
		 \textit{Prog. Theor. Phys.} {\bf 19} (1958), 221-237.
\bibitem[Suz07]{S1} A. Suzuki,
	Physical subspace in a model of the quantized electromagnetic field  coupled to an external field
	with an indefinite metric,
{\it J. Math. Phys.} {\bf 49} (2008), 042301.
\bibitem[Suz08]{suz2} A. Suzuki,
	Heisenberg operators associated with non-self-adjoint operators, in preparation.

\bibitem[Tit36]{tit} E. C. Titchmarsh, {\it The theory of Fourier integrals}, Oxford at the Clarendon Press,  1937.

\end{thebibliography}
\end{document}